\newcommand{\blind}{1}
\newcommand{\compact}{0}
\newcommand{\compacteval}{0}
\newcommand{\etc}{\textit{etc.}}
\newcommand{\rZ}{Z}
\newcommand{\rY}{Y}
\newcommand{\rX}{\mX}
\newcommand{\rz}{z}
\newcommand{\ry}{y}
\newcommand{\rx}{\xvec}
\newcommand{\erx}{x}
\newcommand{\sA}{\mathfrak{A}}
\newcommand{\sAZ}{\mathfrak{B}}
\newcommand{\sAY}{\mathfrak{C}}
\newcommand{\esAY}{C}
\newcommand{\sam}{\Omega}
\newcommand{\samZ}{\RR}
\newcommand{\samY}{\Xi}
\newcommand{\samX}{\chi}
\newcommand{\msZ}{(\samZ, \sAZ)}
\newcommand{\msY}{(\samY, \sAY)}
\newcommand{\ps}{(\sam, \sA, \Prob)}
\newcommand{\psZ}{(\samZ, \sAZ, \Prob_\rZ)}
\newcommand{\psY}{(\samY, \sAY, \Prob_\rY)}
\newcommand{\pZ}{F_\rZ}
\newcommand{\pY}{F_\rY}
\newcommand{\oY}{O_\rY}
\newcommand{\hatpY}{\hat{F}_{\rY,N}}
\newcommand{\hatpYx}{\hat{F}_{\rY \mid \rX = \rx, N}}
\newcommand{\pN}{\Phi}
\newcommand{\pSL}{F_{\SL}}
\newcommand{\pMEV}{F_{\MEV}}
\newcommand{\pExp}{F_{\ExpD}}
\newcommand{\pYx}{F_{\rY \mid \rX = \rx}}
\newcommand{\pYA}{F_{\rY \mid \rX = A}}
\newcommand{\pYB}{F_{\rY \mid \rX = B}}
\newcommand{\qZ}{F^{-1}_\rZ}
\newcommand{\qY}{F^{-1}_\rY}
\newcommand{\dZ}{f_\rZ}
\newcommand{\dY}{f_\rY}
\newcommand{\hatdY}{\hat{f}_{\rY, N}}
\newcommand{\dYx}{f_{\rY \mid \rX = \rx}}
\newcommand{\hazY}{\lambda_\rY}
\newcommand{\HazY}{\Lambda_\rY}
\newcommand{\hathazY}{\hat{\lambda}_{\rY, N}}
\newcommand{\hatHazY}{\hat{\Lambda}_{\rY, N}}
\newcommand{\measureY}{\mu}
\newcommand{\lebesgue}{\mu_L}
\newcommand{\counting}{\mu_C}
\newcommand{\g}{g}
\newcommand{\h}{h}
\newcommand{\s}{\svec}
\newcommand{\hY}{h_\rY}
\newcommand{\hx}{h_\rx}
\newcommand{\hs}{\mathcal{H}}
\newcommand{\basisy}{\avec}
\newcommand{\bern}[1]{\avec_{\text{Bs},#1}}
\newcommand{\bernx}[1]{\bvec_{\text{Bs},#1}}
\newcommand{\basisx}{\bvec}
\newcommand{\basisyx}{\cvec}
\newcommand{\m}{m}
\newcommand{\lik}{\mathcal{L}}
\newcommand{\parm}{\varthetavec}
\newcommand{\eparm}{\vartheta}
\newcommand{\dimparm}{P}
\newcommand{\dimparmx}{Q}
\newcommand{\shiftparm}{\betavec}
\newcommand{\ie}{\textit{i.e.}~}
\newcommand{\eg}{\textit{e.g.}~}
\renewcommand{\Prob}{\mathbb{P}}
\newcommand{\Ex}{\mathbb{E}}
\newcommand{\RR}{\mathbb{R}}
\newcommand{\Null}{\mathbf{0}}
\newcommand{\I}{\mathds{1}}
 \DeclareMathOperator{\diag}{diag}
 \DeclareMathOperator*{\argmax}{{arg\,max}}
 \DeclareMathOperator{\ND}{N}
 \DeclareMathOperator{\UD}{U}
 \DeclareMathOperator{\ExpD}{Exp}
 \DeclareMathOperator{\SL}{SL}
 \DeclareMathOperator{\MEV}{MEV}
\def \avec {\text{\boldmath$a$}}    
\def \bvec {\text{\boldmath$b$}}    
\def \cvec {\text{\boldmath$c$}}    
    \def \mD {\text{\boldmath$D$}}
\def \evec {\text{\boldmath$e$}}    
    \def \mF {\text{\boldmath$F$}}
    \def \mI {\text{\boldmath$I$}}
\def \pvec {\text{\boldmath$p$}}
\def \svec {\text{\boldmath$s$}}
\def \xvec {\text{\boldmath$x$}}    \def \mX {\text{\boldmath$X$}}
\def \betavec         {\text{\boldmath$\beta$}}
\def \varthetavec     {\text{\boldmath$\vartheta$}}
\newtheorem{thm}{Theorem}
\newtheorem{coro}{Corollary}
\newtheorem{defn}{Definition}
\newtheorem{remark}{Remark}
\newcommand{\ubar}[1]{\underaccent{\bar}{#1}}
\newcommand{\rev}[1]{#1}
\author{Torsten Hothorn \\ Universit\"at Z\"urich \\
   \And Lisa M\"ost \\ Universit\"at M\"unchen \\
   \And Peter B\"uhlmann \\ ETH Z\"urich}
\title{Most Likely Transformations} 
\begin{document}

\footnote{A shortened version of this technical report
was accepted for publication in
the \emph{Scandinavian Journal of Statistics} 2017-06-19.}

\section{Introduction}

In a broad sense, we can understand all statistical models as models of
distributions or certain characteristics thereof, especially the
mean.  All distributions $\Prob_\rY$ for at least ordered responses $\rY$
can be characterised by their distribution, quantile, density,
odds, hazard or cumulative hazard functions.  In a fully parametric setting,
all these functions have been specified up to unknown parameters, and the ease of
interpretation can guide us in looking at the appropriate function.  In the semi- and
non-parametric contexts, however, the question arises how we can obtain an
estimate of one of these functions without assuming much about their
shape.  For the direct estimation of distribution functions, we deal with
monotonic functions in the unit interval, whereas for densities, we need to
make sure that the estimator integrates to one.  The hazard function comes with a
positivity constraint, and monotonicity is required for the positive
cumulative hazard function.  These computationally inconvenient restrictions
disappear completely only when the log-hazard function is estimated, and this
explains the plethora of research papers following this path.  However, the
lack of any structure in the log-hazard function comes at a price.  A 
too-erratic behaviour of estimates of the log-hazard function has to be
prevented by some smoothness constraint; this makes classical likelihood
inference impossible.  The novel characterisation and subsequent estimation
of distributions via their transformation function in a broad class of
transformation models that are developed in this paper can be interpreted as a
compromise between structure (monotonicity) and ease of parameterisation,
estimation and inference.  This transformation approach to modelling and
estimation allows standard likelihood inference in a large class of models
that have so far commonly been \rev{dealt with} by other inference procedures.

Since the introduction of transformation models based on non-linear
transformations of some response variable by \cite{BoxCox_1964}, 
this attractive class of models has
received much interest.  In regression problems, transformation models can
be understood as models for the conditional distribution function and are
sometimes referred to as ``distribution regression'', in contrast to their
``quantile regression'' counterpart \citep{Chernozhukov_2013}.  Traditionally,
the models were actively studied and applied in the analysis of ordered
categorical or censored responses.  Recently, transformation models for the
direct estimation of conditional distribution functions for arbitrary
responses received interest in the context of counterfactual distributions
\citep{Chernozhukov_2013}, probabilistic forecasting
\citep{Gneiting_Katzfuss_2014}, distribution and quantile regression
\citep{Leorato_Peracchi_2015, Rothe_Wied_2013}, probabilistic index
models \citep{Thas_Neve_Clement_2012} and conditional transformation models
\if1\blind
\citep{Hothorn_Kneib_Buehlmann_2014}.
\else
(ANONYMOUS).
\fi
The core idea of any transformation
model is the application of a strictly monotonic transformation function $\h$
for the reformulation of an unknown distribution function $\Prob(\rY \le
\ry)$ as $\Prob(\h(\rY) \le \h(\ry))$, where the unknown transformation function $\h$ is
estimated from the data.  Transformation models have received attention
especially in situations where the likelihood contains terms involving the
conditional distribution function $\Prob(\rY \le \ry \mid \rX = \rx) =
\pZ(\h(\ry \mid \rx))$ \rev{with inverse link function $\pZ$}, 
most importantly for censored, truncated and ordered
categorical responses.  For partially linear transformation models with transformation
function $\h(\ry \mid \rx) = \hY(\ry) + \hx(\rx)$, much emphasis has been given to
estimation procedures treating the baseline transformation $\hY$ (\eg the 
log-cumulative baseline hazard function in the Cox model)
as a high-dimensional nuisance parameter.
Prominent members of these estimation procedures are the partial likelihood
estimator 
\if0\compact
\citep{Cox_1975} 
\fi
and approaches influenced by the estimation
equations introduced by 
\if0\compact
\cite{Chengetal_1995} and \cite{Chenetal_2002}. 
\else
\cite{Chengetal_1995}.
\fi
Once an estimate for the shift $\hx$ is obtained, the baseline transformation
$\hY$ is then typically estimated by the non-parametric maximum likelihood
estimator \citep[see, for example,][]{Chengetal_1997}.  An overview of the
extensive literature on the simultaneous non-parametric maximum likelihood
estimation of $\hY$ and $\hx$, \rev{\ie estimation procedures not requiring an
explicit parameterisation of $\hY$}, for censored continuous responses is given in
\cite{ZengLin_2007}.

An explicit parameterisation of $\hY$ is common in models of ordinal
responses \citep{Tutz_2012}.  For survival times,
\cite{Kooperbergetal_1995} 
\if0\compact
and \cite{KooperbergClarkson_1997} 
\fi
introduced a cubic
spline parameterisation of the log-conditional hazard function with the
possibility of \rev{response-varying effects} and estimated the  
corresponding models by maximum likelihood.  \cite{CrowtherLambert_2014}
followed up on this suggestion and used restricted cubic splines.  Many authors
studied penalised likelihood approaches for spline approximations of the
baseline hazard function in a Cox model, \eg 
\if0\compact
\cite{Jolyetal_1998, CaiBetensky_2003, Bove_Held_2013} or 
\fi
\cite{Maetal_2014}.  Less frequently, the
transformation function $\hY$ was modelled directly. 
\cite{MallickWalker_2003, Chang_Hsiung_Wu_2005} and
\cite{McLain_Ghosh_2013} used Bernstein polynomials for $\hY$, 
and \cite{Royston_Parmar_2002} proposed a maximum
likelihood approach using cubic splines for modelling $\hY$ and also
time-varying effects.  
The connection between these different transformation models is difficult to see
because most authors present their models in the relatively narrow contexts
of survival or ordinal data.  The lack of a general understanding of
transformation models made the development of novel approaches in this model
class burdensome.  
\if1\blind
\cite{Hothorn_Kneib_Buehlmann_2014} 
\else
ANONYMOUS
\fi
decoupled the
parameterisation of the conditional transformation function $\h(\ry \mid
\rx)$ from the
estimation procedure, and showed that many interesting and novel models can
be understood as transformation models.  The boosting-based optimisation of
proper scoring rules, however, was only developed for uncensored
und right-censored 
\if0\compact
\citep{Moest_Hothorn_2015}
\fi
observations in the absence of truncation and requires the numerical
approximation of the true target function.  In a similar spirit,
\cite{Chernozhukov_2013} applied the connection
$\Prob(\rY \le \ry \mid \rX = \rx) = \Ex(\I(\rY \le y) \mid \rX = \rx)$ for
estimation in the \rev{response-varying effects} transformation model $\Prob(\rY \le \ry \mid \rX =
\rx) = \pZ(\hY(\ry) - \rx^\top \shiftparm(\ry))$; this approach 
can be traced back to \cite{Foresi_Peracchi_1995}.

A drawback of all but the simplest
transformation models is the lack of a likelihood estimation procedure. 
Furthermore, although important connections to other models have been known
for some time \citep{Doksum_Gasko_1990}, it is often not easy to see how
broad and powerful the class of transformation models actually is.  We
address these issues and embed the 
%% parametric, semi-parametric and non-parametric 
estimation of unconditional and conditional distribution
functions of arbitrary univariate random variables under all forms of
random censoring and truncation into a common theoretical and computational
likelihood-based framework.
In a nutshell, we show in Section~\ref{sec:trafo} that all distributions can
be generated by a strictly monotonic transformation of some absolute
continuous random variable.  The likelihood function of the transformed
variable can then be characterised by this transformation function.  The
parameters of appropriate parameterisations of the transformation function,
and thus the parameters of the conditional distribution function in which we are
interested, can then be estimated by maximum likelihood under simple linear
constraints that allow classical asymptotic likelihood inference,
as will be shown in Section~\ref{sec:mlt}.  Many classical and
contemporary models are introduced as special cases of this framework.
\rev{In particular, all transformation models sketched in this introduction 
can be understood and estimated in this novel likelihood-based framework. 
Extensions of classical and contemporary transformation models as well as 
some novel models are derived from our unified theoretical framework of 
transformation functions} in Section~\ref{sec:appl}, and their empirical
performance is illustrated 
\if0\compacteval
and evaluated 
\fi
in Section~\ref{sec:empeval}.

\section{The Likelihood of Transformations} \label{sec:trafo}

%% do we need (Omega, A, P) and if yes, this the same for Z and Y?

Let $\ps$ denote a probability space and $\msY$ a measureable space with 
at least ordered sample space $\samY$. We are interested in inference about 
the distribution $\Prob_\rY$ of a random variable $\rY$, \ie the
probability space $\psY$ defined by the $\sA-\sAY$ measureable function
$\rY: \sam \rightarrow \samY$. For the sake of notational simplicity, we 
present our results for the unconditional 
\if0\compact
and ordered 
\fi
case first; 
regression models 
\if0\compact
and unordered responses 
\fi
are discussed in Section~\ref{subsec:conditional}.
%% thats clear
%% $\Prob_\rY(\esAY) = \Prob(\rY^{-1}(\esAY)) \forall \esAY \in \sAY$.
The distribution $\Prob_\rY = \dY \odot \measureY$ is dominated by
some measure $\measureY$ and 
characterised by its density function $\dY$, distribution function 
$\pY(\ry) = \Prob_\rY(\{\xi \in \samY \mid \xi \le \ry\})$, quantile function 
$\qY(p) = \inf\{y \in \samY \mid \pY(y) \ge p\}$, \rev{odds function 
$\oY(\ry) = \pY(\ry) / (1 - \pY(\ry))$}, hazard function 
$\hazY(\ry) = \dY(\ry) / (1 - \pY(\ry))$ or cumulative hazard 
function $\HazY(\ry) = -\log(1 - \pY(\ry))$. For notational convenience, 
we assume strict monotonicity of $\pY$, \ie $\pY(\ry_1) < \pY(\ry_2) 
\, \forall \ry_1 < \ry_2 \in \samY$.
Our aim is to obtain an estimate $\hatpY$ of the distribution function
$\pY$ from a random sample $\rY_1, \dots, \rY_N \stackrel{\text{iid}}{\sim} \Prob_\rY$. 
In the following, we will show that one can always write this potentially complex
distribution function $\pY$ as the composition of a much simpler and
\textit{a priori} specified distribution function $\pZ$ and a strictly monotonic
transformation function $\h$. The task of estimating $\pY$ is then reduced
to obtaining an estimate $\hat{\h}_N$. The latter exercise, as we will show in this paper,
is technically and conceptually attractive.

%% FIXME Ueberleitung
Let $\msZ$ denote the Euclidian space with Borel $\sigma$-algebra and 
$\rZ: \sam \rightarrow \samZ$ an $\sA-\sAZ$ measureable
function such that the distribution $\Prob_\rZ = \dZ \odot \lebesgue$ is
absolutely
continuous ($\lebesgue$ denotes the Lebesgue measure) in the probability space $\psZ$.
%% also clear
%% $\Prob_\rZ(\esAZ) = \Prob(\rZ^{-1}(\esAZ)) \forall \esAZ \in \sAZ$.
Let $\pZ$ and $\qZ$ denote the corresponding distribution and quantile
functions. We furthermore assume $0 < \dZ(\rz) < \infty \, \forall \rz \in
\samZ$, $\pZ(-\infty) = 0$ and $\pZ(\infty) = 1$ for a log-concave density
$\dZ$ as well as the existence of the
first two derivatives of the density $\dZ(\rz)$ with respect to $\rz$; both
derivatives shall be bounded.
We do not allow any unknown parameters for this distribution. Possible choices
include the standard normal, standard logistic (SL) and minimum extreme value
(MEV) distribution with distribution functions $\pZ(\rz) = \Phi(\rz)$, $\pZ(\rz) =
\pSL(\rz) = (1 + \exp(-\rz))^{-1}$
and $\pZ(\rz) = \pMEV(\rz) = 1 -\exp(-\exp(\rz))$, respectively.
In the first step, we will show that there always exists a unique and strictly
monotonic transformation $\g$ such that the unknown and potentially complex
distribution $\Prob_\rY$ that we are interested in can be generated from the
simple and known distribution $\Prob_\rZ$ via $\Prob_\rY = \Prob_{g \circ
\rZ}$. More formally, let $\g: \samZ \rightarrow \samY$ denote a
$\sAZ-\sAY$ measureable function. The composition $\g \circ \rZ$ is
a random variable on $(\samY, \sAY, \Prob_{\g \circ \rZ})$.
%% clear
%% $\Prob_{\g \circ \rZ}(\esAY) = \Prob(\rZ^{-1}(\g^{-1}(\esAY))) \forall \esAY \in \sAY$.
We can now formulate the existence and uniqueness of $\g$ as a corollary
to the probability integral transform.
\begin{coro} \label{thm1}
For all random variables $\rY$ and $\rZ$, there exists a unique strictly monotonically increasing transformation $\g$, such that $\Prob_\rY = \Prob_{\g \circ \rZ}$.
\end{coro}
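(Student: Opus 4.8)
The plan is to prove the statement as a direct consequence of the probability integral transform together with its converse, the quantile transform, by exhibiting the map $\g = \qY \circ \pZ$ explicitly and then arguing that strict monotonicity leaves no other choice.

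For existence, I would first record that the standing assumptions on $\rZ$ make $\pZ$ a continuous and strictly increasing bijection from $\samZ$ onto $(0,1)$: the density satisfies $0 < \dZ(\rz) < \infty$ everywhere, so $\pZ$ is strictly increasing, and $\pZ(-\infty) = 0$, $\pZ(\infty) = 1$ fix the range. Hence $\pZ(\rZ)$ is uniformly distributed on $(0,1)$. Because $\pY$ is assumed strictly monotonic, its quantile function $\qY$ is a genuine (strictly increasing) inverse, so I would set $\g = \qY \circ \pZ$, which is strictly increasing as a composition of strictly increasing maps. It then remains to check $\Prob_{\g \circ \rZ} = \Prob_\rY$, which I would verify at the level of distribution functions: for any $\ry \in \samY$,
\[
\Prob(\g(\rZ) \le \ry) = \Prob(\qY(\pZ(\rZ)) \le \ry) = \Prob(\pZ(\rZ) \le \pY(\ry)) = \Prob(\rZ \le \qZ(\pY(\ry))) = \pZ(\qZ(\pY(\ry))) = \pY(\ry),
\]
where the middle equalities use the strict monotonicity of $\qY$ and $\pZ$, and the last uses $\pZ \circ \qZ = \mathrm{id}$ on $(0,1)$.

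For uniqueness, I would start from an arbitrary strictly increasing $\g$ with $\g(\rZ) \stackrel{d}{=} \rY$ and show that it must coincide with $\qY \circ \pZ$. Since $\g$ is strictly increasing, the event $\{\g(\rZ) \le \ry\}$ agrees with $\{\rZ \le \g^{-1}(\ry)\}$ up to the single endpoint, which carries no probability mass because $\pZ$ is atomless; thus $\pY(\ry) = \pZ(\g^{-1}(\ry))$ for all $\ry$. Applying $\qZ$ and using $\qZ \circ \pZ = \mathrm{id}$ (valid because $\pZ$ is continuous and strictly increasing) gives $\g^{-1} = \qZ \circ \pY$, and inverting recovers $\g = \qY \circ \pZ$. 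Hence $\g$ is pinned down on the support and agrees with the map constructed above.

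The main obstacle, and essentially the only point requiring care, is the bookkeeping around the generalised inverses: one must confirm that $\qY$ is a true inverse on the range of $\pY$ (guaranteed by the assumed strict monotonicity of $\pY$), that $\pZ$ has no atoms so the endpoint events are negligible in the uniqueness argument, and that $\g$ is defined consistently on all of $\samZ$ rather than merely on the support of $\rZ$. None of these is deep, but each must be stated to render the corollary rigorous; the substantive content is simply the classical chain $\rZ \mapsto \pZ(\rZ) \sim \UD(0,1) \mapsto \qY(\pZ(\rZ)) \sim \Prob_\rY$.
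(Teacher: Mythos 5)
Your proof is correct and takes essentially the same route as the paper: both construct $\g = \qY \circ \pZ$ via the probability integral transform and establish uniqueness by showing that the inverse map is forced to equal $\qZ \circ \pY$ (the paper phrases this as $\h = \pZ^{-1} \circ \pY$ being pinned down by injectivity of $\pZ$). Your version merely makes explicit the distribution-function verification and the atomlessness/generalised-inverse bookkeeping that the paper's terser argument leaves implicit.
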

%\TODO{Maybe a little more explanation?}

\begin{proof}
Let $\g = \pY^{-1} \circ \pZ$ and $\rZ \sim \Prob_\rZ$. Then $U := \pZ(\rZ)
\sim \UD[0, 1]$ and $\rY = \pY^{-1}(U) \sim \Prob_\rY$ by the 
probability integral transform. Let $\h: \samY \rightarrow \samZ$, such that
$\pY(\ry) = \pZ(\h(\ry))$. From
\if0\compact
\begin{eqnarray*}
\else
$
\fi
\pY(\ry) = (\pZ \circ \pZ^{-1} \circ \pY)(\ry) \iff \h = \pZ^{-1} \circ \pY
\if0\compact
\end{eqnarray*}
\else
$
\fi
we get the uniqueness of $\h$ and therefore $\g$.
The quantile function $\pZ^{-1}$ and the distribution function $\pY$ exist
by assumption and are both strictly monotonic and right-continuous.
Therefore, $\h$ is strictly monotonic and right-continuous and so is $\g$.
\end{proof}

\begin{coro}  \label{cor1}
For $\measureY = \lebesgue$, we have $\g = \h^{-1}$ and $\h^\prime(y) = 
\frac{\partial \h(\ry)}{\partial \ry} = \dZ((\pZ^{-1} \circ \pY)(\ry))^{-1}
\dY(\ry)$.
\end{coro}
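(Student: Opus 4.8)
The plan is to read the first assertion directly off the explicit formulas for $\g$ and $\h$ produced in the proof of Corollary~\ref{thm1}, and to obtain the derivative formula from a single application of the chain rule once differentiability of the two ingredients is in place. Recall that the proof of Corollary~\ref{thm1} set $\g = \pY^{-1} \circ \pZ$ and $\h = \pZ^{-1} \circ \pY$, so the pair $(\g, \h)$ is constructed to be mutually inverse.

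First I would verify $\g = \h^{-1}$ by composing the two representations:
\[
\g \circ \h = \pY^{-1} \circ \pZ \circ \pZ^{-1} \circ \pY = \mathrm{id}_{\samY}, \qquad \h \circ \g = \pZ^{-1} \circ \pY \circ \pY^{-1} \circ \pZ = \mathrm{id}_{\samZ}.
\]
Here the assumed strict monotonicity of $\pY$, together with strict monotonicity of $\pZ$ (which follows from $0 < \dZ < \infty$), ensures that $\pY \circ \pY^{-1}$ and $\pZ \circ \pZ^{-1}$ are genuine two-sided identities rather than merely one-sided inverses, so that the cancellations above are valid on all of $\samY$ and $\samZ$, respectively.

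Next I would establish differentiability and apply the chain rule to $\h = \pZ^{-1} \circ \pY$. Since $\dZ$ is continuous and strictly positive by assumption, $\pZ$ is continuously differentiable and strictly increasing, so the inverse function theorem gives $(\pZ^{-1})^\prime(u) = \dZ(\pZ^{-1}(u))^{-1}$. Because $\measureY = \lebesgue$, the distribution $\Prob_\rY$ is absolutely continuous and $\pY$ is differentiable with $\pY^\prime = \dY$. The chain rule then yields
\[
\h^\prime(\ry) = (\pZ^{-1})^\prime(\pY(\ry)) \, \dY(\ry) = \dZ(\pZ^{-1}(\pY(\ry)))^{-1} \dY(\ry) = \dZ((\pZ^{-1} \circ \pY)(\ry))^{-1} \dY(\ry),
\]
which is exactly the claimed expression since $\h = \pZ^{-1} \circ \pY$.

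The hard part will be rigour rather than technique: absolute continuity of $\Prob_\rY$ guarantees differentiability of $\pY$ only Lebesgue-almost everywhere, so the pointwise identity should be understood to hold at every $\ry$ at which $\pY$ is differentiable, i.e.\ almost everywhere with respect to $\lebesgue$. The positivity assumption $0 < \dZ < \infty$ is precisely what keeps the factor $\dZ(\h(\ry))^{-1}$ finite and well defined throughout, so no additional care is required on that point.
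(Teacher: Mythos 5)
Your proof is correct: the paper gives no argument of its own for this corollary --- it is stated as a standard textbook fact \citep[\eg][]{Lindsey_1996} following immediately from the construction $\h = \pZ^{-1} \circ \pY$ and $\g = \pY^{-1} \circ \pZ$ in Corollary~\ref{thm1} --- and your chain-rule/inverse-function-theorem derivation is exactly that standard argument. The extra care you take (two-sided invertibility from strict monotonicity of $\pY$ and $0 < \dZ < \infty$, and the almost-everywhere caveat on differentiability of $\pY$) is consistent with, and slightly more explicit than, what the paper's assumptions implicitly deliver.
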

This result for absolutely continuous random variables $\rY$ can be found in
many textbooks \citep[\eg][]{Lindsey_1996}, Corollary~\ref{thm1}
also covers the discrete case.

\begin{coro}
For the counting measure $\measureY = \counting$, $\h = \pZ^{-1} \circ \pY$ is a right-continuous step function
because $\pY$ is a right-continuous step function with steps at $\ry \in
\samY$.
\end{coro}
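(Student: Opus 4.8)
The plan is to show that $\h = \pZ^{-1}\circ\pY$ inherits both the step structure and the right-continuity of $\pY$, the passage through $\pZ^{-1}$ being harmless because $\pZ^{-1}$ is continuous and strictly increasing. The ingredients I would assemble are: (i) the assumptions on $\rZ$ --- $0 < \dZ(\rz) < \infty$ for all $\rz \in \samZ$ together with $\pZ(-\infty)=0$ and $\pZ(\infty)=1$ --- make $\pZ$ a continuous, strictly increasing bijection from $\samZ$ onto $(0,1)$, so that $\pZ^{-1}$ is continuous and strictly increasing on $(0,1)$; and (ii) since $\measureY = \counting$, the distribution $\Prob_\rY$ is supported on an at most countable ordered set $\{\ry_1 < \ry_2 < \cdots\} \subseteq \samY$.

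First I would record the shape of $\pY$. Because the support is countable and ordered, $\pY$ is piecewise constant, taking the value $\sum_{\ry_j \le \ry} \dY(\ry_j)$ and jumping upward by the point mass $\dY(\ry_k)$ at each atom $\ry_k$; as a distribution function it is right-continuous. This is exactly the assertion that $\pY$ is a right-continuous step function with steps at $\ry \in \samY$. Composing with $\pZ^{-1}$, the function $\h$ is constant on every interval on which $\pY$ is constant, is right-continuous as the composition of the right-continuous $\pY$ with the continuous $\pZ^{-1}$, and, by strict monotonicity of $\pZ^{-1}$, has a genuine upward jump at each $\ry_k$. Hence $\h$ is again a right-continuous step function whose steps sit precisely at the atoms $\ry \in \samY$.

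The only delicate point --- and the main, though mild, obstacle --- is the behaviour at the ends of the support, where $\pY$ can equal $0$ or $1$ while $\pZ^{-1}$ diverges to $-\infty$ or $+\infty$. I would handle this by viewing $\pZ^{-1}$ as a continuous, strictly increasing map into the extended reals: at interior atoms $0 < \pY(\ry) < 1$ and $\h(\ry)$ is finite, so the step-function description holds verbatim, while the extreme steps of $\h$ may be located at $\pm\infty$. This boundary degeneracy does not disturb the piecewise-constant, right-continuous structure that the corollary asserts.
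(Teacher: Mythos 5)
Your proposal is correct and follows essentially the same route as the paper, which justifies the corollary in one line: $\pY$ is a right-continuous step function with steps at $\ry \in \samY$, and composing with the continuous, strictly increasing $\pZ^{-1}$ preserves both properties. Your additional care at the endpoints (allowing $\h$ to take values $\pm\infty$ where $\pY$ equals $0$ or $1$) matches the paper's own conventions $\pZ(\h(-\infty)) := 0$ and $\pZ(\h(\infty)) := 1$, so nothing in your argument departs from the paper's treatment.
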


%\paragraph{Example} The classical textbook example for transformations of random
%variables is $\rY = \rZ^2 \sim \chi^2_1$ from $\rZ \sim \ND(0, 1)$, \ie
%using the non-monotone transformation $\rz^2$. Alternatively, we can write 
%$\rZ = \h(\rY)$ and $\rY = \g(\rZ)$ with $\h = \Phi^{-1} \circ F_{\Chi1}$ and
%$\g = \h^{-1} = F_{\Chi1}^{-1} \circ \Phi$. The transformations $\g$ and
%$\h$ are unique and strictly monotone transformations switching between the 
%standard normal and the $\chi^2_1$ distribution.
%The $\chi^2_\text{df}$ distribution can be generated from the standard normal 
%by the transformation $\g = F^{-1}_{\chi^2_\text{df}} \circ \Phi$ and the 
%back-transformation is $\h = \Phi^{-1} \circ F_{\chi^2_\text{df}}$. 

%\paragraph{Example} Let $\rY \sim \PoD(\lambda)$. With $\g(\rz) = \inf\{\ry
%\in \samY \mid \pY(\ry \mid \lambda) \ge \Phi(\rz)\}$ we can transform a standard
%normal into a Poisson via $\rY = \g(\rZ)$. The back-transformation is,
%however, not possible.

We now characterise
the distribution $\pY$ by the corresponding transformation function $\h$, set up the
corresponding likelihood of such a transformation function and estimate
the transformation function based on this likelihood.
Let $\hs = \{\h: \samY \rightarrow \samZ \mid \sAY-\sAZ \text{ measureable}, 
\h(\ry_1) < \h(\ry_2) \, \forall \ry_1 < \ry_2 \in \samY\}$ denote the 
space of all strictly monotonic transformation functions. 
With the transformation function $\h$, we can evaluate $\pY$ as 
$\pY(\ry \mid \h) = \pZ(\h(\ry)) \, \forall \ry \in \samY$. Therefore, we only
need to study the transformation function $\h$; the inverse transformation $\g
= \h^{-1}$
\if0\compact
\citep[used to define a ``group family'' or ``group model'' by][]{Lehmann_1983,Bickel_Klaassen_Ritov_1993} 
\else
\citep[used to define a ``group model'' by][]{Bickel_Klaassen_Ritov_1993}
\fi
is not necessary in what follows. The density for
absolutely continuous variables $\rY$ ($\measureY =
\lebesgue$) is now given by
\if0\compact
\begin{eqnarray*}
\else
$
\fi
\dY(\ry \mid \h) = \dZ(\h(\ry)) \h^\prime(\ry).
\if0\compact
\end{eqnarray*}
\else
$
\fi
For discrete responses $\rY$ ($\measureY = \counting$) with finite sample space $\samY = \{\ry_1, \dots, \ry_K\}$,
the density is
\begin{eqnarray*}
\dY(\ry_k \mid \h) = \left\{ \begin{array}{ll} \pZ(\h(\ry_k)) & k = 1 \\
\pZ(\h(\ry_k)) - \pZ(\h(\ry_{k-1})) & k = 2, \dots, K  - 1 \\
1 - \pZ(\h(\ry_{k-1})) & k = K \end{array} \right.
\end{eqnarray*}
and for countably infinite sample spaces $\samY = \{\ry_1, \ry_2, \ry_3, \dots\}$, we get
the density
\begin{eqnarray*}
\dY(\ry_k \mid \h) = \left\{ \begin{array}{ll} \pZ(\h(\ry_k)) & k = 1 \\
\pZ(\h(\ry_k)) - \pZ(\h(\ry_{k-1})) & k > 1.  \end{array} \right.
\end{eqnarray*}
With the conventions $\pZ(\h(\ry_0)) := \pZ(\h(-\infty)) := 0$ and
$\pZ(\h(\ry_K)) := \pZ(\h(\infty)) := 1$, we use the more compact notation
$\dY(\ry_k \mid \h) = \pZ(\h(\ry_k)) - \pZ(\h(\ry_{k-1}))$ in the sequel.

For a given transformation function $h$, the likelihood contribution of a
datum $\esAY = (\ubar{\ry},\bar{\ry}] \in \sAY$ is defined in terms of the
distribution function \citep{Lindsey_1996}: 
\begin{eqnarray*} 
\lik(\h \mid \rY \in \esAY) := \int_{\esAY} \dY(y \mid \h) d\measureY(y) =
\pZ(\h(\bar{\ry})) - \pZ(\h(\ubar{\ry})).
\end{eqnarray*}
%% <FIXME> add recent AmStat reference </FIXME>
This ``exact'' definition of the likelihood applies to most practical situations of interest and, in particular, allows discrete and (conceptually) continuous as well as
censored or truncated observations $\esAY$. For a discrete response $\ry_k$, we have $\bar{\ry} = \ry_k$ and $\ubar{\ry}
= \ry_{k -1}$, such that $\lik(\h \mid \rY = \ry_k) = \dY(\ry_k \mid \h) =
\pZ(\h(\bar{\ry})) - \pZ(\h(\ubar{\ry}))$.  For absolutely continuous random
variables $\rY$, we almost always observe an imprecise datum $(\ubar{\ry},\bar{\ry}]
\subset \RR$ and, for short intervals $(\ubar{\ry},\bar{\ry}]$, approximate
the exact likelihood $\lik(\h \mid \rY \in (\ubar{\ry},\bar{\ry}])$ by the term
$(\bar{\ry} - \ubar{\ry}) \dY(\ry \mid \h)$ or simply $\dY(\ry \mid \h)$ with $\ry
= (\ubar{\ry} + \bar{\ry})/2$ \citep{Lindsey_1999}. This approximation only
works for relatively precise measurements, \ie short intervals. If longer intervals 
are observed, one speaks of ``censoring'' and relies on the exact definition
of the likelihood contribution instead of using the above approximation \citep{Klein_Moeschberger_2003}. 
In summary, the likelihood contribution of a conceptually ``exact
continuous'' or left-, right- or interval-censored continuous or discrete 
observation $(\ubar{\ry}, \bar{\ry}]$ is given by 
\begin{eqnarray*}
\lik(\h \mid \rY \in (\ubar{\ry}, \bar{\ry}]) \left\{ \begin{array}{ll}
    \approx \dZ(\h(\ry)) \h^\prime(\ry) & \ry = (\ubar{\ry} + \bar{\ry})/2 \in \samY \quad \text{```exact continuous'''}\\
    = 1 - \pZ(\h(\ubar{\ry})) & \ry \in (\ubar{\ry}, \infty) \cap \samY \quad \text{`right-censored'} \\
    = \pZ(\h(\bar{\ry})) & \ry \in (-\infty, \bar{\ry}] \cap \samY \quad \text{`left-censored'} \\
    = \pZ(\h(\bar{\ry})) - \pZ(\h(\ubar{\ry})) & \ry \in (\ubar{\ry},\bar{\ry}] \cap \samY \quad    \text{`interval-censored',}
\end{array} \right. 
\end{eqnarray*}
under the assumption of random censoring.
The likelihood is more complex under dependent censoring
\citep{Klein_Moeschberger_2003}, but we will not elaborate on this issue. 
The likelihood contribution $\lik(\h \mid \rY \in (\ry_k, \ry_{k-1}])$
of an ordered factor in category $\ry_k$ is equivalent to the term 
$\lik(\h \mid \rY \in (\ubar{\ry}, \bar{\ry}])$
contributed by an interval-censored observation $(\ubar{\ry},\bar{\ry}]$,
when category $\ry_k$ is defined by the interval $(\ubar{\ry},\bar{\ry}]$. Thus,
the expression $\pZ(\h(\bar{\ry})) - \pZ(\h(\ubar{\ry}))$ for the likelihood
contribution reflects the equivalence of interval-censoring 
and categorisation at corresponding cut-off points. 

For truncated observations in the interval $(\ry_l, \ry_r] \subset \samY$,
the above likelihood contribution is defined in terms of the 
distribution function conditional on the truncation
\begin{eqnarray*}
\pY(\ry \mid \rY \in (\ry_l, \ry_r]) = \pZ(\h(\ry) \mid \rY \in (\ry_l, \ry_r]) = 
\frac{\pZ(\h(\ry))}{\pZ(\h(\ry_r)) - \pZ(\h(\ry_l))}  
\quad \forall \ry \in (\ry_l, \ry_r]
\end{eqnarray*}
and thus the likelihood contribution changes to \citep{Klein_Moeschberger_2003}
\begin{eqnarray*}
\frac{\lik(\h \mid \rY \in (\ubar{\ry}, \bar{\ry}])}{\pZ(\h(\ry_r)) -
\pZ(\h(\ry_l))} = \frac{\lik(\h \mid \rY \in (\ubar{\ry}, \bar{\ry}])}{\lik(\h
\mid \rY \in (\ry_l, \ry_r])} \quad \text{when } \ry_l < \ubar{\ry} < \bar{\ry} \le \ry_r.
\end{eqnarray*}
It is important to note that the likelihood is always \textit{defined} in
terms of a distribution function \citep{Lindsey_1999}, and it therefore makes
sense to directly model the distribution function of interest.  The ability
to uniquely characterise this distribution function by the
transformation function $\h$ gives rise to the following definition of an
estimator $\hat{\h}_N$.

\begin{defn}[Most likely transformation]
Let $\esAY_1, \dots, \esAY_N$ denote an independent sample of possibly
randomly censored or truncated observations from $\Prob_\rY$. The estimator
\begin{eqnarray*}
\hat{\h}_N := \argmax_{\tilde{\h} \in \hs} \sum_{i = 1}^{N} \log(\lik(\tilde{\h} \mid
\rY \in \esAY_i)) 
\end{eqnarray*}
is called the most likely transformation (MLT).
\end{defn}
Log-concavity of $\dZ$ ensures concavity of the log-likelihood (except 
when all observations are right-censored) and thus
ensures the existence and uniqueness of $\hat{\h}_N$.

\if0\compact
\paragraph{Example} For an absolutely continuous response $\rY$ the likelihood and
log-likelihood for $\h$ are approximated by the density and log-density 
evaluated at $\ry = (\ubar{\ry} + \bar{\ry})/2$, respectively:
\begin{eqnarray*}
\lik(\h \mid \rY \in (\ubar{\ry}, \bar{\ry}]) & \approx & \dZ(\h(\ry))\h^\prime(\ry) \\
\log(\lik(\h \mid \rY \in (\ubar{\ry}, \bar{\ry}])) & \approx & \log(\dZ(\h(\ry))) + \log(\h^\prime(\ry)).
\end{eqnarray*}
Strict monotonicity of the transformation function $\h$ is required;
otherwise the likelihood is not defined. The term $\log(\h^\prime(\ry))$ is
not a penalty term, but the likelihood favours transformation functions with
a large positive derivative at the observations. If we assume $\rY \sim \ND(\alpha, \sigma^2)$ 
and for the choice $\rZ \sim \ND(0, 1)$ with $\pZ = \Phi$ and $\dZ = \phi$,
we can restrict $\h$ to linear functions $\h(\ry) = (\ry - \alpha) \sigma^{-1}$.  
The likelihood reduces to
\begin{eqnarray*}
\lik(\h \mid \rY \in (\ubar{\ry}, \bar{\ry}]) \approx \dZ(\h(\ry))\h^\prime(\ry) = \phi((\ry - \alpha)
\sigma^{-1}) \sigma^{-1} = \phi_{\alpha, \sigma^2}(\ry) = \dY(\ry \mid \alpha,
\sigma^2).
\end{eqnarray*}
In this simple location-scale family, the most likely transformation is
characterised by the parameters of the normal distribution of $\rY$.  It is
important to note that for other choices of $\pZ$, the most likely
transformation is non-linear; however, the distribution function $\pY =
\pZ(\h(\ry))$ is invariant with respect to $\pZ$ because we can always write
$\h$ as $\pZ^{-1} \circ \pY$.  In other words, with $\pZ \neq \Phi$, we can
still model normal responses $\rY$; however, a non-linear transformation
function $\h$ is required.
\fi

Many distributions are defined by a transformation function $\h$,
for example, the Box-Cox power exponential family
\if0\compact
\citep{Rigby_Stasinopoulos_2004}, 
\else
\citep{Stasinopoulos_Rigby_2007},
\fi
the sinh-arcsinh distributions \citep{Jones_Pewsey_2009}, or
the T-X family of distributions \citep{Alzaatreh_Lee_Famoye_2013}.
\if0\compact
The parameters of these distributions can, for
example, be estimated by the GAMLSS approach
\citep{Rigby_Stasinopoulos_2005}.  
\fi
In what follows, we do not assume any
specific form of the transformation function but parameterise $\h$ in terms
of basis functions.  We now introduce such a parameterisation, a
corresponding family of distributions, a maximum likelihood estimator and a
large class of models for unconditional and conditional distributions.

\section{Transformation Analysis} \label{sec:mlt}

We parameterise the transformation function $\h(\ry)$ as a linear function of its
basis-trans\-formed argument $\ry$ using a basis function $\basisy: \samY
\rightarrow \RR^\dimparm$, such that $\h(\ry) = \basisy(\ry)^\top \parm,
\parm \in \RR^\dimparm$.  The choice of the basis function $\basisy$ is
problem specific and will be discussed in Section~\ref{sec:appl}.  The
likelihood $\lik$ only requires evaluation of $\h$, and only an
approximation thereof using the Lebesgue density of ``exact continuous'' observations 
makes the evaluation of the 
first derivative of $\h(\ry)$ with respect to $\ry$ necessary.  In this
case, the derivative with respect to $\ry$ is given by
$\h^\prime(\ry) = \basisy^\prime(\ry)^\top \parm$, and we assume that
$\basisy^\prime$ is available.
In the following, we will write $\h = \basisy^\top
\parm$ and $\h^\prime = {\basisy^\prime}^\top \parm$ for the transformation
function and its first derivative, omitting the argument $\ry$,
and we assume that both functions are bounded away from $-\infty$ and
$\infty$. For a
specific choice of $\pZ$ and $\basisy$, the transformation family of
distributions consists of all distributions $\Prob_\rY$ whose distribution
function $\pY$ is given as the composition $\pZ \circ \basisy^\top \parm$;
this family can be formally defined as follows.

\begin{defn}[Transformation family] 
The distribution family
\if0\compact
\begin{eqnarray*}
\Prob_{\rY, \Theta} = \{ \pZ \circ \basisy^\top\parm \mid \parm \in \Theta \}
\end{eqnarray*}
\else
$\Prob_{\rY, \Theta} = \{ \pZ \circ \basisy^\top\parm \mid \parm \in \Theta \}$
\fi
with parameter space $\Theta = \{\parm \in \RR^\dimparm \mid \basisy^\top\parm
\in \hs\}$ is called 
transformation family of distributions $\Prob_{\rY, \parm}$ with transformation functions
$\basisy^\top\parm \in \hs$, $\mu$-densities $\dY(\ry\mid \parm), \ry \in
\samY$, 
and error distribution function $\pZ$.
\end{defn}

The classical definition of a transformation family relies on the idea of
invariant distributions, \ie only the parameters of a distribution are
changed by a transformation function but the distribution itself is not changed. 
The normal family characterised by affine transformations is the most
well-known example \citep[\eg][]{Fraser_1968, Lindsey_1996}. 
Here, we explicitly allow and encourage transformation functions that 
change the shape of the distribution.  The transformation function
$\basisy^\top\parm$ is, at least in principle, flexible enough to generate
any distribution function $\pY = \pZ \circ \basisy^\top\parm$ from the
distribution function $\pZ$.  We borrow the term ``error distribution''
function for $\pZ$ from \cite{Fraser_1968}, because $\rZ$ can be understood
as an error term in some of the models discussed in Section~\ref{sec:appl}. 
The problem of estimating the unknown transformation function $\h$, and thus
the unknown distribution function $\pY$, reduces to the problem of
estimating the parameter vector $\parm$ through maximisation of the
likelihood function. We assume that the basis function $\basisy$ is
such that the parameters $\parm$ are identifiable.

\begin{defn}[Maximum likelihood estimator]
\if0\compact
\begin{eqnarray*}
\hat{\parm}_N := \argmax_{\parm \in \Theta} \sum_{i = 1}^N
\log(\lik(\basisy^\top \parm \mid \rY \in \esAY_i))
\end{eqnarray*}
\else
$\hat{\parm}_N := \argmax\limits_{\parm \in \Theta} \sum\limits_{i = 1}^N
\log(\lik(\basisy^\top \parm \mid \rY \in \esAY_i))$
\fi
\end{defn}

Based on the maximum likelihood estimator $\hat{\parm}_N$, we
define plug-in estimators of the most likely transformation function and
the corresponding estimator of our target distribution $\pY$ as 
$\hat{\h}_N := \basisy^\top \hat{\parm}_N$ and 
$\hatpY := \pZ \circ \hat{\h}_N$.
Because the problem of estimating an unknown distribution function is now
embedded in the maximum likelihood framework, the asymptotic analysis
benefits from standard results on the asymptotic behaviour of maximum likelihood
estimators.
We begin with deriving the score function and Fisher information.
The score contribution of an ``exact continuous'' 
observation $\ry = (\ubar{\ry} +
\bar{\ry})/2$ from an absolutely continuous distribution is approximated by the
gradient of the log-density
\if0\compact
\begin{eqnarray}
\s(\parm \mid \rY \in (\ubar{\ry}, \bar{\ry}]) \approx
\frac{\partial \log(\dY(\ry \mid \parm))}{\partial \parm} & = & 
\frac{\partial \log(\dZ(\basisy(\ry)^\top \parm))) +
\log({\basisy^\prime(\ry)}^\top \parm)}{\partial \parm} \nonumber \\
& = & \basisy(\ry) \frac{\dZ^\prime(\basisy(\ry)^\top \parm)}{\dZ(\basisy(\ry)^\top \parm)}
    + \frac{\basisy^\prime(\ry)}{{\basisy^\prime(\ry)}^\top \parm}. \label{f:s_exact}
\end{eqnarray}
\else 
\begin{eqnarray}
\s(\parm \mid \rY \in (\ubar{\ry}, \bar{\ry}]) \approx
\basisy(\ry) \frac{\dZ^\prime(\basisy(\ry)^\top \parm)}{\dZ(\basisy(\ry)^\top \parm)}
    + \frac{\basisy^\prime(\ry)}{{\basisy^\prime(\ry)}^\top \parm}. \label{f:s_exact}
\end{eqnarray}
\fi
For an interval-censored or discrete observation $\ubar{\ry}$ and
$\bar{\ry}$ (the constant terms $\pZ(\basisy(-\infty)^\top \parm) =
\pZ(-\infty) = 0$ and  $\pZ(\basisy(\infty)^\top \parm) =
\pZ(\infty) = 1$ vanish), the score contribution is
\if0\compact
\begin{eqnarray}
\s(\parm \mid \rY \in (\ubar{\ry}, \bar{\ry}]) & = & \frac{\partial \log(\lik(\basisy^\top \parm \mid \rY \in (\ubar{\ry},
\bar{\ry}]))}{\partial \parm} \nonumber \\
& = & \frac{\partial \log(\pZ(\basisy(\bar{\ry})^\top \parm) - \pZ(\basisy(\ubar{\ry})^\top \parm))}{\partial \parm}  \nonumber \\
& = & \frac{\dZ(\basisy(\bar{\ry})^\top \parm)\basisy(\bar{\ry}) - \dZ(\basisy(\ubar{\ry})^\top
\parm) \basisy(\ubar{\ry})}{\pZ(\basisy(\bar{\ry})^\top \parm) - \pZ(\basisy(\ubar{\ry})^\top
\parm)}. \label{f:s_interval}
\end{eqnarray}
\else
\begin{eqnarray}
\s(\parm \mid \rY \in (\ubar{\ry}, \bar{\ry}]) = 
\frac{\dZ(\basisy(\bar{\ry})^\top \parm)\basisy(\bar{\ry}) - \dZ(\basisy(\ubar{\ry})^\top
\parm) \basisy(\ubar{\ry})}{\pZ(\basisy(\bar{\ry})^\top \parm) - \pZ(\basisy(\ubar{\ry})^\top
\parm)}. \label{f:s_interval}
\end{eqnarray}
\fi
For a truncated observation, the score function is $\s(\parm \mid \rY \in
(\ubar{\ry}, \bar{\ry}]) - \s(\parm \mid \rY \in (\ry_l, \ry_r])$.
%\begin{eqnarray*}
%\frac{\partial -\log[\pZ(\basisy(\ry_r)^\top \parm) - \pZ(\basisy(\ry_l)^\top
%\parm)]}{\partial \parm} & = &
%-\frac{\dZ(\basisy(\ry_r)^\top \parm)\basisy(\ry_r) - \dZ(\basisy(\ry_l)^\top \parm)\basisy(\ry_l)}
%     {\pZ(\basisy(\ry_r)^\top \parm) - \pZ(\basisy(\ry_l)^\top \parm)}
%\end{eqnarray*}
%has to be added to the score contribution $\s(\parm \mid \rY \in (\ubar{\ry},
%\bar{\ry}])$.

The contribution of an ``exact continuous''  observation $\ry$ from an
absolutely
continuous distribution to the Fisher information is approximately
\if0\compact
\begin{eqnarray}
\mF(\parm \mid \rY \in (\ubar{\ry}, \bar{\ry}]) & \approx & 
-\frac{\partial^2 \log(\dY(\ry \mid \parm))}{\partial \parm
\partial \parm^\top} \nonumber \\
& = & - \left(
\basisy(\ry) \basisy(\ry)^\top \left\{
    \frac{\dZ^{\prime\prime}(\basisy(\ry)^\top \parm)}{\dZ(\basisy(\ry)^\top \parm)}
   -\left[\frac{\dZ^{\prime}(\basisy(\ry)^\top \parm)}{\dZ(\basisy(\ry)^\top \parm)}\right]^2\right\} \right. \\ \nonumber 
& &   \left. \quad - \frac{\basisy^\prime(\ry){\basisy^\prime(\ry)}^\top}{{(\basisy^\prime(\ry)}^\top\parm)^2}\right) \label{f:F_exact}
\end{eqnarray}
(NB: the weight to $\basisy(\ry) \basisy(\ry)^\top$ is constant one for
$\pZ = \Phi$). 
\else
\begin{eqnarray}
\mF(\parm \mid \rY \in (\ubar{\ry}, \bar{\ry}])  \approx  - \left(
\basisy(\ry) \basisy(\ry)^\top \left\{
    \frac{\dZ^{\prime\prime}(\basisy(\ry)^\top \parm)}{\dZ(\basisy(\ry)^\top \parm)}
   -\left[\frac{\dZ^{\prime}(\basisy(\ry)^\top \parm)}{\dZ(\basisy(\ry)^\top \parm)}\right]^2\right\}
   - \frac{\basisy^\prime(\ry){\basisy^\prime(\ry)}^\top}{{(\basisy^\prime(\ry)}^\top\parm)^2}\right). \label{f:F_exact}
\end{eqnarray}
\fi
For a censored or discrete observation, we have the following
contribution to the Fisher information
\if0\compact
\begin{eqnarray}
\mF(\parm \mid \rY \in (\ubar{\ry}, \bar{\ry}]) & = & -\frac{\partial^2 \log(\lik(\basisy^\top \parm \mid
\rY \in (\ubar{\ry}, \bar{\ry}]))}{\partial \parm \partial \parm^\top} \nonumber \\
& = & 
- \left\{\frac{\dZ^\prime(\basisy(\bar{\ry})^\top \parm)\basisy(\bar{\ry})\basisy(\bar{\ry})^\top -
      \dZ^\prime(\basisy(\ubar{\ry})^\top \parm) \basisy(\ubar{\ry}) \basisy(\ubar{\ry})^\top}
     {\pZ(\basisy(\bar{\ry})^\top \parm) - \pZ(\basisy(\ubar{\ry})^\top \parm)}
\right. \label{f:F_interval} \\
& &  \quad -\frac{[\dZ(\basisy(\bar{\ry})^\top \parm)\basisy(\bar{\ry}) - 
       \dZ(\basisy(\ubar{\ry})^\top \parm) \basisy(\ubar{\ry})] }
     {[\pZ(\basisy(\bar{\ry})^\top \parm) - \pZ(\basisy(\ubar{\ry})^\top \parm]^2} \times \nonumber \\
& & \left. \qquad      [\dZ(\basisy(\bar{\ry})^\top \parm)\basisy(\bar{\ry})^\top - 
       \dZ(\basisy(\ubar{\ry})^\top \parm) \basisy(\ubar{\ry})^\top]
\right\}. \nonumber
\end{eqnarray}
\else
\begin{eqnarray}
& & \mF(\parm \mid \rY \in (\ubar{\ry}, \bar{\ry}]) = - \left\{\frac{\dZ^\prime(\basisy(\bar{\ry})^\top \parm)\basisy(\bar{\ry})\basisy(\bar{\ry})^\top -
      \dZ^\prime(\basisy(\ubar{\ry})^\top \parm) \basisy(\ubar{\ry}) \basisy(\ubar{\ry})^\top}
     {\pZ(\basisy(\bar{\ry})^\top \parm) - \pZ(\basisy(\ubar{\ry})^\top \parm)} \right. \label{f:F_interval} \\
& &  \left. \quad -\frac{[\dZ(\basisy(\bar{\ry})^\top \parm)\basisy(\bar{\ry}) - 
       \dZ(\basisy(\ubar{\ry})^\top \parm) \basisy(\ubar{\ry})] }
     {[\pZ(\basisy(\bar{\ry})^\top \parm) - \pZ(\basisy(\ubar{\ry})^\top \parm]^2} \times 
[\dZ(\basisy(\bar{\ry})^\top \parm)\basisy(\bar{\ry})^\top - 
       \dZ(\basisy(\ubar{\ry})^\top \parm) \basisy(\ubar{\ry})^\top]
\right\}. \nonumber
\end{eqnarray}
\fi
For a truncated observation, the Fisher information is given by
$\mF(\parm \mid \rY \in (\ubar{\ry}, \bar{\ry}]) - \mF(\parm \mid \rY \in (\ry_l,
\ry_r])$.

We will first discuss the asymptotic properties of the maximum likelihood estimator $\hat{\parm}_N$
in the parametric setting with fixed parameters $\parm$ in both the discrete and
continuous case. For continuous variables $\rY$ and a transformation
function parameterised using a Bernstein polynomial, results for sieve maximum
likelihood estimation, where the number of parameters increases with $N$,
are then discussed in Subsection~\ref{subsec:nonparam}.

\subsection{Parametric Inference} \label{subsec:param}

Conditions on the densities of the error distribution $\dZ$ and the basis
functions $\basisy$ ensuring consistency and asymptotic normality of the
sequence of maximum likelihood estimators $\hat{\parm}_N$ and an estimator
of their asymptotic covariance matrix are given in the following three
theorems.  Due to the full parameterisation of the model, the proofs are
simple standard results for likelihood asymptotics, and a more complex
analysis 
\if0\compact
\citep[as required for estimation equations in the presence of a
nuisance parameter $\h_\rY$, for example in][]{Chengetal_1995,Chenetal_2002}
\else
\citep[as required for estimation equations in the presence of a
nuisance parameter $\h_\rY$, for example in][]{Chengetal_1995}
\fi
is not necessary.  We will restrict ourselves to absolutely continuous or discrete
random variables $\rY$, where the likelihood is given in terms of the density
$\dY(\ry \mid \parm)$. Furthermore, we will only study the case of a correctly
specified transformation $\h = \basisy^\top \parm$ and refer the reader to
\if1\blind
\cite{Hothorn_Kneib_Buehlmann_2014},
\else
ANONYMOUS
\fi
where consistency results for arbitrary $\h$ are given.

\begin{thm} \label{thm:consistency}
For $\rY_1, \dots, \rY_N \stackrel{\text{iid}}{\sim} \Prob_{\rY, \parm_0}$
and under the assumptions \textnormal{(A1)} the parameter space $\Theta$ is compact and
\textnormal{(A2)} $\Ex_{\parm_0} [\sup_{\parm \in \Theta} |\log(\dY(\rY \mid \parm))|] < \infty$
where $\parm_0$ is well-separated: 
\begin{eqnarray*}
\sup_{\parm; |\parm - \parm_0| \ge \epsilon} \Ex_{\parm_0} [\log(\dY(\rY \mid \parm))] <
  \Ex_{\parm_0}[\log(\dY(\rY \mid \parm_0))],
\end{eqnarray*}
the sequence of estimators $\hat{\parm}_N$ converges to $\parm_0$ in
probability, $\hat{\parm}_N \stackrel{\Prob}{\rightarrow} \parm_0$, as $N
\to \infty$. 
\end{thm}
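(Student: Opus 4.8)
\textbf{Proof strategy.} The plan is to establish consistency by the standard argmax (Wald-type) argument for maximum likelihood estimators. Define the population objective $M(\parm) := \Ex_{\parm_0}[\log(\dY(\rY \mid \parm))]$ and its empirical counterpart $M_N(\parm) := N^{-1} \sum_{i=1}^N \log(\dY(\rY_i \mid \parm))$, so that $\hat{\parm}_N = \argmax_{\parm \in \Theta} M_N(\parm)$. The argument rests on two ingredients: first, that $\parm_0$ is the unique, well-separated maximiser of $M$; and second, that $M_N$ converges to $M$ uniformly over the parameter space $\Theta$. Given both, the standard argmax theorem for M-estimators delivers $\hat{\parm}_N \stackrel{\Prob}{\rightarrow} \parm_0$, since any sequence of maximisers $\hat{\parm}_N$ then satisfies $M(\hat{\parm}_N) \to M(\parm_0)$, and well-separation forces $|\hat{\parm}_N - \parm_0| \stackrel{\Prob}{\rightarrow} 0$.

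\textbf{Identification of the population maximiser.} For the first ingredient I would note that, by Jensen's inequality, $M(\parm) - M(\parm_0) = \Ex_{\parm_0}[\log(\dY(\rY \mid \parm)/\dY(\rY \mid \parm_0))]$ equals the negative Kullback--Leibler divergence between $\Prob_{\rY, \parm_0}$ and $\Prob_{\rY, \parm}$ and is therefore non-positive, vanishing only when $\dY(\cdot \mid \parm)$ and $\dY(\cdot \mid \parm_0)$ agree almost everywhere. Identifiability of $\parm$ through the basis $\basisy$ then forces $\parm = \parm_0$, so that $\parm_0$ uniquely maximises $M$. The strict separation $\sup_{|\parm - \parm_0| \ge \eps} M(\parm) < M(\parm_0)$ is precisely the well-separation clause of assumption (A2), combined with continuity of $M$ and compactness of $\Theta$ from (A1).

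\textbf{Uniform law of large numbers.} The substantive work is the second ingredient, namely $\sup_{\parm \in \Theta} |M_N(\parm) - M(\parm)| \stackrel{\Prob}{\rightarrow} 0$. I would obtain this by showing that the class $\{\, y \mapsto \log(\dY(y \mid \parm)) : \parm \in \Theta \,\}$ is Glivenko--Cantelli. The three sufficient requirements are: compactness of $\Theta$ (A1); continuity of the map $\parm \mapsto \log(\dY(y \mid \parm))$ for almost every $y$; and the existence of an integrable envelope, which is exactly the domination condition $\Ex_{\parm_0}[\sup_{\parm \in \Theta} |\log(\dY(\rY \mid \parm))|] < \infty$ stated in (A2). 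Continuity in $\parm$ follows from the linear parameterisation $\h = \basisy^\top \parm$ together with the standing assumptions on $\pZ$ (a log-concave density $\dZ$ with $0 < \dZ(\rz) < \infty$ and bounded derivatives): these make $\log \dZ$ and the Jacobian term $\log({\basisy^\prime}^\top \parm)$, and hence the log-density integrand in each of the exact-continuous, censored, and discrete cases, continuous functions of $\parm$.

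\textbf{Main obstacle.} The delicate step is the uniform law of large numbers rather than its consequences. Pointwise convergence $M_N(\parm) \to M(\parm)$ for fixed $\parm$ is immediate from the ordinary law of large numbers, but upgrading to uniform convergence over $\Theta$ is where the assumptions do their work: the integrable envelope in (A2) controls the tail contribution, continuity controls local oscillation, and compactness in (A1) prevents parameter escape. Once the class is shown to be Glivenko--Cantelli, the two ingredients combine mechanically to yield $\hat{\parm}_N \stackrel{\Prob}{\rightarrow} \parm_0$.
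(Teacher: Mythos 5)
Your proposal is correct and takes essentially the same route as the paper: the paper's proof consists of noting continuity of the log-likelihood in $\parm$ and the integrable domination from (A2), and then citing van der Vaart (1998), Theorem~5.8 (the well-separated argmax theorem) together with Example~19.7 (the compactness--continuity--envelope Glivenko--Cantelli class) --- precisely the two ingredients you unpack in detail. The only addition on your side is the Jensen/Kullback--Leibler identification of $\parm_0$ as the population maximiser, which the paper does not need since well-separation is assumed directly in (A2).
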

\begin{proof}
The log-likelihood is continuous in $\parm$, and due to (A2), each log-likelihood
contribution is dominated by an integrable function. Thus,
the result follows from \cite{vdVaart_1998} (Theorem 5.8 with Example 19.7;
see note at bottom of page 46). 
\end{proof}

\begin{remark}
Assumption \textnormal{(A1)} is made for convenience, and relaxations of such a condition
are given in \cite{vdGeer_2000} or \cite{vdVaart_1998}. The assumptions
in \textnormal{(A2)} are rather weak: the first one holds if the functions $\basisy$ are
not arbitrarily ill-posed, and the second one holds if the
function $\Ex_{\parm_0} [\log(\dY(\rY \mid \parm))]$ is strictly convex in
$\parm$ (if the assumption would not hold, we would still have convergence
to the set $\mbox{argmax}_{\parm} \Ex_{\parm_0} [\log(\dY(\rY \mid \parm))]$).
\end{remark}

\begin{thm} \label{thm:normality}
Under the assumptions of Theorem~\ref{thm:consistency} and in addition
\textnormal{(A3)}
\begin{eqnarray*}
\Ex_{\parm_0} \left( \sup_\parm \left|\left| \frac{\partial \log \dY(\rY
  \mid \parm)}{\partial \parm} \right|\right|^2 \right) < \infty,
\end{eqnarray*}
\textnormal{(A4)} 
$\Ex_{\parm_0}(\basisy(\rY) \basisy(\rY)^\top)$ and (for the absolutely
continuous case $\mu =
  \lebesgue$ only) 
          $\Ex_{\parm_0}(\basisy^\prime(\rY) \basisy^\prime(\rY)^\top)$ are
          nonsingular, and
\textnormal{(A5)} 
$0 < \dZ < \infty$, $\sup | \dZ^\prime | < \infty$ and $\sup | \dZ^{\prime\prime}| <
\infty$,
the sequence $\sqrt{N}(\hat{\parm}_N - \parm_0)$ is asymptotically normal
with mean zero and covariance matrix 
\begin{eqnarray*}
\Sigma_{\parm_0} = \left(\Ex_{\parm_0}\left(-\frac{\partial^2 \log \dY(\rY \mid \parm)}{\partial \parm
\partial \parm^\top}\right)\right)^{-1},
\end{eqnarray*}
as $N \to \infty$.
\end{thm}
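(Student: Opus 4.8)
The plan is to obtain the result from the classical asymptotic normality theorem for maximum likelihood estimators (\cite{vdVaart_1998}, Theorem 5.41), exploiting the fact that the model is \emph{fully} parameterised, so that $\log \dY(\ry \mid \parm)$ is a smooth, explicitly differentiable function of $\parm$ with score (\ref{f:s_exact})/(\ref{f:s_interval}) and Hessian (\ref{f:F_exact})/(\ref{f:F_interval}) already at hand. First I would record that consistency, $\hat{\parm}_N \stackrel{\Prob}{\rightarrow} \parm_0$, is already furnished by Theorem~\ref{thm:consistency}; combined with compactness of $\Theta$ (A1), this places $\hat{\parm}_N$ in the interior of $\Theta$ with probability tending to one, so that $\hat{\parm}_N$ solves the score equation $\sum_{i=1}^N \s(\hat{\parm}_N \mid \rY \in \esAY_i) = \Null$.

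Next I would assemble the three ingredients driving the classical argument. First, the score is centred with finite second moment at $\parm_0$: $\Ex_{\parm_0}[\s(\parm_0 \mid \rY)] = \Null$ follows by differentiating the identity $\int \dY(\ry \mid \parm)\, d\measureY(\ry) = 1$ under the integral sign, while $\Ex_{\parm_0}(\s \s^\top) < \infty$ is exactly assumption (A3); the information identity (again differentiation under the integral) then gives $\mF(\parm_0) = \Ex_{\parm_0}(\s(\parm_0)\s(\parm_0)^\top) = \Ex_{\parm_0}(-\partial^2 \log \dY(\rY \mid \parm)/\partial\parm\partial\parm^\top)$. Second, $\mF(\parm_0)$ is nonsingular: reading off (\ref{f:F_exact}) and (\ref{f:F_interval}), nonsingularity reduces to that of $\Ex_{\parm_0}(\basisy(\rY)\basisy(\rY)^\top)$ and, in the continuous case, of $\Ex_{\parm_0}(\basisy^\prime(\rY)\basisy^\prime(\rY)^\top)$, which is precisely (A4), together with strict positivity and log-concavity of $\dZ$ from (A5). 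Third, a local domination condition: on a neighbourhood of $\parm_0$ the Hessian must be bounded in norm by a fixed $\Prob_{\rY,\parm_0}$-integrable function, which I would verify from (A5) (with $\dZ$ bounded away from $0$ and $\dZ, \dZ^\prime, \dZ^{\prime\prime}$ bounded) and the assumed boundedness of $\basisy^\top\parm$ and ${\basisy^\prime}^\top\parm$.

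Granting these, I would run the one-term Taylor expansion of the score around $\parm_0$, writing
\[
\Null = \frac{1}{\sqrt{N}} \sum_{i=1}^N \s(\parm_0 \mid \rY \in \esAY_i) + \left( \frac{1}{N} \sum_{i=1}^N \frac{\partial \s(\bar{\parm}_N \mid \rY \in \esAY_i)}{\partial \parm^\top} \right) \sqrt{N}(\hat{\parm}_N - \parm_0),
\]
with $\bar{\parm}_N$ on the segment joining $\hat{\parm}_N$ and $\parm_0$. The central limit theorem applied to the i.i.d., mean-zero, finite-variance score gives $\frac{1}{\sqrt{N}}\sum_i \s(\parm_0) \rightsquigarrow \ND(\Null, \mF(\parm_0))$; the uniform law of large numbers together with $\bar{\parm}_N \stackrel{\Prob}{\rightarrow} \parm_0$ gives $\frac{1}{N}\sum_i \partial \s(\bar{\parm}_N)/\partial\parm^\top \stackrel{\Prob}{\rightarrow} -\mF(\parm_0)$. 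Solving for $\sqrt{N}(\hat{\parm}_N - \parm_0)$ and invoking Slutsky's lemma yields asymptotic normality with covariance $\mF(\parm_0)^{-1} = \Sigma_{\parm_0}$, as claimed.

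The \textbf{main obstacle} I anticipate is the third ingredient: producing the integrable envelope for the Hessian so that the observed information converges uniformly at the random point $\bar{\parm}_N$. For the ``exact continuous'' contribution (\ref{f:F_exact}) this is immediate from (A5) and the bounded basis, but the censored/discrete contribution (\ref{f:F_interval}) carries the reciprocal of $\pZ(\basisy(\bar{\ry})^\top\parm) - \pZ(\basisy(\ubar{\ry})^\top\parm)$, and I would need this interval probability mass bounded away from zero uniformly over the neighbourhood. The cleanest route is to use compactness of $\Theta$ (A1) and continuity of $\pZ \circ \basisy^\top(\cdot)$ to keep these denominators bounded below on the relevant support, after which the envelope follows from the boundedness assumptions in (A5).
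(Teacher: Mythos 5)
Your proof is correct in outline, but it takes a genuinely different route from the paper's. The paper never Taylor-expands the score: it verifies that $\parm \mapsto \sqrt{\dY(\ry \mid \parm)}$ is continuously differentiable with a continuous second-moment matrix of the score, concludes via Lemma~7.6 of \cite{vdVaart_1998} that the family $\Prob_{\rY, \Theta}$ is differentiable in quadratic mean, checks nonsingularity of the Fisher information from (A4)--(A5), and then invokes Theorem~5.39 of \cite{vdVaart_1998}, whose only remaining requirements are consistency (Theorem~\ref{thm:consistency}) and a square-integrable Lipschitz envelope for the log-likelihood, which is exactly what (A3) supplies. You instead run the classical Cram\'er argument (essentially Theorem~5.41 of \cite{vdVaart_1998}): score equation, CLT for the score, uniform law of large numbers for the Hessian at an intermediate point, Slutsky. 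This is a valid alternative, and you correctly isolate its extra cost: one must dominate the Hessian contributions (\ref{f:F_exact}) and (\ref{f:F_interval}) by a fixed integrable function on a neighbourhood of $\parm_0$, keeping $\dZ$ along $\basisy^\top\parm$, the term $({\basisy^\prime}^\top\parm)^2$, and the discrete-case cell probabilities $\pZ(\basisy(\ry_k)^\top\parm) - \pZ(\basisy(\ry_{k-1})^\top\parm)$ bounded away from zero; compactness of $\Theta$, the assumed boundedness of $\basisy^\top\parm$ and ${\basisy^\prime}^\top\parm$, and (A5) indeed deliver this. Your route also needs the score to be centred and the information identity, i.e.\ twice differentiating under the integral, which the quadratic-mean route obtains essentially for free. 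The trade-off: your argument is more elementary and self-contained and exhibits the sandwich structure explicitly, while the paper's avoids all second-derivative domination, which is precisely why its hypotheses stop at (A3)--(A5). One imprecision to repair: compactness of $\Theta$ plus consistency does \emph{not} place $\hat{\parm}_N$ in the interior of $\Theta$ --- that requires $\parm_0$ itself to be an interior point; this assumption is implicit in both routes (Theorem~5.39 likewise demands an inner point), so you should state it as a hypothesis rather than derive it from (A1).
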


\begin{proof}
Because the map $\parm \mapsto \sqrt{\dY(\ry \mid \parm)}$ is continuously
differentiable in $\parm$ for all $\ry$ in both the discrete and
absolutely continuous case and the matrix
\begin{eqnarray*}
\Ex_{\parm_0}\left(\left[\frac{\partial \log \dY(\rY \mid \parm)}{\partial
\parm}\right] \left[\frac{\partial \log \dY(\rY \mid \parm)}{\partial
\parm}\right]^\top \right)
\end{eqnarray*}
is continuous in $\parm$ as given in (\ref{f:s_exact}) and
(\ref{f:s_interval}), the transformation family $\Prob_{\rY, \Theta}$ is
differentiable in quadratic mean with Lemma 7.6 in \cite{vdVaart_1998}.
Furthermore, assumptions (A4-5) ensure that the expected Fisher information
matrix is nonsingular at $\parm_0$. With the consistency and (A3), 
the result follows from Theorem~5.39 in 
\cite{vdVaart_1998}.
\end{proof}

\begin{remark}
Assumption \textnormal{(A4)} is valid for the densities $\dZ$ of 
the normal, logistic and minimum extreme value distribution. The Fisher
information (\ref{f:F_exact}) and (\ref{f:F_interval}) evaluated at the
maximum likelihood estimator $\hat{\parm}_N$ can be used to
estimate the covariance matrix $\Sigma_{\parm_0}$.
\end{remark}

\begin{thm} \label{thm:sigma}
Under the assumptions of Theorem~\ref{thm:normality} and assuming
  $\Ex_{\parm_0}|\mF(\parm_0\mid \rY)| < \infty$, 
a consistent estimator for $\Sigma_{\parm_0}$ is given by
\begin{eqnarray*}
\hat{\Sigma}_{\parm_0, N} = \left(N^{-1} \sum_{i = 1}^{N} \mF(\hat{\parm}_N \mid \rY_i)\right)^{-1}.
\end{eqnarray*}
\end{thm}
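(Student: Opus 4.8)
The plan is to establish convergence in probability of the averaged observed information $N^{-1}\sum_{i=1}^N \mF(\hat{\parm}_N \mid \rY_i)$ to the inverse asymptotic covariance $\Sigma_{\parm_0}^{-1}$, and then to pass to the inverse. Inspecting (\ref{f:F_exact}) and (\ref{f:F_interval}), each contribution is $\mF(\parm \mid \rY_i) = -\partial^2 \log \dY(\rY_i \mid \parm)/(\partial \parm \, \partial \parm^\top)$, so by the definition of $\Sigma_{\parm_0}$ in Theorem~\ref{thm:normality} we have $\Ex_{\parm_0}(\mF(\parm_0 \mid \rY)) = \Sigma_{\parm_0}^{-1}$. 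Since this matrix is nonsingular under (A4)--(A5) and matrix inversion is continuous on the open set of invertible matrices, the continuous mapping theorem reduces the claim to
\begin{eqnarray*}
N^{-1}\sum_{i=1}^N \mF(\hat{\parm}_N \mid \rY_i) \stackrel{\Prob}{\rightarrow} \Ex_{\parm_0}(\mF(\parm_0 \mid \rY)) = \Sigma_{\parm_0}^{-1}.
\end{eqnarray*}

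To prove this display I would separate the effect of evaluating at $\hat{\parm}_N$ rather than at $\parm_0$. At the true parameter, the $\rY_i$ are i.i.d.\ and $\Ex_{\parm_0}|\mF(\parm_0 \mid \rY)| < \infty$ by assumption, so the weak law of large numbers gives $N^{-1}\sum_i \mF(\parm_0 \mid \rY_i) \stackrel{\Prob}{\rightarrow} \Ex_{\parm_0}(\mF(\parm_0 \mid \rY))$ entrywise. The substantive step is the plug-in: I would invoke a uniform law of large numbers over a compact neighbourhood $U \subseteq \Theta$ of $\parm_0$. Provided $\parm \mapsto \mF(\parm \mid y)$ is continuous for each fixed $y$ and admits an integrable envelope $\sup_{\parm \in U} \| \mF(\parm \mid \rY) \| \le G(\rY)$ with $\Ex_{\parm_0} G(\rY) < \infty$, the class $\{ \mF(\parm \mid \cdot) : \parm \in U\}$ is Glivenko--Cantelli, so that $\sup_{\parm \in U} \| N^{-1}\sum_i \mF(\parm \mid \rY_i) - \Ex_{\parm_0} \mF(\parm \mid \rY) \| \stackrel{\Prob}{\rightarrow} 0$ and $\parm \mapsto \Ex_{\parm_0} \mF(\parm \mid \rY)$ is continuous. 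Combining this uniform convergence with the consistency $\hat{\parm}_N \stackrel{\Prob}{\rightarrow} \parm_0$ from Theorem~\ref{thm:consistency} then yields the display by the standard argument for M-estimators in \cite{vdVaart_1998}.

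Continuity of $\parm \mapsto \mF(\parm \mid y)$ is immediate from the explicit forms (\ref{f:F_exact}) and (\ref{f:F_interval}) together with the smoothness of $\pZ, \dZ, \dZ^\prime, \dZ^{\prime\prime}$ and of the basis functions $\basisy, \basisy^\prime$. I expect the main obstacle to be the construction of the integrable envelope $G$. Here I would exploit the structure of the information directly: for $\parm$ in a sufficiently small neighbourhood of $\parm_0$ the derivative $\basisy^\prime(\ry)^\top \parm = \h^\prime(\ry)$ remains bounded away from $0$ and $\infty$ (by the standing assumption on $\h$ and $\h^\prime$), while (A5) bounds $\dZ^\prime$ and $\dZ^{\prime\prime}$ and keeps $\dZ$ bounded away from $0$; the denominator $\pZ(\basisy(\bar{\ry})^\top\parm) - \pZ(\basisy(\ubar{\ry})^\top\parm)$ in the censored or discrete expression likewise stays bounded away from $0$ on $U$. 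Consequently each entry of $\mF(\parm \mid \ry)$ is dominated, uniformly over $U$, by $c_1 \|\basisy(\ry)\|^2 + c_2 \|\basisy^\prime(\ry)\|^2$ for finite constants $c_1, c_2$, and this envelope is integrable because $\Ex_{\parm_0}(\basisy(\rY)\basisy(\rY)^\top)$ and $\Ex_{\parm_0}(\basisy^\prime(\rY)\basisy^\prime(\rY)^\top)$ are finite under (A4). This supplies $G$ and completes the argument.
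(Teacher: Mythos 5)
Your proof follows essentially the same route as the paper's: a law of large numbers for $N^{-1}\sum_i \mF(\parm_0 \mid \rY_i)$ at the true parameter, then continuity of $\parm \mapsto \mF(\parm \mid \ry)$ combined with the consistency $\hat{\parm}_N \stackrel{\Prob}{\rightarrow} \parm_0$ from Theorem~\ref{thm:consistency}, and finally matrix inversion. If anything, your version is more complete than the paper's: the paper passes directly from pointwise continuity plus consistency to the plug-in convergence, leaving implicit precisely the locally uniform law of large numbers (Glivenko--Cantelli class with integrable envelope) that you construct explicitly via (A4)--(A5) and the boundedness of $\h$ and $\h^\prime$, so your envelope argument supplies the uniformity step the paper's two-line proof glosses over.
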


\begin{proof}
With the law of large numbers we have
\begin{eqnarray*}
N^{-1} \sum_{i = 1}^{N} \mF(\parm_0 \mid \rY_i) = 
N^{-1} \sum_{i = 1}^{N}  -\frac{\partial^2 \log
\dY(\rY_i \mid \parm)}{\partial \parm \partial \parm^\top}
\stackrel{\Prob}{\rightarrow} 
\Ex_{\parm_0}\left(-\frac{\partial^2 \log
\dY(\rY \mid \parm)}{\partial \parm \partial \parm^\top}\right) =
\Sigma_{\parm_0}^{-1}.
\end{eqnarray*}
Because the map $\parm \mapsto \mF(\parm \mid \ry)$ is continuous for all $\ry$
(as can be seen from (\ref{f:F_exact}) and (\ref{f:F_interval})),
the result follows with Theorem~\ref{thm:consistency}.
\end{proof}
Based on Theorems \ref{thm:consistency}-\ref{thm:sigma}, we can perform standard likelihood inference on the
model parameters $\parm$.  In particular, we can construct confidence
intervals and confidence bands for the conditional distribution function
from confidence intervals and bands for the linear functions $\basisy^\top
\parm$.  We complete this part by formally defining the class of
transformation models.
\begin{defn}[Transformation model]
The triple $(\pZ, \basisy, \parm)$ is called transformation model.
\end{defn}
The transformation model $(\pZ, \basisy, \parm)$ fully defines the
distribution of $\rY$ via $\pY = \pZ \circ \basisy^\top \parm$ and thus the
corresponding likelihood $\lik(\basisy^\top \parm \mid \rY \in (\ubar{\ry},
\bar{\ry}])$.  Our definition of transformation models as $(\pZ, \basisy, \parm)$ is
strongly tied to the idea of structural inference \citep{Fraser_1968} and 
\if0\compact
group families \citep{Lehmann_1983} or group models \citep{Bickel_Klaassen_Ritov_1993}.  
\else
group models \citep{Bickel_Klaassen_Ritov_1993}.  
\fi
\cite{Fraser_1968} described a measurement model $\Prob_\rY$ for $\rY$ by an error distribution
$\Prob_\rZ$ and a structural equation $\rY = \g \circ \rZ$, where $\g$ is a
linear function, thereby extending the location-scale family
\if0\compact
$\rY = \alpha + \sigma \rZ$ introduced by \cite{Fisher_1934} and refined by \cite{Pitman_1939}. 
\else
$\rY = \alpha + \sigma \rZ$.
\fi
Group models consist of distributions generated by possibly non-linear $\g$.
The main difference to these classical approaches is that we 
parameterise $\h$ instead of $\g = \h^{-1}$.
By extending the linear transformation functions $\g$ dealt with by
\cite{Fraser_1968} to non-linear transformations, 
we approximate the potentially non-linear transformation
functions $\h = \g^{-1} = \pZ^{-1} \circ \pY$ by $\basisy^\top \parm$,
with subsequent estimation of the parameters $\parm$. For given
parameters $\parm$, a sample from $\Prob_\rY$ can be drawn by the 
probability integral transform, \ie $\rZ_1, \dots, \rZ_N \stackrel{\text{iid}}{\sim} \Prob_\rZ$ is
drawn and then $\rY_i = \inf\{\ry \in \samY \mid \basisy(\ry)^\top \parm \ge
\rZ_i\}$. 
\if0\compact
This generalises the method published by \cite{Bender_Augustin_Blettner_2005}
from the Cox model to all conditional transformation models.
\fi

\subsection{Non-parametric Inference} \label{subsec:nonparam}

For continuous responses $\rY$, any unknown transformation $\h$ can be
approximated by Bernstein polynomials of increasing order
\citep{Farouki_2012}.  For uncensored and right-censored responses and under
the same conditions for $\pZ$ as stated in Subsection~\ref{subsec:param},
\cite{McLain_Ghosh_2013} showed that the non-parametric sieve maximum
likelihood estimator is consistent with rate of convergence $N^{2/5}$ for
$\h$ with continuous bounded second derivatives in unconditional and linear
transformation models (see Subsection~\ref{subsec:ltm}).  In the latter
class, the linear shift parameters $\shiftparm$ are asympotically normal and
semi-parametrically efficient.  Numerical approximations to the observed
Fisher information $\mF(\hat{\parm}_N \mid \rY \in (\ubar{\ry}, \bar{\ry}])$
were shown to lead to appropriate standard errors of $\hat{\shiftparm}_N$ by
\cite{McLain_Ghosh_2013}.  \cite{Hothorn_Kneib_Buehlmann_2014} established
the consistency of boosted non-parametric conditional transformation models
(see Subsection~\ref{subsec:conditional}).  For sieve maximum likelihood
estimation in the class of conditional transformation models, the techniques
employed by \cite{McLain_Ghosh_2013} require minor technical extensions,
which are omitted here.

In summary, the same limiting distribution arises under both the parametric
and the non-parametric paradigm for transformation functions parameterised or
approximated using Bernstein polynomials, respectively.  In the latter case,
the target is then the best approximated transformation function with
Bernstein polynomials, say $h_N^\star$ (where the index $N$ indicates that
we use a more complex approximation when $N$ increases).  If the
approximation error $h_N^\star - h$ is of smaller order than the convergence
rate of the estimator, the estimator's target becomes the true underlying
transformation function $h$, and otherwise a bias for estimating $h$
remains.

\section{Applications} \label{sec:appl}

The definition of transformation models tailored for specific situations
``only'' requires the definition of a suitable basis function $\basisy$ and
a choice of $\pZ$.  
%% We refer to such a transformation family or model as
%%``parametric'' when $\pZ$ and the basis function $\basisy$ correspond to a
%%distribution $\pY$ whose parameters are directly linked to the model
%%coefficients $\parm$.  A semi-parametric transformation model only partially
%%specifies parameters of $\pY$ through $\parm$, and a non-parametric model is
%%characterised by the invariance of $\hatpY$ with respect to $\pZ$.  
In this section, we will discuss specific 
%% parametric, semi-parametric and non-parametric
transformation models for unconditional and conditional distributions of
ordered 
\if0\compact
and unordered 
\fi
categorical, discrete and continuous responses $\rY$.  Note that the
likelihood function $\lik$ allows all these models to be fitted to
arbitrarily censored or truncated responses; for brevity, we will not 
elaborate on the details.

\subsection{Unconditional Transformation Models}

\paragraph{Finite Sample Space} For ordered categorical responses $\rY$ from a
finite sample space $\samY = \{\ry_1, \dots, \ry_K\}$, we assign one
parameter to each element of the sample space except $\ry_K$.  This
corresponds to the basis function $\basisy(\ry_k) = \evec_{K - 1}(k)$, where
$\evec_{K-1}(k)$ is the unit vector of length $K - 1$, with its
$k$th element being one.  The transformation function $\h$ is
\begin{eqnarray*}
\h(\ry_k) & = & \evec_{K - 1}(k)^\top \parm = \eparm_k \in \RR, \quad 1 \le k < K, \quad
\text{st} \quad \eparm_1 < \dots <
\eparm_{K – 1},
\end{eqnarray*}
with $\h(\ry_K) = \infty$,  and the unconditional distribution function of $\pY$ is 
$\pY(\ry_k) = \pZ(\eparm_k)$. This parameterisation underlies the common
proportional odds and proportional hazards model for ordered categorical
data \citep{Tutz_2012}. Note that monotonicity of $\h$ is guaranteed by the
$K - 2$ linear constraints $\eparm_2 - \eparm_1 > 0, \dots, \eparm_{K -1} -
\eparm_{K -2} > 0$ when constrained optimisation is performed.
In the absence of censoring or truncation and with $\eparm_0 = -\infty, \eparm_K
= \infty$, we
obtain the maximum likelihood estimator for $\parm$ as
\begin{eqnarray*}
\hat{\parm}_N & = & \argmax_{\eparm_1 < \dots < \eparm_{K - 1}}
\sum_{i = 1}^N \log(\pZ(\eparm_{k(i)}) - \pZ(\eparm_{k(i) - 1})) \\
& = & \left(\pZ^{-1}\left(N^{-1} \sum_{i = 1}^N \I(\rY_i \le \ry_1)\right), \dots, 
            \pZ^{-1}\left(N^{-1} \sum_{i = 1}^N \I(\rY_i \le \ry_{K - 1})\right)\right)^\top
\end{eqnarray*}
because $\hat{\pi}_k = N^{-1}\sum_{i = 1}^N \I(\rY_i = \ry_k), 1  \le k < K$ 
maximises the equivalent multinomial (or empirical) log-likelihood
$\sum_{i = 1}^N \log(\pi_{k(i)})$,
and we can rewrite this estimator as 
\begin{eqnarray*}
\hat{\pi}_k = N^{-1}\left(\sum_{i = 1}^N \I(\rY_i
\le \ry_k) - \I(k > 1)\sum_{i = 1}^N \I(\rY_i \le \ry_{k - 1})\right), 1 \le k < K.
\end{eqnarray*}
The estimated distribution function $\hatpY = \pZ \circ \hat{\h}_N$ is invariant with respect to
$\pZ$.
%%, thus the model is non-parametric. 
Assumption (A4) is valid for these basis functions because we
have $\Ex_{\parm_0}( \evec_{K - 1}(\rY)  \evec_{K - 1}(\rY)^\top) = \diag(\Prob(\rY =
\ry_k)), 1 \le k < K$ for $\rY \sim \Prob_{\rY, \parm_0}$.

If we define the sample space $\samY$ as the set of unique observed values
and the probability measure as the empirical cumulative distribution
function (ECDF), putting mass $N^{-1}$ on each observation, we see that this
particular parameterisation is equivalent to an empirical likelihood approach
and we get $\hat{\h}_N = \pZ^{-1} \circ \text{ECDF}$.  Note
that although the transformation function depends on the choice of $\pZ$,
the estimated distribution function $\hatpY = \pZ \circ \hat{\h}_N = \text{ECDF}$ 
does not and is simply the non-parametric
empirical maximum likelihood estimator.  A
smoothed version of this estimator for continuous responses is discussed in the next paragraph.

\paragraph{Infinite Sample Space} For continuous responses $\rY$, the
parameterisation $\h(\ry) = \basisy(\ry)^\top \parm$, and thus also
$\hatpY$, should be smooth in $\ry$; therefore, any polynomial or spline basis is a
suitable choice for $\basisy$.  For the empirical experiments in
Section~\ref{sec:empeval}, we applied Bernstein polynomials \citep[for an
Overview, see][]{Farouki_2012} of order $M$
($\dimparm = M + 1$) defined on the interval $[\ubar{\imath}, \bar{\imath}]$ with 
\begin{eqnarray*}
\bern{M}(\ry) & = & (M + 1)^{-1}(f_{\text{Be}(1, M + 1)}(\tilde{\ry}), \dots, 
                            f_{\text{Be}(m, M - m + 1)}(\tilde{\ry}), \dots, 
                            f_{\text{Be}(M + 1, 1)}(\tilde{\ry}))^\top \in \RR^{M + 1} \\
\h(\ry) & = & \bern{M}(\ry)^\top \parm = 
              \sum_{m = 0}^{M} \eparm_m f_{\text{Be}(m + 1, M - m + 1)}(\tilde{\ry}) / (M + 1) \\
\h^\prime(\ry) & = & \bern{M}^\prime(\ry)^\top \parm = 
              \sum_{m = 0}^{M - 1} (\eparm_{m + 1} - \eparm_m) f_{\text{Be}(m + 1, M - m)}(\tilde{\ry}) M /
((M + 1) (\bar{\imath} - \ubar{\imath})),
\end{eqnarray*}
where $\tilde{\ry} = (\ry -\ubar{\imath}) / (\bar{\imath} - \ubar{\imath}) \in [0,
1]$ and $f_{\text{Be}(m, M)}$ is the density of the Beta distribution with
parameters $m$ and $M$.  This choice is computationally attractive because
strict monotonicity can be formulated as a set of $M$ linear constraints on the
parameters $\eparm_m < \eparm_{m + 1}$ for all $m = 0, \dots, M$
\citep{Curtis_Ghosh_2011}.  Therefore, application of constrained optimisation guarantees 
monotonic estimates $\hat{\h}_N$. The basis
contains an intercept.  We obtain smooth plug-in estimators for the
distribution, density, hazard and cumulative hazard functions as
$\hatpY = \pZ \circ \bern{M}^\top \hat{\parm}_N$, 
$\hatdY = \dZ \circ \bern{M}^\top \hat{\parm}_N \times {\bern{M}^\prime}^\top
\hat{\parm}_N$, $\hathazY = \hatdY / (1 - \hatpY)$ and 
$\hatHazY = -\log(1 - \hatpY)$.
\rev{The estimator $\hatpY = \pZ \circ \bern{M}^\top \hat{\parm}_N$ must
not be confused with the estimator $\hatpY = \bern{M}^\top \hat{\pvec}$
for $\rY \in [0, 1]$ obtained from the smoothed empirical distribution function 
with coefficients $\hat{\pvec}_{m + 1} = \sum_{i = 1}^N \I(\rY_i \le m/M)
/ N$ corresponding to probabilities evaluated at the quantiles $m / M$ for
$m = 0, \dots, M$ \citep{Babu_Canty_Chaubey_2002}.}

The question arises how the degree of the polynomial affects the estimated
distribution function.  On the one hand, the model $(\Phi, \bern{1}, \parm)$ only allows
linear transformation functions of a standard normal and $\pY$ is restricted to the
normal family.  On the other hand, $(\Phi, \bern{N - 1}, \parm)$ has one
parameter for each observation and $\hatpY$ is the non-parametric maximum
likelihood estimator $\text{ECDF}$, which, by the Glivenko-Cantelli lemma,
converges to $\pY$.  In this sense, we cannot choose a ``too large'' value for $M$. This
is a consequence of the monotonicity constraint on the estimator 
$\basisy^\top \hat{\parm}_N$, which, in this extreme case, just interpolates
the step function $\pZ^{-1} \circ \text{ECDF}$. Empirical evidence for the
insensitivity of results when $M$ is large can be found in \cite{vign:mlt.docreg}
and in the discussion. 

\subsection{Conditional Transformation Models} \label{subsec:conditional}

In the following, we will discuss a cascade of increasingly complex transformation
models where the transformation function $\h$ may depend on explanatory
variables $\rX \in \samX$.  We are interested in estimating the conditional
distribution of $\rY$ given $\rX = \rx$.  The corresponding distribution
function $\pYx$ can be written as $\pYx(\ry) = \pZ(\h(\ry \mid \rx))$.  The
transformation function $\h(\cdot \mid \rx): \samY \rightarrow \RR$ is said to
be conditional on $\rx$.  Following the arguments presented in the proof of
Corollary~\ref{thm1}, it is easy to see that for each $\rx$, there exists a
strictly monotonic transformation function $\h(\cdot \mid \rx) = \pZ^{-1} \circ
\pYx$ such that $\pYx(\ry) = \pZ(\h(\ry \mid \rx))$. Because this class of
conditional transformation models and suitable parameterisations were
introduced by \cite{Hothorn_Kneib_Buehlmann_2014}, we will only sketch the most
important aspects here.

Let $\basisx: \samX \rightarrow \RR^Q$ denote a basis transformation of the
explanatory variables. The joint basis for both $\ry$ and $\rx$ 
is called $\basisyx: \samY \times \samX \rightarrow \RR^{d(\dimparm,
\dimparmx)}$; its dimension $d(\dimparm, \dimparmx)$ depends on the way
the two basis functions $\basisy$ and $\basisx$ are combined (\eg
$\basisyx = (\basisy^\top, \basisx^\top)^\top \in
\RR^{\dimparm + \dimparmx}$ or $\basisyx = (\basisy^\top \otimes \basisx^\top)^\top
\in \RR^{\dimparm\dimparmx}$). The conditional transformation function
is now parameterised as $\h(\ry \mid \rx) = \basisyx(\ry, \rx)^\top \parm$.
One important special case is the simple transformation function
$\h(\ry \mid \rx) = \hY(\ry) + \hx(\rx)$, where the explanatory variables
only contribute a shift $\hx(\rx)$ to the conditional transformation
function. Often this shift is assumed to be linear in $\rx$; therefore, we
use the function $\m(\rx) = \basisx(\rx)^\top
\shiftparm = \tilde{\rx}^\top \shiftparm$ to denote linear shifts. 
Here $\basisx(\rx) = \tilde{\rx}$ is one row of the design matrix without
intercept. These
simple models correspond to the joint basis
$\basisyx(\ry, \rx)^\top \parm = \basisy(\ry)^\top \parm_1 +
\basisx(\rx)^\top \parm_2$, with $\hY(\ry) = \basisy(\ry)^\top \parm_1$
and $\hx(\rx) = \basisx(\rx)^\top \parm_2 = \m(\rx) = \tilde{\rx}^\top \shiftparm$.
The results presented in Section~\ref{sec:mlt}, including Theorems~\ref{thm:consistency}, 
\ref{thm:normality} and \ref{thm:sigma}, carry over in the fixed design case when $\basisy$ is
replaced by $\basisyx$.

In the rest of this section, we will present classical models that can be embedded
in the larger class of conditional transformation models and some novel models
that can be implemented in this general framework.

\subsection{Classical Transformation Models} \label{subsec:ltm}

\paragraph{Linear Model} The normal linear regression model 
$\rY \sim \ND(\alpha + \m(\rx), \sigma^2)$ with conditional distribution function
\if0\compact
\begin{eqnarray*}
\pYx(\ry) = \Phi\left(\frac{\ry - (\alpha + \m(\rx))}{\sigma}\right)
\end{eqnarray*}
\else 
$\pYx(\ry) = \Phi(\sigma^{-1}(\ry - (\alpha + \m(\rx))))$
\fi
can be understood as a transformation model with transformation function
$\h(\ry \mid \rx) = \ry / \sigma - \alpha/\sigma - \m(\rx) / \sigma$
parameterised via basis functions $\basisy(\ry) = (\ry, 1)^\top,
\basisx(\rx) = \tilde{\rx}$ and $\basisyx = (\basisy^\top,
\basisx^\top)^\top$ with parameters $\parm = (\sigma^{-1}, -\sigma^{-1}
\alpha, -\sigma^{-1} \shiftparm^\top)^\top$ under the constraint $\sigma >
0$, or in more compact notation $(\pN, (\ry, 1, \tilde{\rx}^\top)^\top,
\parm)$.  The parameters of the model are the inverse
standard deviation and the inverse negative coefficient of variation instead of the
mean and variance of the original normal distribution.
For ``exact continuous'' observations, the likelihood $\lik$ is equivalent to least-squares,
which can be maximised with respect to $\alpha$ and $\shiftparm$
without taking $\sigma$ into account. This is not possible for
censored or truncated observations, where we need to evaluate the conditional
distribution function that depends on all parameters; this model is
called Type I Tobit model 
\if0\compact
\citep{Tobin_1958} 
\fi
 (although only the likelihood changes under censoring
and truncation, but the model does not). Using an alternative
basis function $\basisyx$ would allow
arbitrary non-normal conditional distributions of $\rY$ and the
simple shift model $\basisyx(\ry, \rx)^\top \parm = \basisy(\ry)^\top \parm_1 +
\basisx(\rx)^\top \parm_2$ is then a generalisation of additive models and 
leads to the interpretation
\if0\compact
\begin{eqnarray*}
\Ex_{\rY \mid \rX = \rx}(\basisy(\rY)^\top \parm_1) = - \basisx(\rx)^\top \parm_2.
\end{eqnarray*}
\else
$\Ex_{\rY \mid \rX = \rx}(\basisy(\rY)^\top \parm_1) = - \basisx(\rx)^\top
\parm_2.$
\fi
The choice $\basisy = (1, \log)^\top$ implements the log-normal model for
$\rY > 0$. Implementation of a Bernstein basis $\basisy = \bern{M}$ allows arbitrarily shaped
distributions, \ie a transition from the normal family to the transformation
family, and thus likelihood inference on $\parm_2$ without strict assumptions on the
distribution of~$\rY$. \rev{The transformation $\bern{M}(\ry)^\top \parm_1$ must 
increase monotonically in $\ry$. Maximisation of the log-likelihood under the
linear inequality constraint $\mD_{M + 1} \parm_1 > 0$, with $\mD_{M + 1}$ representing first-order
differences, implements this requirement.}
%% The choice $\basisy = (\log, \bern{M}^\top)$ might be useful.
% schon in discussion: \TODO{Note: $\lambda$ is non-linear although trafo is linear.}

\paragraph{Continuous ``Survival Time'' Models} %% Accelerated Failure Time Models}
For a continuous response $\rY > 0$, the model
\if0\compact
\begin{eqnarray*}
\pYx(\ry) = \pZ\left(\frac{\log(\ry) - (\alpha + \m(\rx))}{\sigma}\right)
\end{eqnarray*}
\else
$\pYx(\ry) = \pZ(\sigma^{-1}(\log(\ry) - (\alpha + \m(\rx))))$
\fi
with basis functions $\basisy(\ry) = (1, \log(\ry))$ and $\basisx(\rx) =
\tilde{\rx}$ and parameters $\parm = (-\alpha, \sigma^{-1}, -\shiftparm^\top )^\top$ under the constraint $\sigma > 0$ is
called the accelerated failure time (AFT) model.  The model $(\pMEV, (1, \log,
\tilde{\rx}^\top)^\top, (-\eparm_1, 1, -\shiftparm^\top)^\top)$ 
with $\sigma \equiv 1$ (and thus
fixed transformation function $\log$) is the exponential AFT model because it implies
an exponential distribution of $\rY$. When the parameter $\sigma > 0$ is
estimated from the data,  the model $(\pMEV, (1, \log,
\tilde{\rx}^\top)^\top, \parm)$ is called the Weibull model, $(\pSL, (1, \log,
\tilde{\rx}^\top)^\top, \parm)$ is the log-logistic AFT model and $(\Phi,
(1, \log, \tilde{\rx}^\top)^\top, \parm)$ is the log-normal AFT model. 
%% These models are parametric.
%% \paragraph{Proportional Hazards or Cox Model} 
For a continuous (not
necessarily positive) response $Y$, the model $\pYx(\ry) = \pMEV(\hY(\ry) -
\m(\rx))$ is called the proportional hazards, relative risk or Cox model.  The
transformation function $\hY$ equals the log-cumulative baseline hazard and
is treated as a nuisance parameter in the partial likelihood framework, where
only the regression coefficients $\shiftparm$ are estimated.  Given
$\hat{\shiftparm}$, non-parametric maximum likelihood estimators are
typically applied to obtain $\hat{\h}_\rY$.  Here, we parameterise this
function as $\hY(\ry) = \log(\HazY(\ry)) = \basisy(\ry)^\top \parm_1$ (for
example, using $\basisy = \bern{M}$) and fit all parameters in the model
$(\pMEV, (\basisy^\top, \tilde{\rx}^\top)^\top, (\parm_1^\top,
-\shiftparm^\top)^\top)$ simultaneously.  The model is highly popular
because $\m(\rx)$ is the log-hazard ratio to
$\m(\Null)$. For the special case of right-censored survival times,
this parameterisation of the Cox model was studied theoretically and
empirically by \cite{McLain_Ghosh_2013}.
%% The choice $\basisy = (\log, \bern{M}^\top)$ might be useful.
%% \paragraph{Proportional Odds Model} 
Changing the distribution function in
the Cox model from $\pMEV$ to $\pSL$ results in the proportional odds model
$(\pSL, (\basisy^\top,
\tilde{\rx}^\top)^\top,(\parm_1^\top,-\shiftparm^\top)^\top)$; its name comes from the
interpretation of $\m(\rx)$ as the constant log-odds ratio of the odds 
$\oY(\ry \mid \rX = \rx)$ and $\oY(\ry \mid \rx = \Null)$.
\rev{An additive hazards model with the conditional hazard function $\hazY(\ry \mid
\rX = \rx) = \hazY(\ry \mid \rX = \Null) - \tilde{\rx}^\top \shiftparm$ results
from the choice $\pZ(\rz) = \pExp(\rz) = 1 - \exp(-\rz)$ \citep{ArandaOrdaz_1983}
under the additional constraint $\hazY(\ry \mid \rX = \rx) > 0$.
In this case, the function $\basisy(\ry)^\top\parm_1 > 0$ is the positive baseline cumulative hazard function 
$\HazY(\ry \mid \rX = \Null)$.}

\paragraph{Discrete Models} %% Discrete Proportional Hazards and Odds Models} 
For ordered categorical responses $\ry_1 < \dots < \ry_K$, the conditional distribution
$\pYx(\ry_k) = \pZ(\hY(\ry_k) - \m(\rx))$ is a transformation model with
$\basisy(\ry_k) = \evec_{K - 1}(k)$.  The model $(\pSL, (\basisy^\top,
\tilde{\rx}^\top)^\top, (\parm_1^\top, -\shiftparm^\top)^\top)$ is called the discrete proportional odds model
and $(\pMEV, (\basisy^\top, \tilde{\rx}^\top)^\top, (\parm_1^\top,
-\shiftparm^\top)^\top)$ is the discrete
proportional hazards model.  Here, $\m(\rx)$ is the log-odds ratio or log-hazard
ratio to $\m(\Null)$ independent of $k$; details are given in 
\cite{Tutz_2012}.
%% \paragraph{Binary Regression} 
For the special case of a binary response ($K
= 2$), the transformation model $(\pSL, (\I(k = 1),
\tilde{\rx}^\top)^\top, (\eparm_1, -\shiftparm^\top)^\top)$ 
is the logistic regression model, $(\pN,
(\I(k = 1), \tilde{\rx}^\top)^\top, (\eparm_1, -\shiftparm^\top)^\top)$ is the probit model and
$(\pMEV, (\I(k = 1), \tilde{\rx}^\top)^\top, (\eparm_1,
-\shiftparm^\top)^\top)$ is called the
complementary log-log model.  Note that the transformation function $\hY$ is
given by the basis function $\basisy = \I(k = 1)$, \ie $\eparm_1$ is just the
intercept. The connection between standard binary regression models and
transformation models is explained in more detail by \cite{Doksum_Gasko_1990}.

\paragraph{Linear Transformation Model} The transformation model
$(\pZ, (\basisy^\top, \tilde{\rx}^\top)^\top, (\parm_1^\top,
-\shiftparm^\top)^\top)$ for any $\basisy$ and
$\pZ$ is called the linear transformation model and contains all models
discussed in this subsection.  Note that the transformation
of the response $\hY(\ry) = \basisy(\ry)^\top \parm_1$ is non-linear in all
models of interest (AFT, Cox, \etc), and the term ``linear'' only refers to a
linear shift $\m(\rx)$ of the explanatory variables.  Partially
linear or additive transformation models allow non-linear shifts
as part of a partially smooth basis $\basisx$, \ie in the form of an
additive model. The number of constraints only depends on the basis
$\basisy$ but not on the explanatory variables.

\subsection{Extension of Classical Transformation Models}

A common property of all classical transformation models is the additivity
of the response transformation and the shift, \ie the decomposition
$\h(\ry \mid \rx) = \hY(\ry) + \hx(\rx)$ of the conditional transformation
function. This assumption is relaxed by the following extensions of the
classical models. Allowing for deviations from this simple model is also the
key aspect for the development of novel transformation models in the rest of
this section.

\paragraph{Discrete Non-Proportional Odds and Hazards Models} For ordered
categorical responses, the model $\pYx(\ry_k) = \pZ(\hY(\ry_k) - \m_k(\rx))$
allows a category-specific shift $\m_k(\rx) = \tilde{\rx}^\top
\shiftparm_k$; with $\pSL$, this cumulative model is called the non-proportional odds model, and
with $\pMEV$, it is the non-proportional hazards model.  Both models can be cast into
the transformation model framework by defining the joint basis
$\basisyx(\ry_k, \rx) =
(\basisy(\ry_k)^\top, \basisy(\ry_k)^\top \otimes \basisx(\rx)^\top)^\top$ as the Kronecker
product of the two simple basis functions $\basisy(\ry_k) = \evec_{K -
1}(k)$ and $\basisx(\rx) = \tilde{\rx}$ (assuming that $\basisx$ does not
contain an intercept term). Note that the conditional transformation
function $\h(\ry \mid \rx)$ includes an interaction term between $\ry$ and
$\rx$. 
\if0\compact
{
It is also worth noting that for unordered categorical responses
$\rY \in \samY = \{\ry_1, \dots, \ry_K\}$, the multinomial model can be
estimated by the model $(\pZ, \basisyx, \parm)$ under any ordering of the
response categories because the corresponding conditional density
\begin{eqnarray*}
\dYx(\ry_k) = 
\left\{
\begin{array}{ll}
\pZ(\basisyx(\ry_k, \rx)^\top \parm) & k = 1 \\
\pZ(\basisyx(\ry_k, \rx)^\top \parm) - \pZ(\basisyx(\ry_{k - 1}, \rx)^\top \parm) & 1 < k < K \\
1 - \pZ(\basisyx(\ry_{k - 1}, \rx)^\top \parm) & k = K
\end{array}
\right.
\end{eqnarray*}
is invariant with respect to the ordering applied. For $\pSL$, this
model is called the cumulative logit or partial proportional odds model
\citep{Tutz_2012}. The classical multinomial logit (with $\pZ = \pSL$) 
or probit (with $\pZ = \Phi$) models for
an unordered response $\rY \in \samY = \{\ry_1, \dots, \ry_K\}$ 
can be written as models for the density
\begin{eqnarray*}
\Prob(\rY = \ry_1 \mid \rX = \rx) & = & \pZ(\tilde{\rx}^\top \shiftparm_1) \\
\Prob(\rY = \ry_k \mid \rX = \rx) & = & \pZ(\tilde{\rx}^\top \shiftparm_k) - 
                         \pZ(\tilde{\rx}^\top \shiftparm_{k - 1}) \quad 1 < k < K \\
\Prob(\rY = \ry_K \mid \rX = \rx) & = & 1 - \pZ(\tilde{\rx}^\top \shiftparm_{K – 1}),
\end{eqnarray*}
where the parameters $\shiftparm_k$ correspond to parameters in
a cumulative model $\Prob(\rY \le \ry_k) = \pZ(\tilde{\rx}^\top
\shiftparm_k)$ for any ordering of the sample space. Of course, the
parameter estimates $\hat{\shiftparm}_k$ change when the ordering changes
and therefore must not be interpreted directly, 
but the estimated densities $\hat{\Prob}(\rY = \ry_k \mid \rX = \rx)$ are invariant
with respect to the ordering applied.
}
\fi

\paragraph{Time-Varying Effects} One often studied extension of the Cox
model is $\pYx(\ry) = \pZ(\hY(\ry) - \tilde{\rx}^\top \shiftparm(\ry))$,
where the regression coefficients $\shiftparm(\ry)$ may change with time
$\ry$.  The Cox model is included with $\shiftparm(\ry) \equiv \shiftparm$,
and the model is often applied to check the proportional hazards assumption. 
With a smooth parameterisation of time $\ry$, for example via $\basisy =
\bern{M}$, and linear basis $\basisx(\rx) = \tilde{\rx}$, the transformation
model $(\pMEV, (\basisy^\top, \basisy^\top \otimes \basisx^\top)^\top,
\parm)$ implements this Cox model with time-varying (linear) effects. This
model (with arbitrary $\pZ$) has also been presented in \cite{Foresi_Peracchi_1995} and 
is called distribution regression in \cite{Chernozhukov_2013}. 

\subsection{Novel Transformation Models}

Due to the broadness of the transformation family, it is straightforward to
set up new models for interesting situations by allowing more complex
transformation functions $\h(\ry \mid \rx)$. We will illustrate this
possibility for two simple cases – the independent two-sample situation
and regression models for count data.  The generic and most complex
transformation model is called the conditional transformation model and is explained
at the end of this section.

\paragraph{Beyond Shift Effects} Assume we observe samples from two groups
$A$ and $B$ and want to model the conditional distribution functions
$\pYA(\ry)$ and $\pYB(\ry)$ of the response $\rY$ in the two groups.  Based
on this model, it is often interesting to infer whether the two distributions are
equivalent and, if this is not the case, to characterise how they differ. 
Using an appropriate basis function $\basisy$ and the basis $\basisx(\rx) =
(1, \I(B))^\top$, the model $(\pZ, (\basisy^\top \otimes
\basisx^\top)^\top, \parm)$ parameterises the conditional transformation
function as $\h(\ry \mid A) = \basisy(\ry)^\top \parm_1$ and 
$\h(\ry \mid B) = \h(\ry \mid A) + \h_{B - A}(\ry)
= \basisy(\ry)^\top \parm_1 + \I(B)\basisy(\ry)^\top \parm_2$.  
Clearly, the second term is constant zero ($\h_{B - A}(y) \equiv 0$) iff the two distributions 
are equivalent ($\pYA(y) = \pYB(y)$ for all $\ry$).  For the deviation
function $\h_{B -
A}(y) = \basisy^\top \parm_2$, we can apply standard likelihood inference
procedures for $\hat{\parm}_2$ to construct a confidence band 
or use a test statistic like $\max(\hat{\parm}_2 /
\text{se}(\hat{\parm}_2))$ to assess deviations from zero.  If there is
evidence for a group effect, we can use the model to check whether the deviation
function is constant, \ie $\h_{B - A}(y) \equiv c \neq 0$.  In this case, the
simpler model $(\pZ, (\basisy^\top, \I(B))^\top, (\parm_1^\top,
-\beta)^\top)$ with shift $\beta = -\parm_2$ might be easier to interpret. 
This model actually corresponds to a normal ANOVA model with $\pZ = \Phi$ and $\basisy(\ry)^\top
= (1, \ry)^\top$ or the Cox proportional hazards model with $(\pMEV,
(\bern{M}^\top, \I(B))^\top, (\parm_1^\top, -\beta)^\top)$.

\paragraph{Count Regression ``Without Tears''} Simple models for count data
$\samY = \{0, 1, 2, \dots \}$ almost always suffer from over-dispersion or excess
zeros. The linear transformation model
\if0\compact
\begin{eqnarray*}
\pYx(\ry) = \pZ(\hY(\ry) - \m(\rx))
\end{eqnarray*}
\else
$\pYx(\ry) = \pZ(\hY(\ry) - \m(\rx))$
\fi
can be implemented using the basis function $\basisy(\ry) = 
\bern{M}(\lfloor \ry \rfloor)$, and then the parameters of the transformation 
model $(\pZ, (\basisy^\top,
\tilde{\rx}^\top)^\top, \parm)$ are not affected by over- or
under-dispersion because higher moments are handled by $\hY$ independently
of the effects of the explanatory variables $\m(\rx)$. If there are excess zeros, we
can set up a joint transformation model 
\if0\compact
\begin{eqnarray*}
\pYx(\ry) = \pZ(\hY(\ry) - \m(\rx) + \I(\ry = 0) (\alpha_0 - \m_0(\rx)))
\end{eqnarray*}
\else
$\pYx(\ry) = \pZ(\hY(\ry) - \m(\rx) + \I(\ry = 0) (\alpha_0 - \m_0(\rx)))$
\fi
such that we have a two-components mixture model consisting of the count distribution
\if0\compact
\begin{eqnarray*}
\pYx(y) = \pZ(\hY(\ry) - \m(\rx)), \ry \in \samY
\end{eqnarray*}
\else
$\pYx(y) = \pZ(\hY(\ry) - \m(\rx))$ for $y \in \samY$
\fi
and the probability of an excess zero
\begin{eqnarray*}
\dYx(0) = \pZ(\hY(0) - \m(\rx) + (\alpha_0 - \m_0(\rx)))
 = \pZ(\hY(0) + \alpha_0 - \tilde{\rx}^\top(\shiftparm + \shiftparm_0))
\end{eqnarray*}
when $\m_0(\rx) = \tilde{\rx}^\top \shiftparm_0$. Hence, the transformation analogue to a hurdle model with hurdle at zero is the transformation model
$(\pZ, (\basisy^\top, \tilde{\rx}^\top, \I(y = 0), \I(y = 0) \tilde{\rx}^\top)^\top, 
(\parm_1^\top, \shiftparm^\top, \alpha_0, \shiftparm_0^\top)^\top)$. 

\paragraph{Conditional Transformation Models} When the conditional
transformation function is parameterised by multiple basis functions
$\basisy_j(\ry), \basisx_j(\rx), j = 1, \dots J$ via the joint basis
\begin{eqnarray*}
\basisyx = (\basisy_1^\top \otimes \basisx_1^\top, \dots, \basisy_J^\top \otimes
\basisx_J^\top)^\top,
\end{eqnarray*}
models of the class $(\cdot, \basisyx, \parm)$ are called conditional transformation
models (CTMs) with $J$ partial transformation functions parameterised as $\basisy_j^\top
\otimes \basisx_j^\top$ and include all special cases discussed in this
section. 
It is convenient to assume monotonicity for each of the partial
transformation functions; thus, the linear constraints for
$\basisy_j$ are repeated for each basis function in $\basisx_j$
\rev{\citep[detailed descriptions of linear constraints for different models
in this class are available in][]{vign:mlt.docreg}.}
\if1\blind
\cite{Hothorn_Kneib_Buehlmann_2014} 
\else
ANONYMOUS
\fi
introduced this general model class and
proposed a boosting algorithm for the estimation of transformation functions
$\h$ for ``exact continuous'' responses $\rY$. In the
likelihood framework presented here, conditional transformation models can
be fitted under arbitrary schemes of censoring and truncation and classical
likelihood inference for the model parameters $\parm$ becomes feasible. Of
course, unlike in the boosting context, the number of model terms $J$ and their 
complexity are limited in the likelihood world because the likelihood does not 
contain any penalty terms that induce smoothness in the $\rx$-direction. 
\if0\compact
A more detailed overview on the class of conditional transformation models
can be found in \citep{Moest_Schmid_Faschingbauer_2014}.
\fi

\rev{A systematic overview of linear transformation models with potentially
response-varying effects is given in Table~\ref{overview}. Model 
nomenclature and interpretation of the corresponding model parameters is
mapped to specific transformation functions $\h$ and distribution
functions $\pZ$. To the best of our knowledge, models without names have not yet
been discussed in the literature, and their specific properties 
await closer investigation.}

\begin{sidewaystable}
\begin{center}
\begin{tabular}{lllcccc}
\toprule
& & & \multicolumn{4}{c}{$\pZ$} \\
& & & $\Phi$ & $\pSL$ & $\pExp$ & $\pMEV$ \\
\cmidrule{4-7}
$\samY$ & $\h$ & Meaning of & & & &  \\ 
\midrule
$K = 2$ & \multicolumn{5}{c}{\textbf{Binary Regression}} \\ 
& $\I(k = 1) \eparm_1 - \tilde{\rx}^\top \shiftparm$ & & probit BGLM & logistic BGLM & clog BGLM & cloglog BGLM \\
& & $\eparm_1$ &  & $\log(\oY(\ry \mid \rX = \Null))$  & $\HazY(\ry \mid \rX = \Null)$ & $\log(\HazY(\ry \mid
\rX = \Null))$ \\
& & $\shiftparm$ &  & log-OR & AH & log-HR \\

$K > 2$ &  \multicolumn{5}{c}{\textbf{Polytomous Regression}} \\ 
 & $ \evec_{K - 1}(k)^\top \parm_1 - \tilde{\rx}^\top \shiftparm$ & & & discrete PO & & discrete PH \\
& & $\evec_{K - 1}(k)^\top \parm_1$ & & $\log(\oY(\ry \mid \rX = \Null))$  & $\HazY(\ry \mid \rX = \Null)$ & $\log(\HazY(\ry \mid \rX = \Null))$ \\
& & $\shiftparm$ & & log-OR & AH & log-HR \\

  & $ \evec_{K - 1}(k)^\top \parm_1 - \tilde{\rx}^\top \shiftparm(k)$ & &  & non-PO & & non-PH \\
& & $\evec_{K - 1}(k)^\top \parm_1$ & & $\log(\oY(\ry \mid \rX = \Null))$  & $\HazY(\ry \mid \rX = \Null)$ & $\log(\HazY(\ry \mid \rX = \Null))$ \\

$\mathbb{N}$  &  \multicolumn{5}{c}{\textbf{Count Regression}} \\ 
& $ \bern{M}(\lfloor \ry \rfloor)^\top \parm_1 - \tilde{\rx}^\top \shiftparm$ & &  & & &  \\
& & $\bern{M}(\lfloor \ry \rfloor)^\top \parm_1$ & & $\log(\oY(\ry \mid \rX = \Null))$  & $\HazY(\ry \mid \rX = \Null)$ & $\log(\HazY(\ry \mid \rX = \Null))$ \\
& & $\shiftparm$ & & log-OR & AH & log-HR \\

  & $ \bern{M}(\lfloor \ry \rfloor)^\top \parm_1 - \tilde{\rx}^\top \shiftparm(\lfloor \ry \rfloor)$ & &  & & &  \\
& & $\bern{M}(\lfloor \ry \rfloor)^\top \parm_1$ & & $\log(\oY(\ry \mid \rX = \Null))$  & $\HazY(\ry \mid \rX = \Null)$ & $\log(\HazY(\ry \mid
\rX = \Null))$ \\

$\RR^+$  &  \multicolumn{5}{c}{\textbf{Survival Analysis}} \\ 
 & $ \log(\ry) + \eparm_1 - \tilde{\rx}^\top \shiftparm$ & & & & & Exponential AFT  \\
& & $\shiftparm$ & & log-OR & & log-HR \\

         & $ (1, \log(\ry))^\top \parm_1 - \tilde{\rx}^\top \shiftparm$ & & log-normal AFT & log-logistic AFT & & Weibull AFT   \\
& & $(1, \log(\ry))^\top \parm_1$ & & $\log(\oY(\ry \mid \rX = \Null))$ & $\HazY(\ry \mid \rX = \Null)$ & $\log(\HazY(\ry \mid \rX = \Null))$ \\
& & $\shiftparm$ & & log-OR & AH & log-HR \\

$\RR$  &  \multicolumn{5}{c}{\textbf{Continuous Regression and Survival Analysis}} \\ 
& $ \ry \eparm_1 - \eparm_2 - \tilde{\rx}^\top \shiftparm$ & & NLRM &  & &   \\
&  & $ \eparm_1^{-2} $ & variance &  & &   \\
&  & $ (\eparm_2 + \tilde{\rx}^\top \shiftparm) / \eparm_1 $ & mean &  & &   \\

      & $ \bern{M}(\ry)^\top \parm_1 - \tilde{\rx}^\top \shiftparm$ & &  & & Aalen AH & Cox PH  \\
& & $\bern{M}(\ry)^\top \parm_1 $ &  & $\log(\oY(\ry \mid \rX = \Null))$ & $\HazY(\ry \mid \rX = \Null)$ & $\log(\HazY(\ry \mid \rX = \Null))$ \\
& & $\shiftparm$ &  & log-OR & AH & log-HR \\

      & $ \bern{M}(\ry)^\top \parm_1 - \tilde{\rx}^\top \shiftparm(\ry)$ & & \multicolumn{3}{|c|}{Distribution Regression} & Time-varying Cox   \\
& & $\bern{M}(\ry)^\top \parm_1 $ &  & $\log(\oY(\ry \mid \rX = \Null))$ & $\HazY(\ry \mid \rX = \Null)$ & $\log(\HazY(\ry \mid \rx = \Null))$ \\

\bottomrule
\end{tabular}
\caption{Non-exhaustive Overview of Conditional Transformation Models. Abbreviations: Proportional hazards (PH),
proportional odds (PO), additive hazards (AH), odds ratio (OR), hazard ratio (HR), complementary log (clog), complementary log-log (cloglog),
normal linear regression model (NLRM), binary generalised linear model (BGLM), accelerated failure time
(AFT). \label{overview}}
\end{center}
\end{sidewaystable}

\section{Empirical Evaluation} \label{sec:empeval}

We will illustrate the range of possible applications of likelihood-based
%% parametric, semi-parametric and non-parametric
conditional transformation 
\if0\compacteval
{
models. In Subsection~\ref{subsec:sim}, we will present a small
simulation experiment highlighting the possible advantage of indirectly
modelling conditional distributions with transformation functions.

\subsection{Illustrations}

}
\else
{
models. Further applications and a simulation study can be found in 
\if0\blind
ANONYMOUS.
\else
\cite{Hothorn_Moest_Buehlmann_2016}.
\fi
}
\fi

\paragraph{Density Estimation: Old Faithful Geyser}

The duration of eruptions and the waiting time between eruptions of the Old
Faithful geyser in the Yellowstone National Park became a standard benchmark
for non-parametric density 
\if0\compact
estimation \citep[the original data were given by][]{Azzalini_Bowman_1990}. 
\else
estimation.
\fi
The nine parameters of the transformation model $(\Phi,
\bern{8}(\text{waiting}), \parm)$ were fitted by maximisation of the approximate
log-likelihood (treating the waiting times as ``exact'' observations) under
the eight linear constraints \rev{$\mD_9 \parm > 0$}. The model depicted in 
Figure~\ref{fig_1_geyser-u}A reproduces the classic bimodal unconditional
density of waiting time along with a kernel density estimate.  It is
important to note that the transformation model was fitted
likelihood-based, whereas the kernel density estimate relied on a
cross-validated bandwidth.  An unconditional density estimate for the
duration of the eruptions needs to deal with censoring because exact duration times
are only available for the daytime measurements.  At night, the
observations were either left-censored (``short'' eruption),
interval-censored (``medium'' eruption) or right-censored 
\if0\compact
(``long'' eruption) as explained by \cite{Azzalini_Bowman_1990}.  
\else
(``long'' eruption).
\fi
This censoring was widely
ignored in analyses of the Old Faithful data because most non-parametric
kernel techniques cannot deal with 
\if0\compact
censoring \citep[see for example][]{Hyndman_Yao_2002}.  
\else
censoring.
\fi
We applied the transformation model $(\Phi,
\bern{8}(\text{duration}), \parm)$ based on the exact log-likelihood function
under eight linear constraints and obtained the unconditional density depicted in 
Figure~\ref{fig_1_geyser-u}B. In \cite{vign:mlt.docreg}, results for
$M = 40$ are computed, which led to almost identical estimates of the distribution function.

\begin{figure}[t]
\begin{center}
\includegraphics{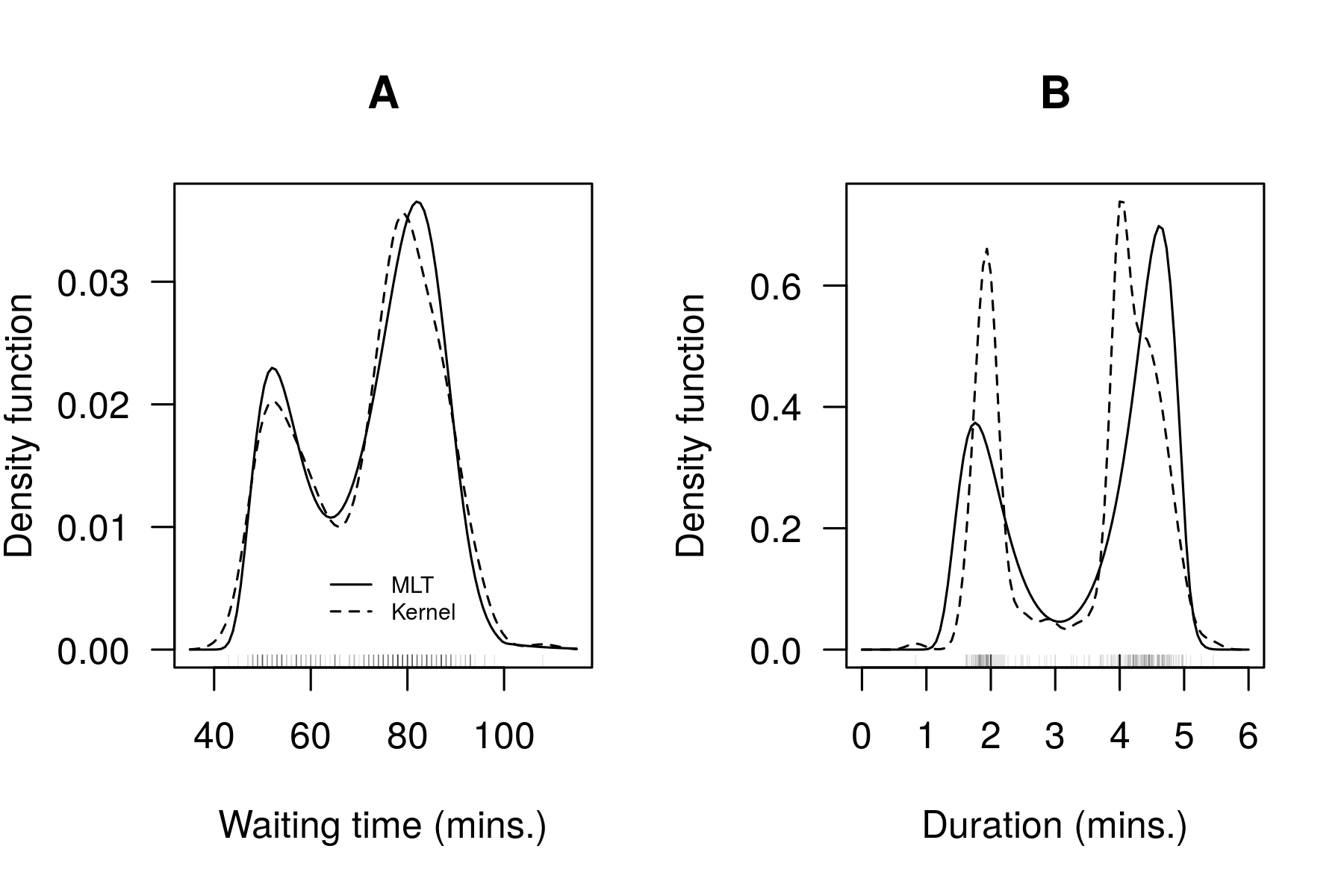}
\caption{Old Faithful Geyser. Estimated density for waiting times (A) between and duration (B) of 
         eruptions 
         by the most likely transformation model (MLT)
         and kernel 
\if0\compact
smoothing (function \code{npudens()} from package \pkg{np} in \proglang{R}). 
\else
smoothing.
\fi
         Note that the kernel estimator was based on the imputed duration times $2, 3$ and $4$ for
         short, medium and long eruptions at night (as are the rugs in B).
         \label{fig_1_geyser-u}}
\end{center}
\end{figure}

\if0\compact
{
In addition, we modelled the conditional distribution of the censored
duration times given the waiting times using the transformation model $(\Phi,
(\bern{7}(\text{duration})^\top \otimes \, \bernx{3}(\text{waiting})^\top)^\top,
\parm)$.  This conditional transformation model allows a smooth conditional
distribution function of duration smoothly varying with waiting time and was fitted by
maximisation of the exact log-likelihood under $7 \times 4$ linear constraints. The
corresponding conditional density in Figure~\ref{fig_2_geyser} shows that
the marginal bimodality can be explained by relatively long durations for
short waiting times and two clusters of short and long durations after
waiting times longer than $70$ minutes. Note that in this model, 
the choice of $\pZ$ does not influence the estimated 
conditional distribution or density function if the parameterisation of 
$\h$ is flexible enough to compensate for this change.

\begin{figure}[t]
\begin{center}
\includegraphics{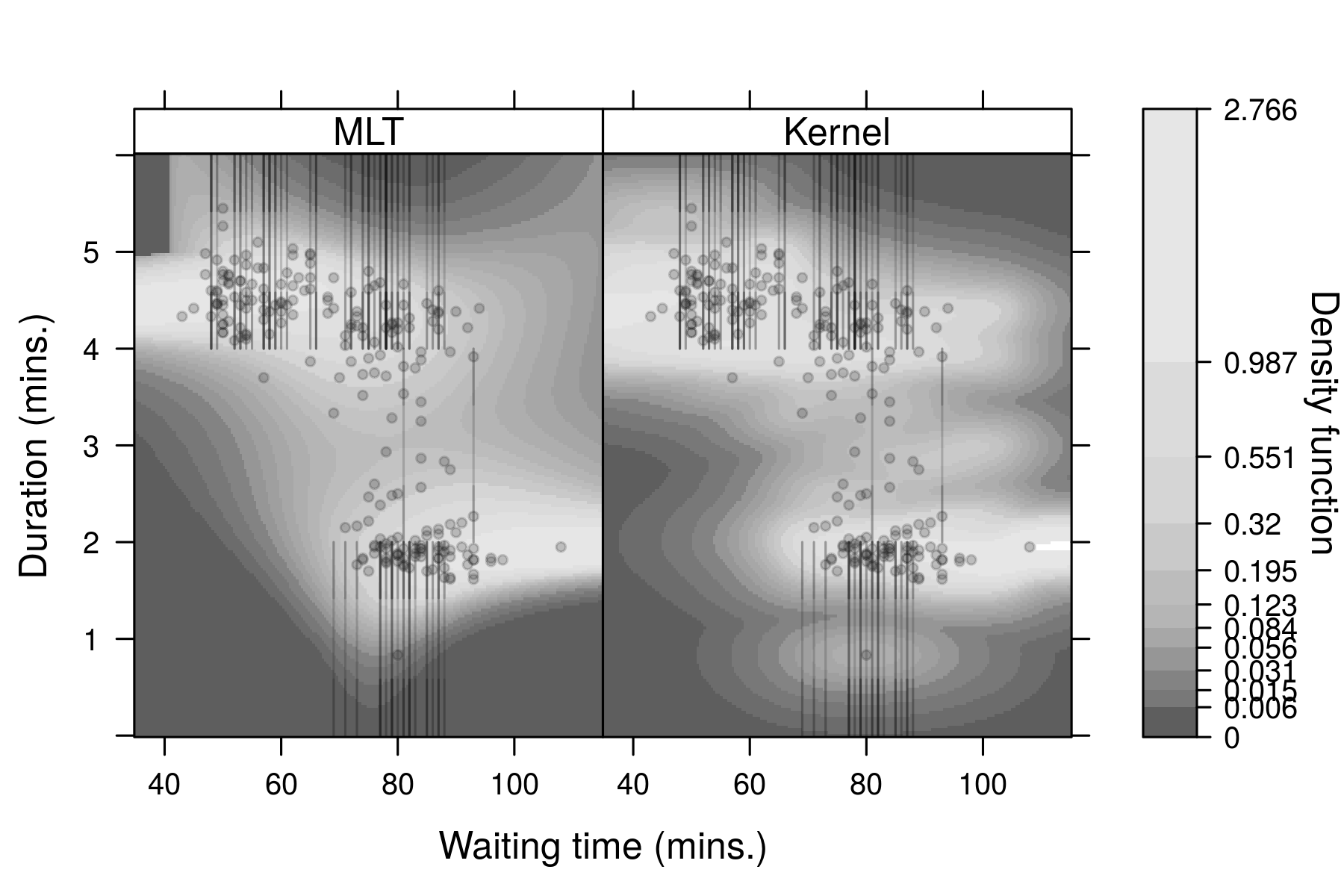}
\caption{Old Faithful Geyser. Conditional distribution of duration given waiting time
         estimated with a most likely transformation model (MLT) and a conditional 
         kernel density estimator (function \code{npcdens()} from package \pkg{np}). 
         Exact observations are given as dots, censored observations as lines.
         \label{fig_2_geyser}}
\end{center}
\end{figure}

}
\fi

\paragraph{Quantile Regression: Head Circumference}

The Fourth Dutch Growth Study 
\if0\compact
\citep{Fredriks_Buuren_Burgmeijer_2000} 
\fi
is a
cross-sectional study on growth and development of the Dutch population
younger than $22$ years.  \cite{Stasinopoulos_Rigby_2007} fitted
a growth curve to head circumferences (HC) of $7040$ boys using a GAMLSS
model with a Box-Cox $t$ distribution describing the first four moments of
head circumference conditionally on age.  The model showed evidence of
kurtosis, especially for older boys.  We fitted the same growth curves by the
conditional transformation model $(\Phi, (\bern{3}(\text{HC})^\top \otimes
\bernx{3}(\text{age}^{1/3})^\top)^\top, \parm)$ by maximisation of the approximate
log-likelihood under $3 \times 4$ linear constraints \rev{$(\mD_4 \otimes
\mI_4) \parm > 0$}. Figure~\ref{fig_4_HC} shows the
data overlaid with quantile curves obtained via inversion of the estimated
conditional distributions.  The figure very closely reproduces the growth
curves presented in Figure~16 of \cite{Stasinopoulos_Rigby_2007} and also
indicates a certain asymmetry towards older boys.

\begin{figure}[t]
\begin{center}
\includegraphics{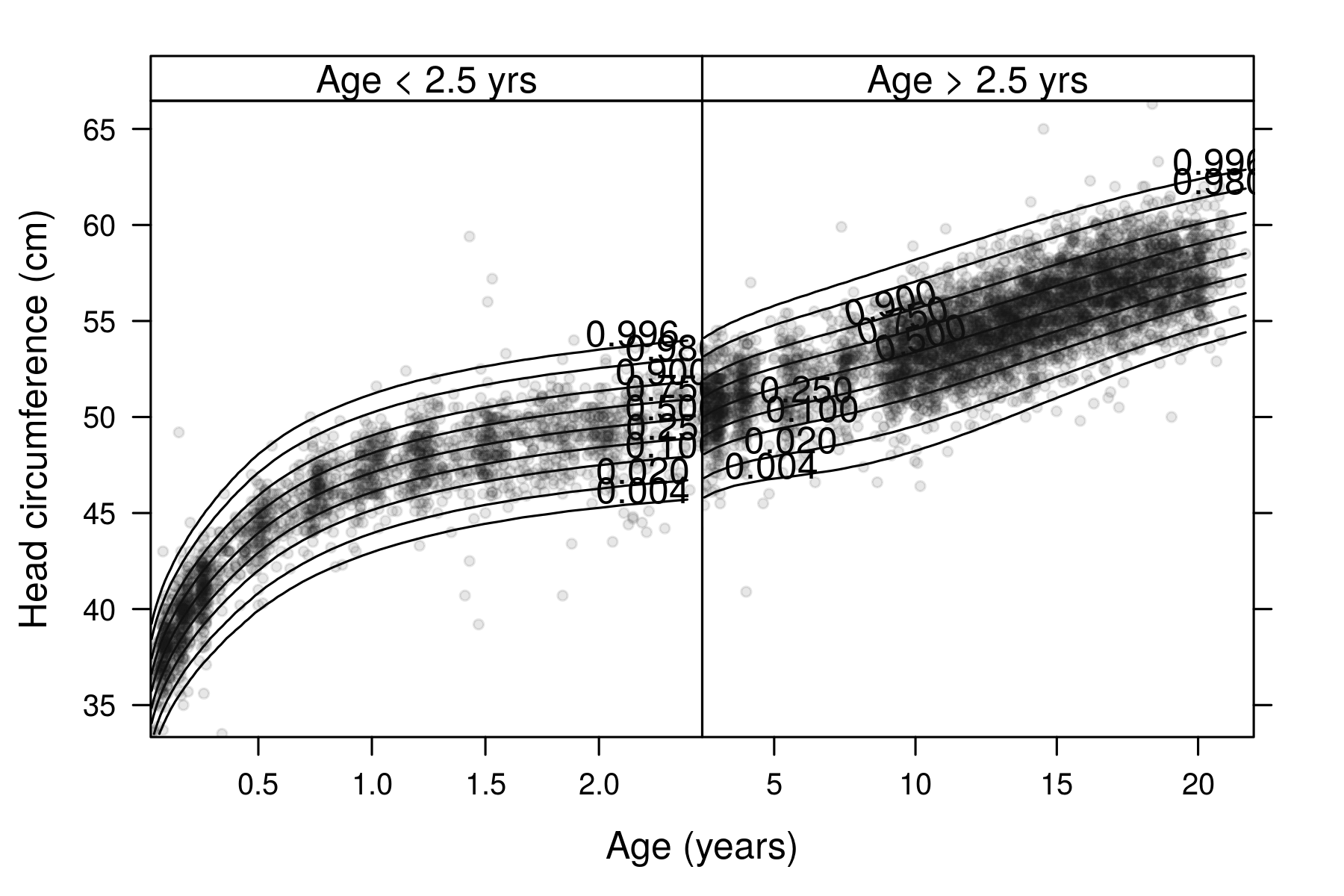}
\caption{Head Circumference Growth. Observed head circumference and age of
         $7040$ boys with estimated quantile curves for
         $p = 0.04, 0.02, 0.1, 0.25, 0.5, 0.75, 0.9, 0.98, 0.996$.
         \label{fig_4_HC}}
\end{center}
\end{figure}

\paragraph{Survival Analysis: German Breast Cancer Study Group-2 Trial} 

This prospective, controlled clinical trial on the treatment of node-positive
breast cancer patients was conducted by the German Breast Cancer Study 
\if0\compact
Group \citep[GBSG-2,][]{gbsg2:1994}.  Patients not older than $65$ years with
positive regional lymph nodes but no distant metastases were included in
the study.  
\else
Group.
\fi
Out of $686$ women, $246$ received hormonal therapy, whereas the
control group of $440$ women did not.  Additional
variables include age, menopausal status, tumour size, tumour grade, number of
positive lymph nodes, progesterone receptor and oestrogen receptor.  The
right-censored recurrence-free survival time is the response variable of
interest.

The Cox model $(\pMEV, (\bern{10}^\top, \I(\text{hormonal
therapy}))^\top, \parm)$ implements the transformation function $\h(\ry \mid
\text{treatment}) = \bern{10}(\ry)^\top \parm_1 + \I(\text{hormonal therapy})
\beta$, where $\bern{10}^\top \parm_1$ is the log-cumulative baseline hazard function
parameterised by a Bernstein polynomial and $\beta \in \RR$ is the
log-hazard ratio of hormonal therapy. This is the classical Cox model with one treatment parameter $\beta$ 
but with fully parameterised baseline transformation function, which was fitted by the exact
log-likelihood under ten linear constraints. The model assumes proportional hazards, an assumption whose
appropriateness we wanted to assess using the non-proportional hazards model
$(\pMEV, (\bern{10}^\top \otimes (1, \I(\text{hormonal therapy})))^{\top}, \parm)$ with the
transformation function 
\begin{eqnarray*}
\h(\ry \mid \text{treatment}) = \bern{10}(\ry)^\top \parm_1 + \I(\text{hormonal therapy}) \bern{10}(\ry)^\top \parm_2. 
\end{eqnarray*}
\rev{The function $\bern{10}(\ry)^\top \parm_2$ is the time-varying difference of the
log-hazard functions of women without and with hormonal therapy
and can be interpreted as the deviation from a constant log-hazard ratio treatment effect
of hormonal therapy.}  Under the null hypothesis of no
treatment effect, we would expect $\parm_2 \equiv \bold{0}$.  This monotonic
deviation function adds ten linear constraints \rev{$\mD_{11} \parm_1 + \mD_{11}
\parm_2 > 0$, which also ensure monotonicity of the transformation function for
treated patients.} We first
compared the fitted survivor functions obtained from the model including 
a time-varying treatment effect with the Kaplan-Meier
estimators in both treatment groups.  
Figure~\ref{fig_5_logrank}A shows a nicely smoothed version of the survivor
functions obtained from this transformation model.  
Figure~\ref{fig_5_logrank}B shows the time-varying treatment effect
$\bern{10}(\ry)^\top \hat{\parm}_2$, together with a $95\%$ confidence band
computed from the joint normal distribution of $\hat{\parm}_2$ for a grid
over 
\if0\compact
time as described by \cite{Hothorn_Bretz_Westfall_2008}; 
\else
time;
\fi
the method is
much simpler than other methods for inference on time-varying
effects \citep[\eg][]{Sun_Sundaram_Zhao_2009}.  The $95\%$
confidence interval around the log-hazard ratio $\hat{\beta}$ is also plotted, 
and as the latter is fully covered by the confidence band for
the time-varying treatment effect, there is no reason to question the
treatment effect computed under the proportional hazards assumption.
\if0\compact
An alternative method for this type of analysis has been recently suggested by
\cite{Yang_Prentice_2015}.
\fi

\begin{figure}[t]
\begin{center}
\includegraphics{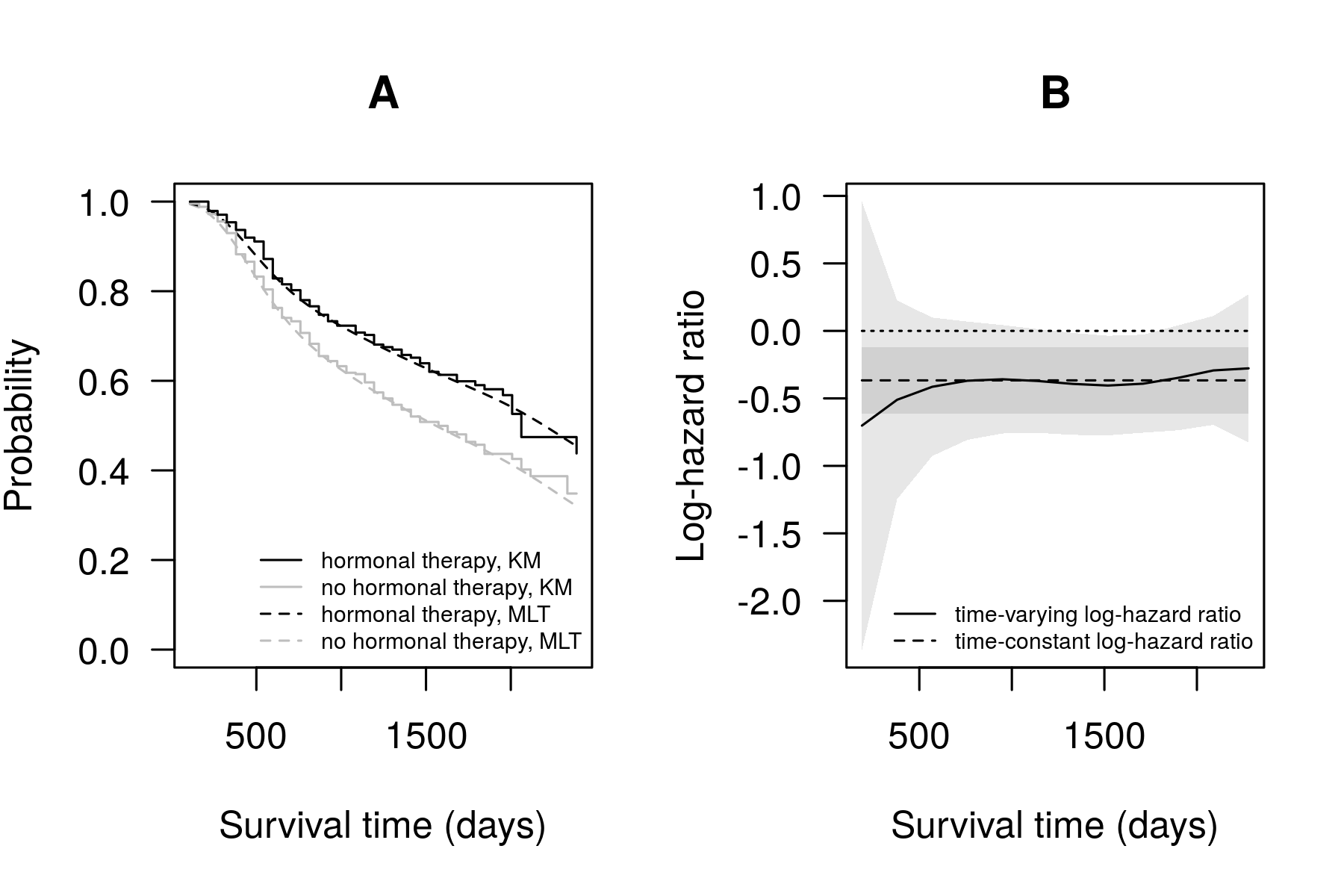}
\caption{GBSG-2. Estimated survivor functions 
         by the most likely transformation model (MLT) and the Kaplan-Meier (KM) estimator in the two 
         treatment groups (A). 
         Verification of proportional hazards (B): The log-hazard ratio $\hat{\beta}$
         (dashed line) with $95\%$ confidence interval (dark grey) is fully 
         covered by a $95\%$ confidence band for the time-varying treatment effect
         (the time-varying log-hazard ratio is in light grey, the estimate is the solid line)
         computed from a non-proportional hazards model. \label{fig_5_logrank}}
\end{center}
\end{figure}

In the second step, we allowed an age-varying treatment effect to be included
in the model $(\pMEV, (\bern{10}(\ry)^\top \otimes
       (\I(\text{hormonal therapy}), 1 - \I(\text{hormonal therapy}))
       \otimes \bernx{3}(\text{age})^\top)^\top, \parm)$. For both treatment
groups, we estimated a conditional transformation function of survival time
$\ry$ given age parameterised as the tensor basis of two Bernstein bases. Each of the
two basis functions comes with $10 \times 3$ linear constraints; therefore, the model
was fitted under $60$ linear constraints. 
Figure~\ref{fig_6_GBSG2} allows an assessment of the prognostic and
predictive properties of age.  As the survivor functions were clearly larger
for all patients treated with hormones, the positive treatment effect
applied to all patients.  However, the size of the treatment effect varied
greatly.  The effect was most pronounced for women younger than $30$ and
levelled off a little for older patients.  In general, the survival times were
longest for women between $40$ and $60$ years old.  Younger women suffered
the highest risk; for women older than $60$ years, the risk started to
increase again.  This effect was shifted towards younger women when hormonal treatment was applied.

\begin{figure}[t]
\begin{center}
\includegraphics{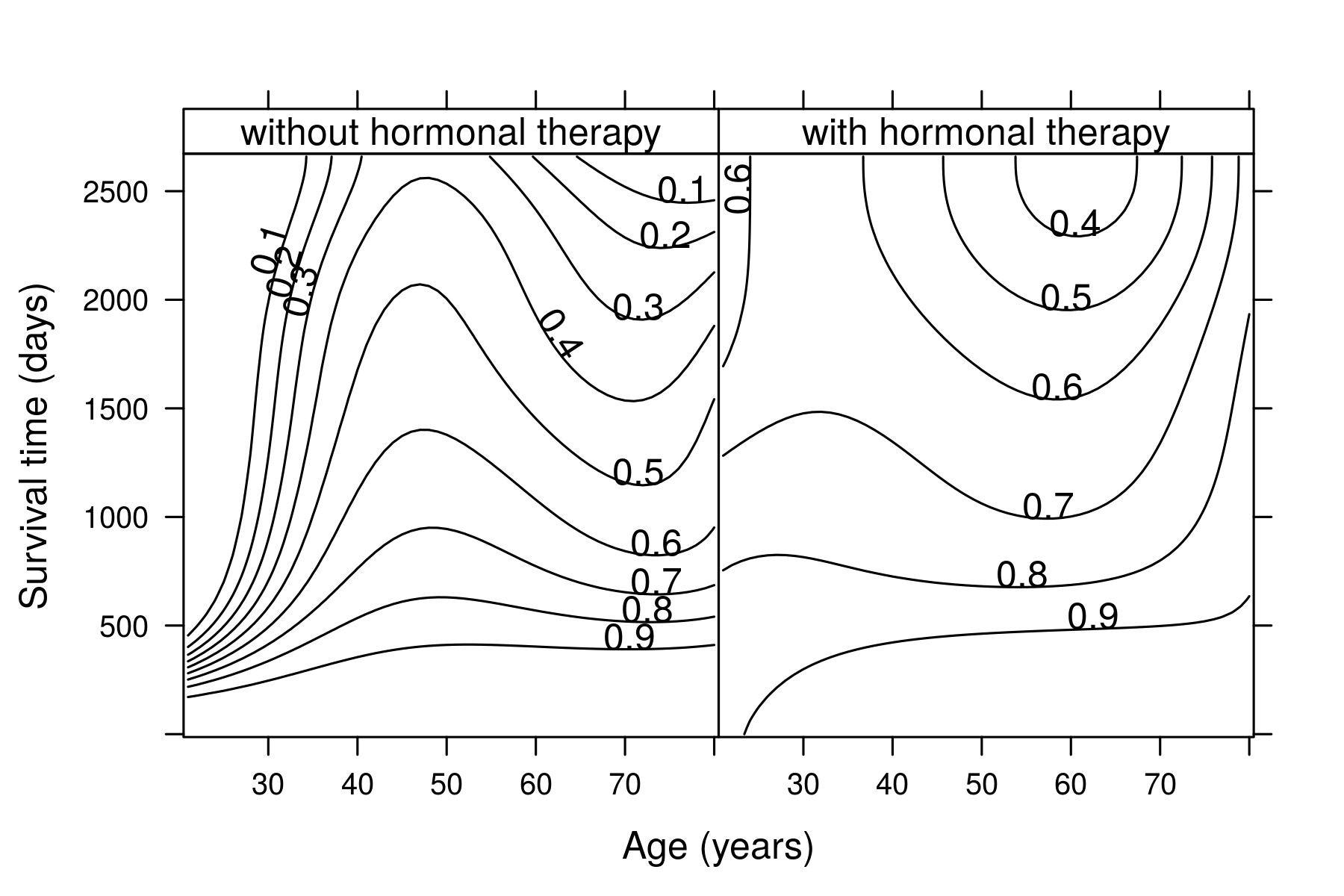}
\caption{GBSG-2. Prognostic and predictive effect of age. The contours depict the
         conditional survivor functions given treatment and age of the patient. 
         \label{fig_6_GBSG2}}
\end{center}
\end{figure}

%\if0\compact{
%
%Finally, we fitted the Cox model $(\pMEV, (\bern{10}^\top, \tilde{\rx}^\top)^\top,
%\parm)$ to the data, where $\tilde{\rx}$ contains the treatment indicator
%and all other variables in the transformation function $\h(\ry \mid \rx) =
%\bern{10}(\ry)^\top \parm_1 + \tilde{\rx}^\top \shiftparm$.  In contrast to the
%classical Cox model where only $\shiftparm$ is estimated by the partial
%likelihood, we estimated all model parameters simultaneously under ten linear constraints.  
%The results obtained from the partial and the exact log-likelihood are practically
%equivalent, Figure~\ref{fig_7_cox} shows the corresponding $z$-statistics for
%$\hat{\shiftparm}$.
%
%\begin{figure}[t]
%\begin{center}
%<<fig_7_cox, fig = TRUE, width = 3.5>>=
%load("ex_GBSG2.rda")
%lim <- range(cf1$test$tstat[n]) * c(1, 1.375)
%plot(cf1$test$tstat[n], cf2$test$tstat, xlab = "z-Statistic MLT exact likelihood", 
%     ylab = "z-Statistic Cox partial likelihood", xlim = lim, ylim = lim, las = 1)
%abline(a = 0, b = 1, col = "grey")
%points(cf1$test$tstat[n], cf2$test$tstat)
%text(cf1$test$tstat[n], cf2$test$tstat, labels =  names(cf2$test$coefficients), pos = 4,
%     cex = .7)
%@
%\caption{GBSG-2. Comparison of $z$-statistics for all variables obtained from the partial likelihood 
%         (function \code{coxph()} in package \pkg{survival}) and 
%         the exact likelihood with parameterisation of the log-cumulative baseline hazard function. 
%         \label{fig_7_cox}}
%\end{center}
%\end{figure}
%
%}
%\fi

\if0\compact
{

\paragraph{Count Regression: Tree Pipit Counts}

\cite{Mueller_Hothorn_2004} reported data on the number of tree pipits
\textit{Anthus trivialis}, a small passerine bird, counted on $86$ forest plots in a light gradient
ranging from open and sunny stands (small cover storey) to dense and dark
stands (large cover storey).  We modelled the conditional distribution of the
number of tree pipits on one plot given the cover storey on this plot 
by the transformation
model $(\Phi, (\basisy^\top \otimes \bernx{4}(\text{cover storey})^\top)^\top,
\parm)$, where $\basisy(y) = \evec_5(y + 1), y = 0, \dots, 4$; the model
was fitted under $4 \times 5$ linear constraints. In this
model of count data, the conditional distribution depends on
both the number of counted birds and the cover storey and the effect of
cover storey may change with different numbers of birds observed.  
Figure~\ref{fig_8_treepipit}A depicts the observations, and
Figure~\ref{fig_8_treepipit}B shows the conditional distribution function evaluated for $0,
\dots, 5$ observed birds.  The conditional distribution function obtained
from a generalised additive Poisson (GAM) model with smooth mean effect of cover
storey is given in Figure~\ref{fig_8_treepipit}C.  Despite some overfitting, this model is
more restrictive than our transformation model because one mean function determines the whole distribution
(the local minima of the conditional distributions as a function of cover storey were
constant in Figure~\ref{fig_8_treepipit}C, whereas they were shifted towards higher values of
cover storey in Figure~\ref{fig_8_treepipit}B).

\begin{figure}[t]
\begin{center}
\includegraphics{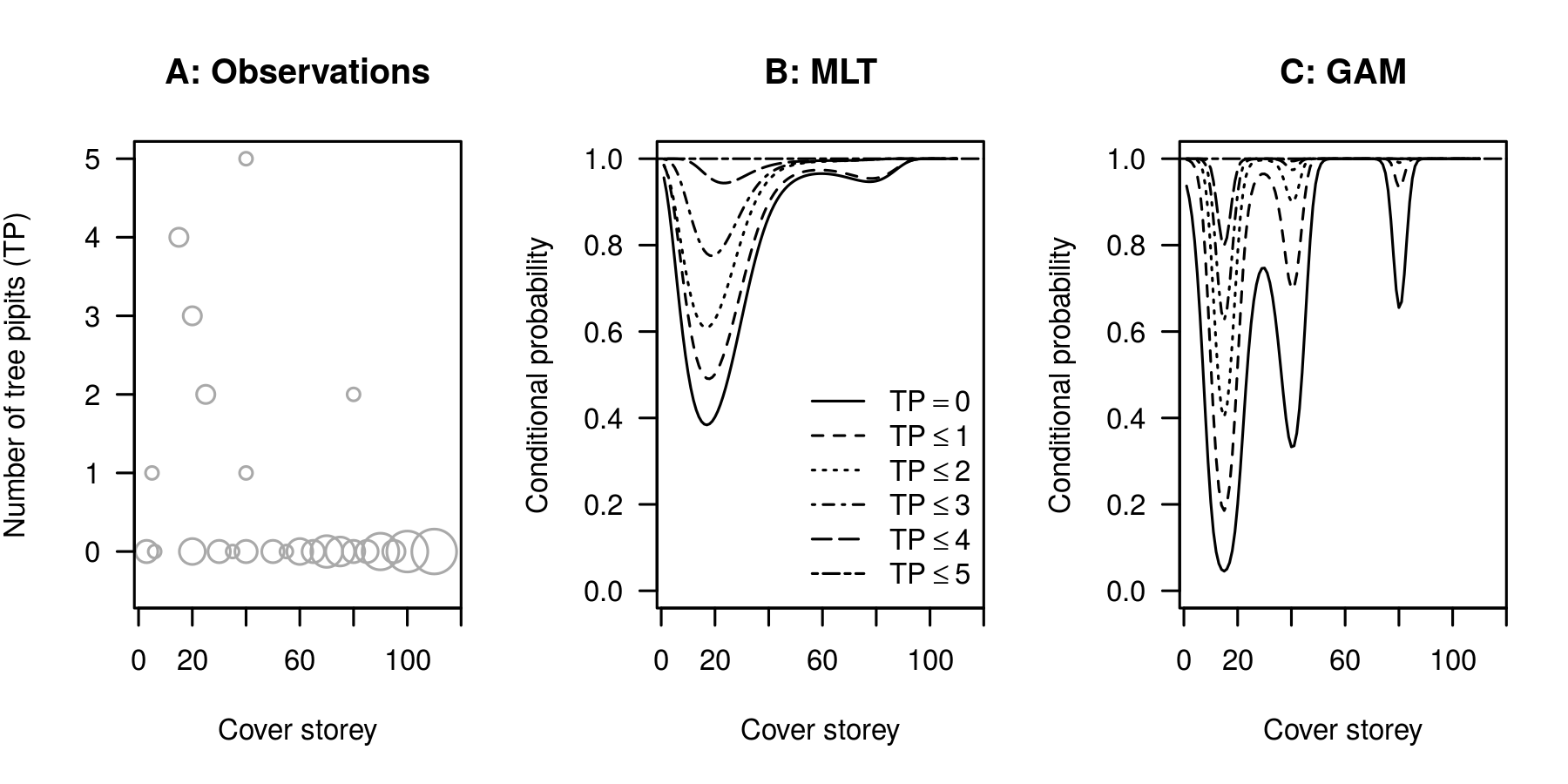}
\caption{Tree Pipit Counts. Observations (A, the size of the points is
         proportional to the number of observations) and estimated conditional distribution
         of number of tree pipits given cover storey by the most likely transformation model (MLT, B)
         and a generalised additive Poisson model (function \code{gam()} in package \pkg{mgcv}, 
         GAM, C). \label{fig_8_treepipit}}
\end{center}
\end{figure}

}
\fi

\if0\compacteval
{

\subsection{Simulation Experiment} \label{subsec:sim}

The transformation family includes linear as well as very flexible models,
%% fully parametric, semi-parametric and non-parametric models 
and we therefore illustrate the potential gain of
modelling a transformation function $\h$ by comparing a very simple
transformation model to a fully parametric approach and to a non-parametric approach
using a data-generating process introduced by
\cite{Hothorn_Kneib_Buehlmann_2014}.

In the transformation model $(\Phi, ((1, \ry) \otimes (1, \rx^\top))^\top, \parm)$,
two explanatory variables $\rx = (\erx_1, \erx_2)^\top$
influence both the conditional mean and the conditional variance of
a normal response $\rY$. 
Although the transformation function is linear in
$\ry$ with three linear constraints, the mean and variance of $\rY$ given $\rx$
depend on $\rx$ in a non-linear way. The choices
$\erx_1 \sim \UD[0, 1], \erx_2 \sim \UD[-2, 2]$ with
$\parm = (0, 0, -1, .5, 1, 0)$ lead to the heteroscedastic
varying coefficient model
\begin{eqnarray} \label{simmod}
\rY = \frac{1}{\erx_1 + 0.5}\erx_2 + 
      \frac{1}{\erx_1 + 0.5}\varepsilon, \quad \varepsilon \sim \ND(0, 1),
\end{eqnarray}
where the variance of $\rY$ ranges between $0.44$ and $4$ depending on  
$\erx_1$. This model can be fitted in the GAMLSS framework under the       
assumptions that the mean of the normal response depends on a    
smoothly varying regression coefficient $(\erx_1 + 0.5)^{-1}$ for $\erx_2$ and
that the variance is a smooth function of $\erx_1$. This model is therefore fully parametric. 
As a non-parametric counterpart, we used a kernel estimator for
estimating the conditional distribution function of $\rY$ 
as a function of the two explanatory variables.

From the transformation model, the GAMLSS and kernel estimators, we obtained estimates of
$\pYx(\ry)$ over a grid on
$\ry, \erx_1, \erx_2$ and computed the mean absolute deviation (MAD)
of the true and estimated probabilities
\begin{eqnarray*}
\text{MAD}(\erx_1, \erx_2) = \frac{1}{n} \sum_{\ry}
    |\pYx(\ry) - \hatpYx(\ry)| 
\end{eqnarray*}
for each pair of $\erx_1$ and $\erx_2$. Then, the minimum, the median and
the maximum of the MAD values for all $\erx_1$ and $\erx_2$ were computed as
summary statistics.  The most likely transformation approach and its two
competitors were estimated and evaluated for $100$ random samples of size $N
= 200$ drawn from model (\ref{simmod}).  Cross-validation was used to
determine the bandwidths for the kernel-based estimators \citep[function
\code{npcdist()} in package \pkg{np}; for details,
see][]{Hayfield_Racine_2008}.  We fitted the GAMLSS models by boosting; the
number of boosting iterations was determined via sample splitting
\citep{Mayr_Fenske_Hofner_2012}.  To investigate the stability of the three
procedures under non-informative explanatory variables, we added to the data $p = 1,
\dots, 5$ uniformly distributed variables without association to the
response and included them as potential explanatory variables in
the three models.  The case $p = 0$ corresponds to model~(\ref{simmod}).

\begin{sidewaysfigure}
\begin{center}
\includegraphics{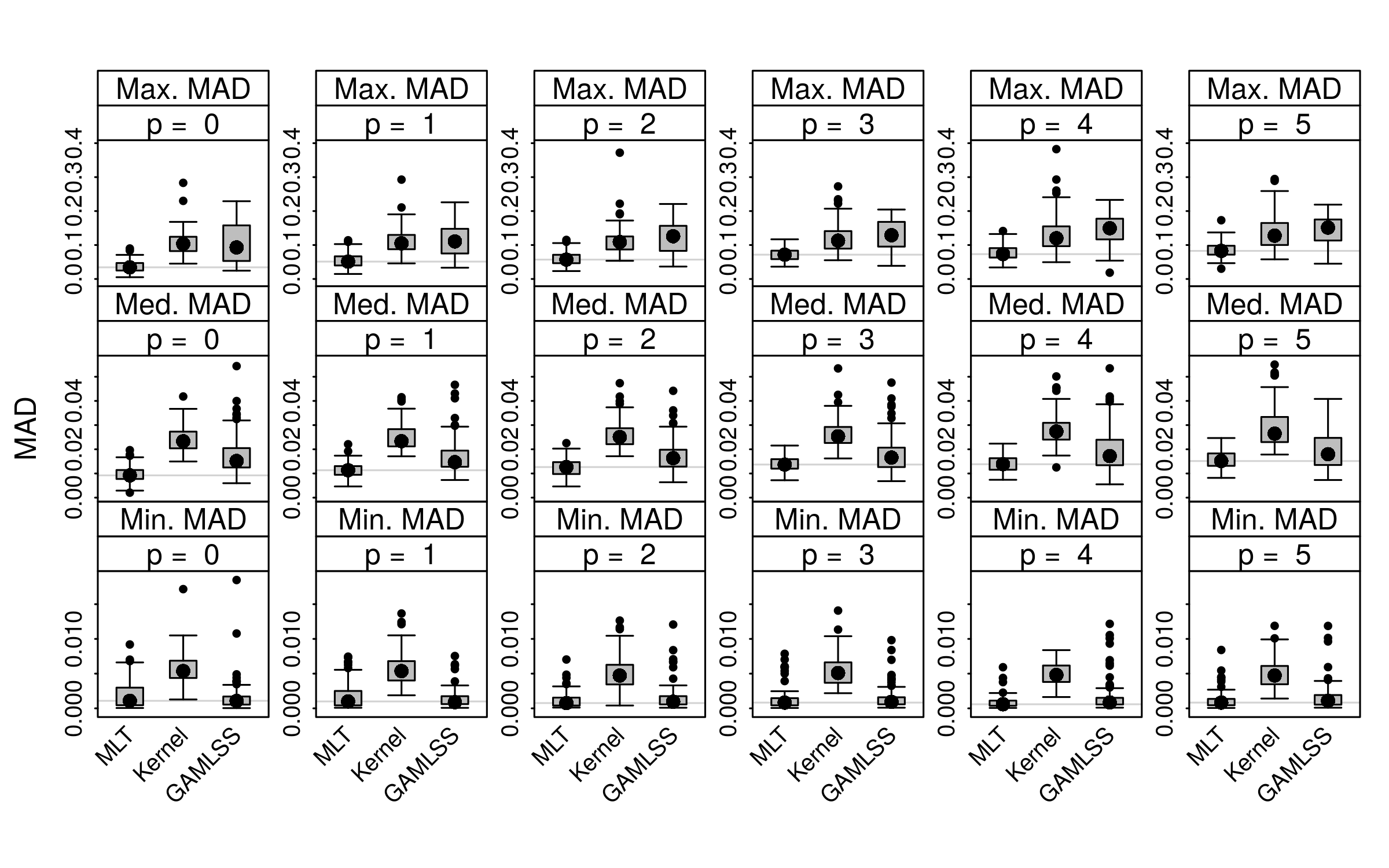}
\caption{Empirical Evaluation. Minimum, median and maximum of the mean absolute deviation (MAD)
         between true and estimated probabilities for most likely transformation
         models (MLT), non-parametric kernel distribution function estimation (Kernel)
         and generalised additive models for location, scale and shape (GAMLSS) for
         $100$ random samples.
         Values on the ordinate can be interpreted as absolute differences of probabilities.
         The grey horizontal lines correspond to the median of MLT.
         \label{sim-plot}}
\end{center}
\end{sidewaysfigure}

Figure~\ref{sim-plot} shows the empirical distributions of the minimum,
median and maximum MAD for the three competitors.  Except for the minimum
MAD in the absence of any irrelevant explanatory variables ($p = 0$), the
conditional distributions fitted by the transformation models were closer to the true conditional
distribution function by means of the MAD.  This result was obtained because
the transformation model only had to estimate a simple transformation
function, whereas the other two procedures had a difficult time approximating this
simple transformation model on another scale.
However, the comparison illustrates the potential improvement one can achieve  
when fitting simple models for the transformation function instead of more   
complex models for the mean (GAMLSS) or distribution function   
(Kernel). The kernel
estimator led to the largest median MAD values but seemed more robust than
GAMLSS with respect to the maximum MAD.  These results were remarkably robust
in the presence of up to five non-informative explanatory variables,
although of course the MAD increased with the number of 
non-informative variables $p$.

}
\fi

\section{Discussion}

The contribution of a likelihood approach for the general class of
conditional transformation models is interesting both from a theoretical and a
practical perspective.  With the range of simple to very complex
transformation functions introduced in Section~\ref{sec:appl} and illustrated
in Section~\ref{sec:empeval}, it becomes possible to understand classical 
parametric, semi-parametric and non-parametric models as special cases of
the same model class. Thus, analytic comparisons between
models of different complexity become possible.  The
transformation family $\Prob_{\rY, \Theta}$, the corresponding likelihood
function and the most likely transformation estimator are easy to
understand.  This makes the approach appealing also from a teaching
perspective.  Connections between standard parametric models (for example,
the normal linear model) and potentially complex models for survival or
ordinal data can be outlined in very simple notation, placing emphasis on the
modelling of (conditional) distributions instead of just modelling
(conditional) means.  Computationally, the log-likelihood $\log \circ \lik$ is linear in
the number of observations $N$ and, for contributions of ``exact
continuous'' responses, 
only requires the
evaluation of the derivative $\h^\prime$ of the transformation function $\h$
instead of integrals thereof.  
\if0\compact
{
Standard optimisers for linearly constrained
problems can be applied, and AIC- or BIC-based model selection is
possible.  Transformation models directly suggest a way to specify
conceptual $(\pZ, \basisyx, \parm)$ and fitted $(\pZ, \basisyx,
\hat{\parm})$ models in computer languages because ``only'' the
conditional transformation function $\h = \basisyx^\top \parm$ has to be specified in
addition to a ``simple'' distribution function $\pZ$
\citep{pkg:mlt,vign:mlt.docreg}.

The results presented in Section~\ref{sec:empeval} are based on only roughly
$1000$ lines of \textsf{R} code.  We tested this reference implementation of
most likely transformations against all linear
transformation models available in \textsf{R} to date and obtained equivalent
regression coefficients $\hat{\shiftparm}$ and their corresponding
covariance  matrices.
% (similar to the results shown for the Cox model in Figure~\ref{fig_7_cox}). 
Thus, the framework helps to reduce the code base
and thereby helps to considerably reduce the number of possible errors in implementations of 
linear transformation models, while at the same time it allows novel and more
complex models to be fitted \citep{pkg:mlt,vign:mlt.docreg}.
}
\fi

Based on the general understanding of transformation models outlined in this
paper, it will be
interesting to study these models outside the strict likelihood world.  A
mixed transformation model for cluster data 
\if0\compact
\citep{Caietal_2002, HuberCarolVonta_2004, Zengetal_2005, ChoiHuang_2012} 
\else
\citep{Caietal_2002, Zengetal_2005, ChoiHuang_2012} 
\fi
is often based on the transformation function
$\h(\ry \mid \rx, i) = \hY(\ry) + \delta_i + \hx(\rx)$ with random intercept
(or ``frailty'' term) $\delta_i$ for the $i$th observational unit.  Conceptually, a more complex
deviation from the global model could by formulated as $\h(\ry \mid \rx, i) =
\hY(\ry) + \hY(\ry, i) + \hx(\rx)$, \ie each observational unit is assigned
its own ``baseline'' transformation $\hY(\ry) + \hY(\ry, i)$, where the
second term is an integral zero deviation from $\hY$. For longitudinal data
with possibly time-varying explanatory variables, the model
$\h(\ry \mid \rx(t), t) = \h_\rY(\ry, t) + \rx(t)\shiftparm(t)$
\citep{Ding_Tian_Yu_2012, Wu_Tian_2013} can also be
understood as a mixed version of a conditional transformation model.
The penalised log-likelihood $\log(\lik(\h \mid \ry)) - \text{pen}(\shiftparm)$ for the
linear transformation model $\h(\ry \mid \rx) = \hY(\ry) - \tilde{\rx}^\top
\shiftparm$ leads to Ridge- or Lasso-type regularised models, depending on
the form of the penalty term.  Priors for all model parameters $\parm$ allow
a fully Bayesian treatment of transformation models.
\if0\compact
Instead of the relatively strict model assumption $\pY = \pZ \circ
\basisy^\top \parm$, one could of course allow arbitrary unknown
transformation functions $\h \in \hs$ in the spirit of non-parametric
regression and study the quality of the approximation $\basisy^\top
\hat{\parm}$ for $\h$ for the fixed and random design case.
\fi

\rev{It is possible to relax the assumption that $\pZ$ is known. The simultaneous
estimation of $\pZ$ in the model $\Prob(\rY \le \ry \mid \rX = \rx) =
\pZ(\hY(\ry) - \tilde{\rx}^\top \shiftparm)$ was studied by \cite{Horowitz_1996} and later
extended by \cite{lintonetal_2008} to non-linear functions $\hx$ with parametric baseline
transformation $\hY$ and kernel estimates for $\pZ$ and $\hx$.  For AFT models, \cite{ZhangDavidian_2008} applied
smooth approximations for the density $\dZ$ in an exact censored likelihood
estimation procedure. In a similar setup, \cite{Huang_2014} proposed a
method to jointly estimate the mean function and the error distribution in
a generalised linear model. The estimation of $\pZ$ is noteworthy in additive
models of the form $\hY + \hx$ because these models assume additivity of
the contributions of $\ry$ and $\rx$ on the scale of $\pZ^{-1}(\Prob(\rY \le
\ry \mid \rX = \rx))$.  If this model assumption seems questionable, 
one can either allow unknown $\pZ$ or move to a 
transformation model featuring a more complex
transformation function. 
%We followed the second approach in our analyses of the Old Faithful
%Geyser eruptions, the GBSG-2 study and the tree pipit count data.  
From this
point of view, the distribution function $\pZ$ in flexible
transformation models is only a computational device mapping the unbounded
transformation function $\h$ into the unit interval strictly monotonically,
making the evaluation of the likelihood easy.  Then, $\pZ$ has no
further meaning or interpretation as error distribution. 
A compromise could be the family of distributions
$\pZ(\rz \mid \rho) = 1 - (1 + \rho \exp(\rz))^{-1/\rho}$ for $\rho > 0$ 
\citep[suggested by][]{McLain_Ghosh_2013} with
simultaneous maximum likelihood estimation of $\parm$ and $\rho$ for
additive transformation functions $\h = \hY + \hx$, as these models
are flexible and still relatively easy to interpret.
}

\if0\compact
{
Another comment concerns the connection between transformation models and
quantile regression. For some probability $p \in [0, 1]$, the conditional quantile function is
\begin{eqnarray*}
Q_\rY(p \mid \rX = \rx) := \pY^{-1}(p \mid \rX = \rx) = \inf\{\ry \in \samY \mid \pZ(\h(\ry \mid \rx)) \ge p\},
\end{eqnarray*}
and for absolutely continuous $\rY$, we then get $Q_\rY(p \mid \rX = \rx) =
\h^{-1}(\pZ^{-1}(p) \mid \rX = \rx)$.  A linear quantile regression model
\citep{Koenker_2005} assumes $Q_\rY(p \mid \rX = \rx) = \alpha(p) + \tilde{\rx}^\top
\shiftparm(p)$, which cannot be written as a linear transformation model. 
Thus, non-linear transformation functions are necessary in order to achieve
the same flexibility as a linear quantile regression model; however, the
full conditional distribution function can be estimated in one step, which avoids
computational problems such as quantile crossing.  In the most complex case
of a transformation model that is invariant with respect to the choice of $\pZ$, 
there is a correspondingly complex
quantile regression model such that both models are equivalent, as was also
noted by \cite{Chernozhukov_2013}.  In a certain sense, we can understand
transformation models as ``inverse quantile regression''.  A more detailed
analysis of the connection between transformation models (called
distribution regression there) and quantile regression can be found in
\cite{Leorato_Peracchi_2015}.
}
\fi

\begin{figure}[h!]
\begin{center}
\includegraphics{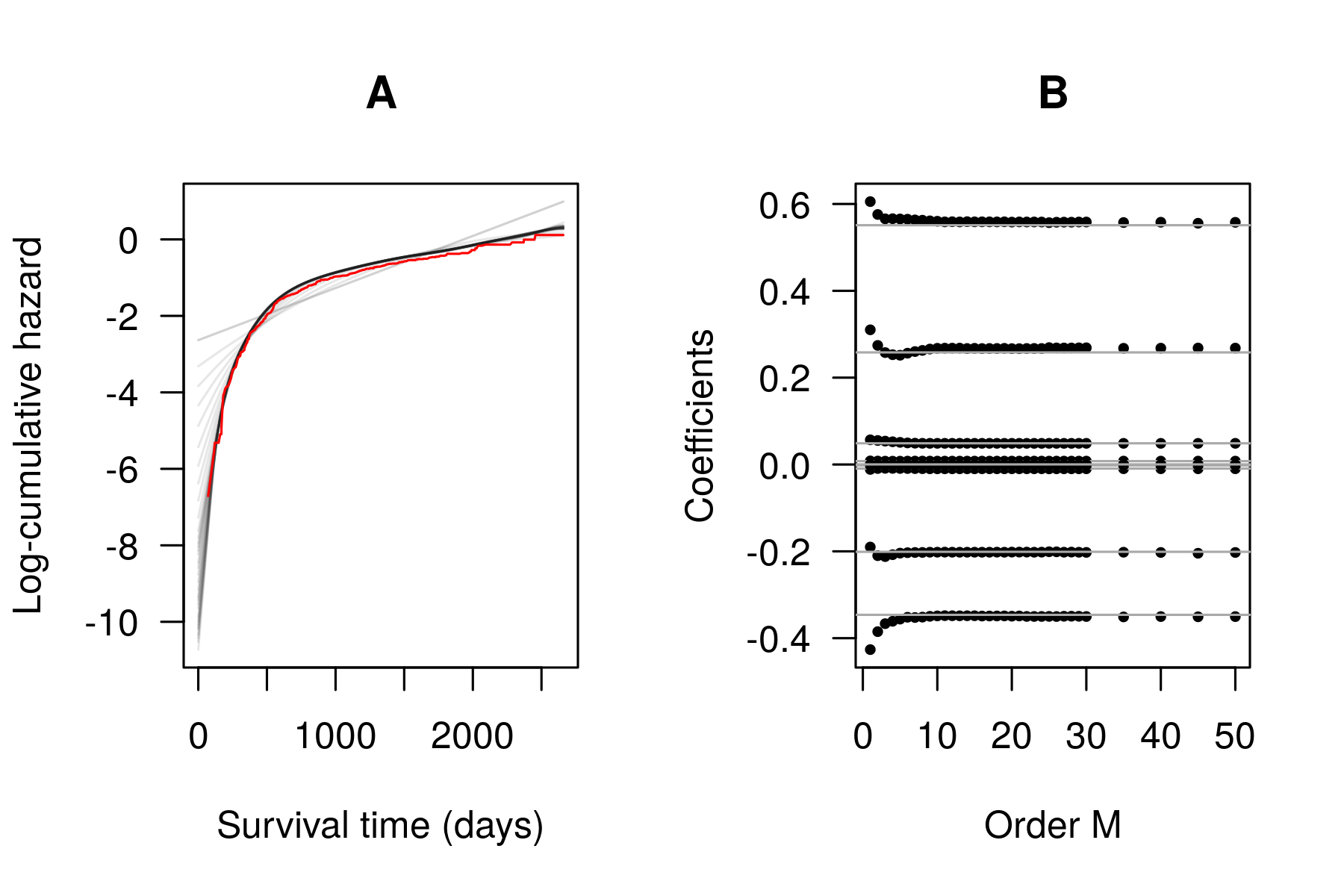}
\caption{GBSG-2. Comparison of exact and partial likelihood for order $M = 1, \dots, 30, 35,40, 45, 50$
         of the Bernstein polynomial approximating the log-cumulative baseline hazard function $\h_\rY$.
         The estimated log-cumulative baseline hazard functions for varying
         $M$ are shown in grey and the Nelson-Aalen-Breslow estimator
         is shown in red (A). The right panel (B) shows the trajectories of the
         regression coefficients $\hat{\shiftparm}$ obtained for varying $M$, which are represented as 
         dots. The horizontal lines represent the partial likelihood estimates.
         \label{fig:GBSG2-coxph_mlt}}
\end{center}
\end{figure}

In light of the empirical results discussed in this paper and the
theoretical work of \cite{McLain_Ghosh_2013} on a Cox model with
log-cumulative baseline hazard function parameterised in terms of a
Bernstein polynomial with increasing order $M$, one might ask where the
boundaries between parametric, semi-parametric and non-parametric statistics
lie.  The question how the order $M$ affects results practically has been
repeatedly raised; therefore, we will close our discussion by looking at a Cox
model with increasing $M$ for the GBSG-2 data.  All eight baseline variables were included in
the linear predictor, and we fitted the model with orders $M = 1, \dots, 30,
35,40, 45, 50$ of the Bernstein polynomial parameterising the log-cumulative
baseline hazard function.  In Figure~\ref{fig:GBSG2-coxph_mlt}A,
the log-cumulative baseline hazard functions start with a linear function
($M = 1$) and quickly approach a function that is essentially a smoothed
version of the Nelson-Aalen-Breslow estimator plotted in red.  In
Figure~\ref{fig:GBSG2-coxph_mlt}B, the
trajectories of the estimated regression coefficients become very similar to
the partial likelihood estimates as $M$ increased.  For $M \ge 10$, for instance, the
results of the ``semi-parametric'' and the ``fully parametric'' Cox models
are practically equivalent.  An extensive collection of such head-to-head
comparisons of most likely transformations with their classical counterparts
can be found in \cite{vign:mlt.docreg}.  Our work for this paper and
practical experience with its reference software implementation 
convinced us that rethinking classical models in terms of
fully parametric transformations is intellectually and practically a
fruitful exercise.

\section*{Acknowledgements}

Torsten Hothorn received financial support by Deutsche
Forschungsgemeinschaft (DFG) under grant number HO 3242/4-1. We thank Karen
A.~Brune for improving the language.

\bibliography{mlt,packages}

%\newpage

\begin{appendix}
\section*{Appendix}

\subsection*{Computational Details}

A reference implementation of most likely transformation models is available
in the 
\if1\blind
\pkg{mlt} package \citep{pkg:mlt}. 
\else
\pkg{NONAME} package (ANONYMOUS).
\fi
%% \TODO{(0.0-30; r241 last tested with simulation)}. 
All data analyses can be reproduced in the dynamic document 
\if1\blind
\cite{vign:mlt.docreg}. 
\else
ANONYMOUS.
\fi
Augmented Lagrangian Minimization implemented
in the \code{auglag()} function of package \pkg{alabama} \citep{pkg:alabama}
was used for optimising the log-likelihood.
Package \pkg{gamboostLSS} \citep[version~1.2-2,][]{pkg:gamboostLSS} was used to fit GAMLSS models and kernel
density and distribution estimation was performed using package
\pkg{np} \citep[version~0.60-2,][]{pkg:np}.
\if0\compact
{
The additive Poisson model was fitted by \pkg{mgcv} \citep[version~1.8-17,][]{pkg:mgcv}.
}
\fi
All computations were performed using \textsf{R} version 3.4.0 
\citep{R}.

\subsection*{Review History: Version 1 by Journal 1 (August 2015--January 2016)}

\noindent Review for version 1 (\url{http://arxiv.org/abs/1508.06749v1}).
Comments by the authors are printed in \textit{italics}.

\subsubsection*{Associate Editor}

The paper proposes a maximum likelihood approach to a general class of
transformation models.  The paper is nicely written.  The proposed framework
can incorporate different types of response variables: continuous, count,
censored, truncated.  The likelihood framework is attractive as it allows
one to make use of many tools developed in the classical maximum likelihood
setting.

The following are my main comments on the paper. 

\begin{enumerate}
\item   The abstract says ``We propose and study properties of maximum
likelihood estimators in the class of conditional transformation models''. 
However, the paper focuses on the one-sample case, that is, transforming a
univariate random variable (without covariates).  Hence, one does not see
much of the ``cascade of increasingly complex transformation models'' claimed
in the abstract.  For this paper to be of more interest to JRSSB readers,
the authors should instead focus on the more interesting and practically
useful conditional transformations (i.e., the regression case), which were
only sketched briefly in Section 4.2?

%% no change
\textit{Regression models are discussed in much detail in Section 4.2 (pages
12-16), in three out of four examples presented in Section~5.1, and in the
simulation model described in Section~5.2. The theory was presented in the
unconditional case for notational and educational convenience only and
carries over to the conditional case as stated in the 2nd paragraph of
Section 4.2.}

\item	Referee 1 is concerned that ``once the transformations are
parameterized, the asymptotic results in Section 3 are obvious from standard
textbooks''.  This seems to be a valid concern.  

%% note added
\textit{We see this as an advantage, see also the corresponding comment to
Referee 1.}

I have my own concerns on
the theoretical results: the main theorems (Theorem 1-3) are derived under
the assumption $Y_1, \ldots, Y_n$ are $i.i.d.$ from $P_{Y, \theta_0}$, then
it's not much different from a parametric model.  To me, it seems that a
more reasonable theory should consider: (1) approximate the unknown
transformation $h$ using $a^\top\theta$; (2) then apply MLE at the second step. 
The current theory is only for the second step but ignores the spline
approximation error in the first step.  For transformation models, the
theory would be more relevant to consider approximating $h$ using the
possibly misspecified $a^\top\theta$, for which one usually needs to let the
number of basis functions goes to infinity at certain speed.  

%% note added
\textit{A corresponding consistency result for boosted conditional
transformation models treating $h$ as unknown can be found in Hothorn et al.~2014. The more useful
asymptotic normality would be much harder to derive in this situation and we therefore decided
to stay within the bounds of the modern likelihood world. Note that the
comment only applies to continuous models as the transformation function is
discrete in discrete models.}

There is
another related question: $h$ is required to be strictly monotone but the
basis function approximation does not automatically guarantee it.  This
needs to be carefully discussed.

%% constraints clarified; 
\textit{Both parameterisations for discrete and continuous models (Bernstein
polynomial) in Section~4.1 allow monotone transformation functions to be estimated
under linear constraints.
We used a spectral projected gradient method (see Appendix) for optimisation
of the likelihood under such linear constraints. The estimated
transformation functions are guaranteed to be monotone in $y$; the nature
and number of constraints are better explained in the revised version.}

\item	There is the important question that given a real data example,
which transformation model should be chosen?  When will the new method be
more preferable than the alternative methods in the literature, such as
Cheng et al.~(1995), Chen et al.~(2002)?  

%% no change
\textit{Classical linear transformation models treat the ``baseline''
transformation $h_Y$ as a nuisance parameter. Fitting the same models under
the full likelihood as suggested here leads to practically the same
estimates of the regression coefficients (see Section~5) and a smooth
interpolation of the post-hoc NPML estimate of $h_Y$. It is then relatively
simple to construct asymptotic confidence bands for the distribution of the
survivor function (see also Section~5). Most importantly, it is hard or even
impossible to fit more complex models, for example one with time-varying effects or a
conditional transformation model, using the classical approaches but it is
straightforward (as we have shown in Section~5) using the most likely
transformation approach.}

The ``Extension of the classical
transformation models'' appear to have some overlap with the authors' recent
work (Hothorn et al.~(2014), Conditional Transformation Models, JRSSB)? 

\textit{Yes, the class of ``conditional transformation models'' was introduced
in Hothorn et al.~2014, along with a boosting algorithm for the minimisation
of proper scoring rules.  The 2014 paper is restricted to exact continuous
responses.  The novel contribution of the present manuscript is a full
likelihood framework for arbitrary responses and corresponding likelihood
inference procedures.}

When will the method proposed in this paper be preferable to those in
Hothorn et al.~(2014)?

\textit{Whenever one is interested in likelihood inference for possibly
censored, truncated or discrete responses under weak 
parametric assumptions in a fairly large model class.}

\end{enumerate}

\subsubsection*{Referee 1}

The paper aims to study a unified framework based on transformations to
address semiparametric/nonparametric estimation for discrete, continuous or
censored data.  First, any distribution functions are equivalent to some
known distributions of transformed random variables.  Then the paper
proposes a class of parametric models for transformations so estimates the
parameters via maximum likelihood estimation.  The theoretical results of
the parameter estimators are provided.  The paper describes a list of
examples to illustrate transformations and corresponding results in Section~4.  
Some numerical evidence is given in Section 5.  My serious concern is
about the new material presented here.  The key motivation for using
transformation (the results in Section 2) is available in standard
textbooks.  

%% note added
\textit{All classical texts study and parameterise transformations $Y =
g(Z)$, for example under the name location-scale models, structural
inference, group families or group models.  We study the case $h(Y) = Z$ and
parameterise $h = g^{-1}$.  The resulting class of ``conditional
transformation models'' was introduced in Hothorn et al.~2014, along with a
boosting algorithm for exact continuous responses.  The novel contribution of
this manuscript is a full likelihood framework for arbitrary responses and
corresponding likelihood inference procedures. This has been clarified in
the revision.}

Once the transformations are parameterized, the asymptotic
results in Section 3 are obvious from standard textbooks.  

%% note added
\textit{The simplicity of the approach is an advantage because more complex
models can be fitted and analysed in this sound and well-established
likelihood theory, whereas much more complex procedures are typically
applied in (linear) transformation models. The partial likelihood is the
most prominent representative of such ``non-standard'' estimation approaches for models
with high- or even infinite-dimensional nuisance parameters.}

Section 4 provides a list of examples to demonstrate how a variety of models can be
viewed as from certain transformations, which can be also found in Bickel et
al.~(1993).  

%% no change
\textit{Section~4.2 contains three subsections ``Classical Transformation
Models'', ``Extensions of Classical Transformation Models'' and ``Novel
Transformation Models''. Well-known linear transformation models are
shown to be part of the unifying theory presented here in the first part, 
along with appropriate references. The book by Bickel et al. is an early but
not the first reference for a unifying view on linear transformation models
(and is restricted to the continuous case).
We cited the (as far as we know) first paper in this field by Doksum and
Gasko (1990) instead. The models discussed in the remaining two subsections
have not been discussed in this context so far or are completely novel.}

My specific comments include:

\begin{enumerate}
\item in the introduction, it is not clear which transformations were actually
used in the literature.  Is it $h_Y(y)$ or $h(Y |x)$?

%% no change
\textit{With the few exceptions mentioned in the Introduction, $h_Y(y)$ is
typically treated as a nuisance parameter and therefore, it is not parameterised nor estimated directly. The
transformation $h(Y |x)$ is a shifted (by a linear predictor) version of
$h_Y(y)$ as explained in detail in Section~4.2.}

\item the bottom of page 4, what if $Y$ contains both discrete and continuous
components?

%% no change
\textit{$Y$ is univariate and either discrete or continuous. For multiple
observations $Y_i, i = 1, \dots, N$ of mixed type, the corresponding likelihood contributions 
(page 5) are simply multiplied (for example in ``double censoring'').}

\item in Definition 1, does the maximum always exist? If not, what are the
conditions for existence?

\textit{The distribution of $Z$ needs to be log-concave and the basis $a$ needs to
ensure identifiability of the distribution of $Y$. Both conditions have been
added to the revision.}

\item in Section 3, originally I thought that the paper would consider
nonparametric transformations but it turns out that the actual
transformations are parametric in this section.  Obviously, a big issue is
how to choose basis functions and how much parameterization one needs.  The
paper has little discussion on this.

%% TODO
\textit{We used the term ``non-parametric'' for indicating the invariance
with respect to the distribution of $Z$ which was non-standard and has been
revised. The type of the basis functions is problem-specific. In the
discrete case, we have $K-1$ parameters for $K$ levels, in the continuous
case we used Bernstein polynomials of order $M$ for modelling smooth
transformation functions. For count data, one could use either
parameterisation, depending on the number of counts. In the discrete case,
the number of basis functions is fix. In the continuous case (as explained
in Section 4) the number of basis functions is often limited to two (in all
transformation functions allowing linear transformations of $Z$ only) or can
be relatively large due to the monotonicity constraints (see Section 4.2). For the Geyser
example, we tried up to $M = 70$ basis functions and the
resulting distribution function closely approximates the ECDF as one would
expect from the comments in subsection 4.1 (results not shown in paper).}

\item another concern is that the parameterization should satisfy $h(y_1) <
h(y_2)$ if $y_1 < y_2$ .  This can result in complicated constraints, especially when
$h(\cdot)$ depends on covariates $x$.

%% contraints better explained, 
\textit{The basis functions used in this paper allow monotonicity
constraints to be formulated as linear constraints on the parameters (page
11 and 12). Standard algorithms for optimisation under linear constraints
can be applied. The necessary constraints are better explained in Sections~4
and 5 in the revision.}

\item no computation details are given for the maximization. Indeed,
computation is a nontrivial issue in transformation models for censored
data.

%% no change
\textit{Appendix ``Computational Details'' contains a reference to the
optimiser used for the experiments. We have tried other linear
constraint optimisers as well. All of them worked resonably well; the implementation in package 
\textbf{BB} was chosen because of its convenient user interface. One
advantage of the framework presented in this paper is the simplicity of the
corresponding optimisation problem and thus standard optimisers can be
applied (as mentioned in the Discussion). In fact, the evaluation of the
likelihood is easier for censored observations as one only needs to compute
the distribution function, instead of the density. Note that the core of our
reference implementation including all models presented here was written in 
less than 1000 lines of pure R code, this also highlights the computational
advantages of the full likelihood estimation procedure.}

\item in general, when we choose different distributions for $Z$, it will result
in different transformations so parameterization can be different.  Would
results be sensitive to the choice of $Z$'s distribution?

%% no change
\textit{It depends on the model: In three out of four examples presented in
Section~5.1 the distribution of $Y$ is invariant wrt the distribution of
$Z$. Thus, for flexible enough parameterisations of the transformation
function, any choice of $Z$ (subject to the restrictions given in the paper)
will practically lead to the same estimated distribution, but of course with
different parameters $\vartheta$. In such a situation, we are not interested
in the parameter estimates but in the estimated (conditional) distribution, density,
hazard or cumulative hazard functions (as in the the density estimation and
quantile regression applications). In contrast, 
the minimum extreme value distribution was
used to allow comparisons with a standard Cox model in the survival
application. In this model, the
choice of $Z$ as the minimum extreme value distribution allows an
interpretation of the regression coefficients $\beta$ as log-hazard ratios;
all of this is explained and discussed in Section~4.2.}

\end{enumerate}

\subsubsection*{Referee 2}

This paper develops a new theoretical and computational likelihood based
framework, which embeds a large class of transformation models (parametric,
semi-parametric, and non-parametric) commonly estimated by other procedures. 
The approach also allows extensions to novel transformation models.  Besides
the theoretical foundations (complemented by an extensive and exhaustive
bibliography), the paper provides several persuasive applications based on
real data.  It is well written and deserves publication.  No revision is
necessary.

\subsection*{Review History: Version 2 by Journal 2 (January--April 2016)}

\noindent Review for a condenced variant of version 2 (v2, \url{http://arxiv.org/abs/1508.06749v2}).
Comments by the authors are printed in \textit{italics}.

\subsubsection*{Associate Editor}

This paper aims to develop a unified transformation based framework for
modeling, estimation and inference with a wide range of data types.
While the presented formulation of the transformation model seems
interesting,  as noted by all reviewers, it confines the proposed model to a
fully parametric setting through specifying the basis function and $F_Z$ (an
analog to "error distribution"). This poses a major limitation of
the proposed work. For example, in survival applications, the proposed model
may only correspond to a proportional hazards model with
parametric baseline hazard function, which is more restrictive than the
commonly adopted semi-parametric proportional hazards model. 

\textit{This statement is true from a theoretical point of view.
Practically, the regression coefficients in a Cox model obtained from the
partial likelihood and the full likelihood are numerically equivalent even
for rather low-dimensional approximations of $h_Y$. Empirical evidence was
presented in Section 5 of v2 (Figure 6) and we now report on more detailed
investigations of this issue in the package vignette for the R add-on package
\textbf{mlt.docreg} distributed at CRAN. Parameterising the transformation
function allows much more flexible models to be formulated and estimated
(for example distribution regression models as in the growth curve analysis
example) than anything we are aware of in the NPMLE world. As a
side-comment, the dominant semiparametric view on survival analysis made 
estimation of the Cox
model for interval censored data rather troublesome for over 40 years (software became available
only very recently). As a consequence, many analyses of, most prominently,
disease-free survival times treating events observed at follow-up
examinations as ``exact'' event times are probably flawed. Convenient estimation of the
Cox model under \textbf{all} forms of censoring is a by-product of our
approach (and is implemented in the \textbf{mlt} package).}

When positioning the proposed transformation model as a general class of
parametric models, there are some important issues left unaddressed:

\begin{enumerate}
\item How to decide the form of the basis function in real data analysis?

\textit{As explained in sections 4 and 5, there is not much to decide. For
        discrete observations, each element of the support (except one) gets
        one parameter assigned and for the Bernstein polynomials used in the
        empirical parts, $M$ is not a hyper parameter. New empirical
        evidence can be found in the above mentioned vignette.}

\item Similarly, how to choose the distribution function for $Z$?

\textit{Some conditional transformation models are practically invariant wrt $Z$ (in the
        Old Faithful and growth curve and partially the GBSG-2 examples), as
the fitted distribution functions are numerically the same for different
choices if $F_Z$. For
        others, $Z$ is part of the model formulation. For example, in the
        Cox model we need $Z$ to follow a minimum extreme value distribution
        if we want to interpret the regression coefficients $\beta$ as
        log-hazard ratios. As mentioned in the discussion, it is possible to
        estimate $F_Z$ and McLain and Ghosh (2013) introduce a suitable family
        of distributions (newly added to discussion) for this exercise.}

\item  What is the advantage of adopting the transformation model formulation
(presented in Definition 4) over direct parametric modeling of the
distribution function? 

\textit{Section~2 explains that we can generate \textbf{all} distributions
from a suitable $h$ and not just the onces we can find in textbooks. In this
sense, the framework is nonparametric at its heart.}

\end{enumerate}

In addition, I find the presentation of the conditional transformation model
quite vague. Is there an implicit assumption that $h(Y|X)$ has a
distribution independent of $X$?

\textit{$F_Z(h(Y | x))$ has a (discrete or continuous) uniform distribution
in $[0, 1]$ for every $x$ by the probability integral transform.}

\subsubsection*{Referee 1}

This paper provides a mathematical abstraction of the transformation model,
and considers inference through fitting the most likely transformation as a
maximum likelihood approach.
Regularity conditions were provided so that the usual maximum likelihood
mathematical results hold for this transformation model as stated in three
theorems. Then the authors discussed
connections to transformation models in literature, and demonstrated
applications on three example data sets. The transformation model has been
used extensively in literature, but the
mathematical abstraction to summarize the general approach does not seem to
appear before. However, I do have concerns about this manuscript.

\begin{enumerate}
\item The results in section 3 are rather restatements of standard results.
More importantly, they are only stated in the fully parametrized model,
which do not include the most
practically useful and academically interesting semi-parametric
(non-parametric) transformation model. For fully parametrized model, direct
likelihood approach is natural, and it is not
clear that this restatement in terms of transformation model brings any
extra insight. 

\textit{See the comment to overall statement by AE.}

\item  The authors claimed that this likelihood approach clearly illustrate
the connections the general transformation models, while the lack of general
understanding in previous
literature was due to presentation in the relatively narrow contexts of
censored or ordinal data (page 4). However, seems to me that a main reason
for the authors' simple approach is to
ignore the semi-parametric/non-parametric setting that are the main
mathematical challenges. 

\textit{The likelihood does not illustrate the connections between models,
the model formulation does.  We study and extend transformation models
common in the semi- and non-parametric world while making model estimation
much simpler (also numerically) by introducing appropriate
parameterisations.  See also comment to overall statement by AE.}

The authors deal with the
semi-parametric/nonparametric case by using Bernstein polynomials
of fixed degree to approximate the transformation function, then treat this
as a parametric fitting. However, for practical purpose, the choice of the
degree are critical for data
analysis.

\textit{New empirical evidence presented in the package
vignette of package \textbf{mlt.docreg} suggests that this is actually not
the case.}

And for theoretical correctness, the degree should increase with
sample size, not fixed.  

\textit{A new paragraph 3.2. was introduced making the link between the
estimation procedure applied here and the asymptotic theory for
unconditional and linear transformation models presented by McLain and Ghosh
(2013).}

The authors simply state that: cannot choose M ``too
large'' (page 18, line 6). This
does not really prove any practical guidance nor provide theoretical
guarantee for applications.  The three examples in section 5, used degrees
8, 3, 5 respectively. There are no
discussion on why these degrees are appropriate, and how would the results
change with a different choice of these tuning parameters?

\textit{See above.}

\item Related to the above point, page 20 discussed the connection to the
survival time models. But those are rather misleading. For proportion
hazards/odds models commonly used in
survival analysis, the transformation $h_Y$ is generally assumed unknown
(rather than parameterized with Bernstein polynomial here). So this model
does not capture the most general PH/PO
model, where the nonparametric maximal likelihood estimator was proven to
work. 

\textit{See the comment to overall statement by AE.}

Also, for the AFT model, the current model just correspond to some
parametric submodels with both $(F,h)$
assumed known. Those are specific parametric models that do not need this
new transformation model abstraction. 

\textit{In our opinion it is very interesting, at least from a practical,
computational, and
educational point of view, to unify all models in this framework. 
Understanding the connections between models allows one to estimate
\textbf{all} models under \textbf{all} forms of random censoring and
truncation using just about 1000 ``smart'' lines of R code (now available in
package \textbf{mlt}).}

For the more general AFT model, the
$F$ should be assumed unknown, which the
transformation model approach in this paper does not cover how to do
statistical inference.

\textit{See AE point 3.}

\end{enumerate}

\subsubsection*{Referee 2}

In this paper, the authors show that one can always write a potentially complex distri-
bution function FY as the composition of a priori specified distribution function
$F_Z$ and a
strictly monotone transformation function $h$ (unknown). The task of estimating
$F_Y$ is then
reduced to obtaining an estimate of $h$, which is parameterised in terms of basis functions.
The likelihood function of the transformed variable can then be characterised by transforma-
tion function and the parameters of interest can be estimated and analysed in the maximum
likelihood framework. The authors establish the asymptotic normality of the proposed es-
timators. A reference software implementation of maximum likelihood-based estimation for
conditional transformation models was employed to illustrate the wide range of possible
applications.

\paragraph{Major:}

\begin{enumerate}
\item My major concern is the specification of $F_Z$, which is assumed to be known. Since
$Y$
can be subject to censoring/truncation, how do we choose appropriate $F_Z$ based on cen-
sored/truncated data? In particular, for interval-censored or truncated data, how do we
decide an appropriate $F_Z$ in the likelihood function ? The authors should address this issue.

\textit{$F_Z$ defines the transformation \textbf{model} and has nothing to do with
its \textbf{estimation} under censoring or
truncation. Only the likelihood function takes censoring (by integrating
over all possible realisations) and truncation (by conditioning) into
account. For example, the Cox \textbf{model} has nothing to do with censoring, only its
\textbf{likelihood} (full or partial) allows for this. In the same spirit,
we can fit a normal linear model under censoring and truncation without
changing anything in the model formulation.}

\end{enumerate}

\paragraph{The Other Comments:}

\begin{enumerate}
\item page 8, lines -1 through -5: The authors only demonstrate likelihood contribution of
censored data. How about truncated data ?

\textit{The likelihood for truncated observations is given on page 6, line
2, of v2 and page 9, line 12 in your version.}

\item page 9, lines -1 through -6: It is not clear to me that $Y \in C_i$, where
$C_i$ denotes an independent sample of possibly censored or truncated observations. 

\textit{$C_i$ are intervals, ie the observations, and $\rY$ is the rv.}

What is the definition of $H$ ? 

\textit{The space $H$ is defined on page 4, line -8 in v2 and on page 7, line 14 in your version.}

In practice, can we obtain $\hat{h}_N$ ? 

\textit{Via plug-in, as explained on page 7, line -4 in v2 and page 11, line -8 in
your version.}

When truncation is present (such as left truncation
or double truncation), does the concavity of the log-likelihood hold ? How does this impact
the uniqueness of $\hat{h}_N$ ?

\textit{The sum of two concave functions is again concave.}

\item page 10, line 20: What is the advantage of parameterising the transformation function
$h(y)$ in terms of basis functions. Is there any other alternative (e.g. nonparametric estimator)? 
Please comment on this.

\textit{Parameterising $h$ leads to models with well-defined parameters and 
allows efficient evaluation of the likelihood function.}

\item page 11, line 6: When truncation is present, is $\hat{\vartheta}_N$ unique ?

\textit{See your point 2.}

\item page 13: In Theorems 1 and 2, the authors do not mention any assumption regarding
the left and right endpoints of the distribution functions of lifetime, censoring time and
truncation time. Are they related to the asymptotic properties of
$\hat{\vartheta}_N$ ?

\textit{Sure, but as stated on page 9, line 3 in v2 and page 12, line -3 in
your version, the theorems describe the asymptotic distribution for the
``exact'' case.}

\item page 18: The authors should point out how the conditional transformation model relate
to semiparametric transformation models (Cheng et al. 1995). How does the proposed
method differ from that of Zeng and Lin (2006,2007).

\textit{Zeng and Lin (2007) study linear transformation models
without explicit parameterisation of the baseline transformation function
$\lambda$ enriched with random effects. The model (without random effects) is a
simple linear transformation model known for decades and their estimation
procedure is completely different from the one discussed here.}

\item page 24: In empirical evaluation, the data illustration can be improved. Please compare
the proposed method with the other approaches.

\textit{We removed the simulation study presented in v2 from the condenced
version, please see Section~5 in all versions of the arXiv report.}

\end{enumerate}

\subsection*{Review History: Version 3 by Scandinavian Journal of Statistics (November 2016--March 2017)}

\noindent Review for a condenced variant of version 3 (v3, \url{http://arxiv.org/abs/1508.06749v3}).
Comments by the authors are printed in \textit{italics}.

\subsubsection*{Associate Editor}

\textit{
All reviewers think that the paper is interesting and that the unified
framework includes a variety of data types and models.  The comments are
mostly minor.  However, all reviewers and myself think that the paper
presentation should be much improved in next version.  Please follow the
reviewer's specific comments to make revision.
}

We revised the manuscript taking all suggestions made by the three referees
into account. The main changes relate to the following issues
\begin{itemize}
\item Connections between Models. Two referees suggested to improve the
  presentation of connections between models. We added a new Table 1
  containing a systematic overview on linear and response-varying
  transformation models. We also stress that models, parameterisation
  and estimation are not a unity but should be understood as separate
  issues.
\item Parameter Interpretability. In some models, transformation functions
   or shift parameters $\beta$ or response-varying parameter functions 
   $\beta(y)$ have a clear meaning. We mention interpretability wherever
   possible and also included this information in our new Table~1.
\item Constraints. We explicitly present linear constraints for all models
  discussed in this manuscript.
\item Choice and Meaning of $F_Z$. Following a suggestion by referee 3, we
  added the mean one exponential distribution and a presentation of
  corresponding additive hazard models. Distributions $F_Z$ with free
  parameters are discussed in a dedicated paragraph in the last section now.
\end{itemize}

\subsubsection*{Referee 1}

\textit{
This study develops a unified framework for studying transformation models
and the corresponding (downstream) estimates. What is really nice about
this study is that multiple data types, distributions, and models are
comprehensively investigated under one framework. Some numerical studies
are conducted. I have some minor comments.
}

\begin{enumerate}

\item 
\textit{The introduction is ``too comprehensive'' in that a lot of existing studies
are mentioned. With so much information, it gets confusing. Please consider
better organising discussions on the existing studies. Especially please
make it clear which existing studies can be considered under the proposed
framework, and which are beyond.}

The common denominator of all studies cited are transformation models.  The
differences lie in (1) different estimation techniques, such as
non-parametric maximum likelihood estimation, estimation equations or
optimisation of scoring rules and (2) different parameterisations (or lack
thereof).  We agree that it is hard to see and understand the connections
(we explicitly comment on this problem in the 3rd paragraph) because models,
parameterisation and estimation are often treated as a unity. We addressed
this issue in two ways. First, the introduction makes it clear that the
transformation \emph{models} cited in the introduction can be understood as
members of the family of transformation models presented in this manuscript.
With the specific parameterisations proposed here, one can also
\emph{estimate} this models by maximum likelihood. Second, we added a
systematic overview on transformation models and their connections to other
models following a suggestion by referee 3 (new Table 1). We hope that
this overview will make it easier to understand the connections between
different models.

\item \textit{
For the density of $f_Z$, the authors mentioned normal, logistic, and minimum
extreme value distributions. And there should be quite a few other choices.
Theoretically, computationally, and numerically, does this choice make a
difference?
}

The main difference is with respect to the interpretation of regression
parameters $\beta$ in linear transformation models. Following a suggestion
by Referee 3, we now present a systematic overview on different flavours of
transformation models and corresponding parameter interpretations (Table 1).
We also added the mean one exponential distribution $F_Z$ because it allows
fitting additive hazard models. Some authors studied distribution functions
$F_Z$ with free parameters, these approaches are now discussed in a new
paragraph in the last section.

\item \textit{
The nonparametric inference in Section 3.2 is discussed relatively briefly.
For smooth nonparametric estimation, there are multiple candidate
approaches. Does that choice make a difference?}

Only from a computational point of view. It is very easy to impose
monotonicity on a Bernstein polynomial (make sure the parameters are
monotone increasing). This is much harder to achieve for a B-spline, for
example. These issues are discussed in Section~4.1.

\item \textit{
The illustrations in Section 5.1 do not provide much insight. They can be
moved after simulation.}

Feedback from readers showed that many people have difficulties following
our path through the transformation model jungle. Presentation of
applications usually helps them to get started much easier. We therefore
wanted to put strong emphasis on applications. Section 5 illustrates the
broadness of transformation models without the need to understand every
technical detail. We therefore prefer to keep the current order of
presentation.

\end{enumerate}

\subsubsection*{Referee 2}

\textit{Authors studied the properties of maximum likelihood estimators in the class of
conditional transformation models. The distribution of response variable was written
in the priori specified distribution, such as standard normal, standard log-logistic and
the minimum extreme value distribution after certain transformation $h(y)$. Then, the
same theoretical and computational framework can be applied simply by choosing an
appropriate transformation function and parameterizations thereof. For discrete and
continuous responses, they established the asymptotic normality of the proposed 
estimators. The ``mlt'' package in R can implement the proposed method and empirical
evaluation illustrated the wide range of possible applications.}

\begin{enumerate}

\item \textit{
The model was determined by the priori specified distribution and $h_Y(y) +
h_x(x)$ and $h_Y(y)$ can be specified or approximated by the spline. The authors have tested the
performance of data without censoring and right censoring. Theoretically, this MLE
approach can be extended to the interval censored data. However, usually, there are
issues in estimation for the interval censored data with spline from the direct MLE,
which may still exist in the proposed method. The authors may add discussions on this.
}

In fact, linear transformation models of the form $h_Y(y) + h_x(x)$ are the
simplest members of the much more general class of conditional
transformation models covered in our manuscript. For linear and more complex
transformation models, we did not encounter theoretical or computational
difficulties when estimating most likely transformations. In her master
thesis, Mariia Dobrynina systematically investigated finite sample
properties of most likely transformations under interval censoring. She
found the R add-on package "mlt" to be correct also for interval
censored data. The thesis can be downloaded from \\
\texttt{http://user.math.uzh.ch/hothorn/docs/MasterThesisSfS.pdf}.

\item \textit{
There is no definition of $\tilde{h}$ in definition 1.
}

$\tilde{h}$ is an element of $\mathcal{H}$ (underneath argmax).

\end{enumerate}

\subsubsection*{Referee 3}

\textit{
In the paper ``Most Likely Transformations'' the authors propose the use of a
flexible class of parametric sieve-type models for conditional and 
unconditional density estimation.  The theory of the models is presented, along
with the examples of how the approach would be applied.  I found this to be
a well written paper on an interesting topic.  Please find my main comments
on the manuscript below.}

\begin{enumerate}

\item \textit{
I think it's important to give some differences between the proposed
approach and the well-known transformation models from Zeng and
Lin. It appears that this is all in the introduction, but a sentence
focusing and clarifying this aspect would be helpful.}

Zeng and Lin study \emph{nonparametric} maximum likelihood estimator without
explicit parameterisations of baseline transformation functions $h_Y$. We
highlighted this fact. The models discussed by Zeng and Lin, however, can be
understood as conditional transformation models and thus our
likelihood-based estimation procedure can be used for parameterisation and 
estimation. This is now better explained in the introduction.

\item \textit{
Your model also has a clear connection with some grouped survival
methods. For example
\begin{itemize}
\item Aranda-Ordaz FJ. An extension of the proportional- hazards model for
grouped data. Biometrics 1983; 39: 109--117 and
\item Tibshirani RJ and Ciampi A. A family of proportional- and additive-
hazards models for survival data. Biometrics 1983; 39: 141--147.
\end{itemize}
The connection to these models with some discussion would be a nice
addition.
}

Thank you very much for this suggestion. We added the corresponding mean one
exponential distribution $F_Z$ and discuss the resulting
additive hazard models in Section 4 and in Table 1 now. We also added a
paragraph discussing the possibility to have free parameters in $F_Z$ in the
last section. The piecewise constant parameterisation employed in these two
papers is, however, only motivated by the limited computing resources
available in the 1980ies. To quote Aranda-Ordaz: "It is often reasonable to
consider survival time as essentially a continuous random variable." Thus,
with todays computing power, we think it is much more adequate to 
estimate smooth transformation functions, for example using the method
introduced in our paper.

\item \textit{
I think you could add a table summarizing the correspondence of the
proposed approach to the various popular modeling procedures by
$(F_Z , a, \vartheta)$. This is in the paper, but more scattered. A difficult 
issue with a paper that is as broad as this one, is the presentation of the
various of models and methods in a systematic fashion. Section 4 does
an honorable job of this, but making the connections is still difficult. A
focused table would be a nice start to clarifying the methods, but I 
recommend that authors look at this issue more generally (see comment
6 as well).
}

We thank the referee for this suggestion. We added a table covering 
linear transformation models as well as response-varying transformation
models for binary, polytomous, count and continuous regression, the latter
including survival analysis, to Section 4. Indeed, this more systematic overview
helps to identify white spots in the model landscape much easier.

\item \textit{
The first equation in ``Count Regression Without Tears'' the first
equation has an extra ')'.
}

Thank you, fixed.

\item \textit{
Please explain the issue of the linear constraints of the model in a little
more detail. They are mentioned in passing in Section 4, then they
appear as a much bigger issue in Section 5. A more focused discussion
of the roles that linear constraints can play would be helpful.
}

Linear constraints on the model parameters are necessary to ensure
monotonicity of the transformation function. For linear transformation
models, monotonicity of the ``baseline'' transformation $h_Y$ is explained
in Section 4 and we now added a more detailed discussion to section 4.3 and
4.5 (and added a reference to a technical report explaining the linear
constraints for different models). The specific constraints for the
empirical studies in Section 5 are now made explicit.

\item \textit{
Is there any form that can be given to help the interpretation (even
graphically) of covariate parameters in some tangible way? This is
attempted in Figure 3, but having ``Transformation deviation'' as the
y-axis makes the plot uninterpretable. Is there an equation that will
give some tangible meaning to the coefficients for all possible models (or
maybe just all continuous models)? Further, is the sign interpretable.
Maybe there is a broad subset of models where a coefficient with a
particular sign has a monotonic influence on the outcome (e.g., negative
coefficient always extends survival)? Please include some discussion on
the interpretation. It does not have to hit these specific points, but a
discussion of the specific and general (i.e., sign) interpretations would
be beneficial.
}

The new table presenting an overview on models also contains interpretations
for the baseline transformation function $h_Y$ and the regression
coefficients $\beta$ in linear transformation models of the form $h(y \mid x)
 = h_Y(y) - x^\top \beta$. We parameterised a negative shift, because of
the interpretation $E(h(Y \mid x)) = x^\top \beta + c$, ie larger values of
the linear predictor correspond to larger conditional expectations (but
smaller risks in survival analysis). This relationship is discussed in
Section 4.3.

\end{enumerate}

\end{appendix}

\end{document}